\newtheorem{proposition}{Proposition}[section]
\newtheorem{theo}{Theorem}[section]
\theoremstyle{definition}
\newtheorem{rem}[theo]{Remark}
\newtheorem{defi}[theo]{Definition}
 \def\reel{ {I\!\!  R} }
\def\relatif{ {Z\hskip -5,1 pt Z\hskip 0,2 pt} }
\def\O{\Omega_q}
\def\d{\delta}
\def\D{\Delta}
\def\Dc{\Delta(c_{\Lambda,h}(T_c(\lambda),\lambda))}
\def\C{c_{\Lambda,h}(T,\lambda)}
\def\2{I$\!$I} \def\L{\Lambda}
\def\l{\langle}
\def\r{\rangle}
\def\H{H_\Lambda (c_{\Lambda,h}(T,\lambda))}
\def\I{ {1\hskip -4,5 pt 1\hskip 0,2 pt} }
\def\s{\sigma}
\def\g{\gamma}
\def\h{$\cal A$}
\def\a{\alpha}
\begin{document}

%\title{Spontaneous symmetry breaking, algebra of fluctuations and the Bogoliubov quasi-averages}
\title{ Bogoliubov quasi-averages: spontaneous symmetry breaking and algebra of fluctuations}

\author{Walter F. Wreszinski\\
        Instituto de Fisica USP\\
        Rua do Mat\~{a}o, s.n., Travessa R 187\\
        05508-090 S\~{a}o Paulo, Brazil\\
        \texttt{wreszins@gmail.com}\\
        and\\
        Valentin A. Zagrebnov\\
        %Aix-Marseille Universit\'{e}, CNRS, Centrale Marseille, I2M \\
        %Institut de Math\'{e}matiques de Marseille - UMR 7373\\
        %CMI - Technop\^{o}le Ch\^{a}teau-Gombert\\
        %13453 Marseille, France \\
        %%%%%%%%%%%%%%%%%%%%%%%%%%%%%%%%%%%%%%%%%%%%%%%%%%%%%%%%%%%%%%%%
        D\'{e}partement de Math\'{e}matiques - AMU\\
        Institut de Math\'{e}matiques de Marseille (UMR 7373)\\
        CMI - Technop\^{o}le Ch\^{a}teau-Gombert\\
        39, Rue F. Joliot-Curie,
        13453 Marseille Cedex 13, France\\
        \texttt{valentin.zagrebnov@univ-amu.fr}\\
        }

\maketitle

\textit{In memory of Dmitry Nikolaevich Zubarev on the 100$^{\rm th}$ anniversary of his birthday}

\begin{abstract}
\noindent
The paper advocates the Bogoliubov method of quasi-averages for quantum systems.
First, we elucidate its applications to study the phase transitions with Spontaneous
Symmetry Breaking (SSB). To this aim we consider example of Bose-Einstein condensation (BEC) in continuous
systems. Our analysis of different type of generalised condensations demonstrates that the only
physically reliable quantities are those that defined by Bogoliubov quasi-averages.
%\begin{itemize}
%\color{red}
%\item
In this connection we also give a solution of the problem posed by Lieb, Seiringer and Yngvason in
\cite{LSYng}.
%\end{itemize}
Second, using the \textit{scaled} Bogoliubov method of quasi-averages and taking the structural quantum phase
transition as a basic example, we scrutinise a relation between SSB and the critical quantum fluctuations.
Our analysis shows that again the quasi-averages give an adequate tool for description of the algebra of
critical quantum fluctuation operators in the both commutative and noncommutative cases.
%%%%%%%%%%%%%%%%%%%%%%%%%%%%%%%%%%%%%%%%%%%%%%%%%%%%%%%%%%%%%%%%%%%%%%%%%%%%%%%%%%%%%%%%%%%%%%%%%%%%%%%%%%%%%%
%{It is shown that Bogoliubov quasi-averages select the pure or ergodic states in the ergodic decomposition of
%the thermal (Gibbs) state. Our examples include quantum spin systems and many-body boson systems. As a consequence,
%we elucidate the problem of equivalence between Bose-Einstein condensation and the quasi-average spontaneous
%symmetry breaking (SSB) discussed in \cite{LSYng}, \cite{LSYng1} for continuous boson systems. The multi-mode
%extended van den Berg-Lewis-Pul\'{e} condensation of type III \cite{vdBLP}, \cite{BZ} demonstrates that the only
%physically reliable quantities are those that defined by Bogoliubov quasi-averages.}
\end{abstract}

\newpage
%%%%%%%%%%%%%%%%%%%%%%%%%%%%%%%%%%%%%%%%%%%%%%%%%%%%%%%%%%%%%%%%%%%%%%%%%%%%%%%%%%%%%%%%%%%%%%%%%%%%%%%%%%%%%%
\tableofcontents
%%%%%%%%%%%%%%%%%%%%%%%%%%%%%%%%%%%%%%%%%%%%%%%%  Section %%%%%%%%%%%%%%%%%%%%%%%%%%%%%%%%%%%%%%%%%%%%%%%%%%%%
\section{Introduction and summary}\label{sec:Intr-Summ}
%%%%%%%%%%%%%%%%%%%%%%%%%%%%%%%%%%%%%%%%%%%%%%%%%%%%%%%%%%%%%%%%%%%%%%%%%%%%%%%%%%%%%%%%%%%%%%%%%%%%%%%%%%%%%%
The concept of Spontaneous Symmetry Breaking (SSB) is a central one in quantum physics, both in statistical
mechanics  and quantum field theory and particle physics.

The definition of SSB is well-known since the middle sixties
%and is well expounded in Ruelle's book
\cite{Ru}, Ch.6.5.2., as well as \cite{BR87}, Ch.4.3.4.
%and, from the point of view of local quantum theory in \cite{Haag}, Ch.III.3.2.
Recall that one starts from a state (ground or thermal), assumed to be invariant under a symmetry group
$G$, but which has a nontrivial decomposition into extremal states, which may be physically interpreted as
pure thermodynamic phases (states). The latter, however, do not exhibit invariance under $G$, but only under
a proper subgroup $H$ of $G$.

There are basically two ways of constructing extremal states: \\
(1) by a choice of boundary conditions (b.c.) for  Hamiltonians $H_{\Lambda}$ in finite regions
$\Lambda \subset \mathbb{R}^d$ and then take the thermodynamic limit ($\Lambda \uparrow  \mathbb{Z}^{d}$
or $\Lambda \uparrow  \mathbb{R}^{d}$) of expectations over the corresponding local states; \\
(2) by replacing: $H_{\Lambda} \rightarrow H_{\Lambda} + h \, B_{\Lambda}$, where $B_{\Lambda}$ is a
suitable extensive operator and $h$ a real parameter, then by taking first $\Lambda \uparrow \mathbb{Z}^{d}$
or $\Lambda \uparrow \mathbb{R}^{d}$, and the second, $h \to +0$ (or $h \to -0$). Here one assumes that the
states considered are locally normal or locally finite, see e.g. \cite{Sewell1} and references there.
The method (2) is known as Bogoliubov's \textit{quasi-averages} (q-a) method \cite{Bog07}-\cite{Bog70}.

We comment that although quite transparent, e.g. for classical lattice systems, the method of boundary
conditions is unsatisfactory for continuous and even worse for the quantum systems.
In this paper we advocate the Bogoliubov method of quasi-averages for quantum systems.

First, we elucidate its applications to study the phase transitions with SSB, see Section \ref{sec:Bosons}.
To this aim we consider first the quantum phase transition, which is the conventional one-mode
Bose-Einstein condensation (BEC) of the \textit{perfect} Bose-gas. In this simplest case the condensation occurs
in the \textit{single} zero mode, which implies a spontaneous breaking of $G$: the gauge group of
transformations (GSB). After that, we consider the case when the condensation is dispersed over
\textit{infinitely} many modes. Our analysis of different \textit{types} of
this \textit{generalised} condensation (gBEC) demonstrates that the only physically \textit{reliable} quantities,
are those that defined by the Bogoliubov method of q-a, see Remark \ref{rem:3.3} and Theorem \ref{theo:3.4}.

We extend this analysis to \textit{imperfect} Bose-gas. As a consequence of our results, a general question
posed by Lieb, Seiringer and Yngvason \cite{LSYng} concerning the equivalence between Bose-Einstein condensation
$\rm{(BEC)}_{q-a}$ and  Gauge Symmetry Breaking $\rm{(GSB)}_{q-a}$, both defined via the one-mode Bogoliubov
quasi-average, is elucidated for any {type} of {generalised} BEC (gBEC) \textit{\`{a} la} van
den Berg-Lewis-Pul\`{e} \cite{vdBLP} and \cite{BZ}, see Remark \ref{rem:3.5},
%\begin{itemize}
%\color{red}
%\item
where it is also pointed out that the fact that quasi-averages lead to ergodic states clarifies an
important conceptual aspect of the quasi-average \textit{trick}.
%\end{itemize}
Second, using the Bogoluibov method of q-a and taking the \textit{structural} quantum phase transition
as a basic example, we scrutinise a relation between SSB and the critical \textit{quantum fluctuations}, see
Section \ref{BQ-A-QFl}. Our analysis in Section \ref{Q-A-Cr-Q-Fl} shows that again the Bogoliubov quasi-averages
give an adequate tool for description of the algebra of \textit{fluctuation operators} on the critical line of
transitions. There we study the both commutative and noncommutative cases of this algebra, see
Theorem \ref{thm:5.1} and Theorem \ref{thm:5.2}.

We note here that it was Dmitry Nikolaevich Zubarev \cite{Zu70}, who for the first time indicated a
relevance of the Bogoliubov quasi-averages in the theory of non-equilibrium processes. In this case
the infinitesimal external sources serve to break the time-invariance of the Liouville equation for the
statistical operator. Although well-known in the mathematical physics as the limit-absorption principle this
approach was developed in \cite{Zu70} to many-body problems. This elegant extension is now called the
Zubarev method of a Non-equilibrium Statistical Operator \cite{Zu71}, \cite{ZMP}.

This interesting aspect of the Bogoliubov quasi-average method is out of the scope of the present paper.

%%%%%%%%%%%%%%%%%%%%%%%%%%%%%%%%%%%%%%%%%%%%%%%%%%%%%%%%%%%%%%%%%%%%%%%%%%%%%%%%%%%%%%%%%%%%%%%%%%%%%%%%%%%%%
%%%%%%%%%%%%%%%%%%%%%%%%%%%%%%%%%%%%%%%%%%% Section gBEC %%%%%%%%%%%%%%%%%%%%%%%%%%%%%%%%%%%%%%%%%%%%%%%%%%%%
\section{{Continuous boson systems}}\label{sec:Bosons}
%%%%%%%%%%%%%%%%%%%%%%%%%%%%%%%%%%%%%%%%%%%%%% subsection %%%%%%%%%%%%%%%%%%%%%%%%%%%%%%%%%%%%%%%%%%%%%%%%%%%
\subsection{Conventional or generalised condensations and ODLRO} \label{sec:gBEC}
%%%%%%%%%%%%%%%%%%%%%%%%%%%%%%%%%%%%%%%%%%%%%%%%%%%%%%%%%%%%%%%%%%%%%%%%%%%%%%%%%%%%%%%%%%%%%%%%%%%%%%%%%%%%%
We note that existence of \textit{generalised} Bose-condensations (gBEC) makes the boson
systems more relevant for demonstration of efficiency of the Bogoliubov quasi-averages than, e.g., spin lattice
systems. This becomes clear even on the level of the Perfect Bose-gas (PBG).

To this aim we consider first the Bose-condensation of PBG in a three-dimensional anisotropic parallelepiped
$\Lambda:= V^{\alpha_1}\times V^{\alpha_2}\times V^{\alpha_3}$, with \textit{periodic} boundary
condition (p.b.c.) and $\alpha_1 \geq \alpha_2 \geq \alpha_3$, $\alpha_1 + \alpha_2 + \alpha_3 = 1$, i.e.
the volume $|\Lambda| = V$.  In the boson Fock space $\mathcal{F}_{\Lambda}:=
\mathcal{F}_{boson}(\mathcal{L}^2 (\Lambda))$ the Hamiltonian of this system for the grand-canonical ensemble
with chemical potential $\mu < 0$ is defined by :
\begin{eqnarray}\label{G-C-PBG}
H_{0,\Lambda,\mu} = T_{\Lambda} - \mu \, N_{\Lambda} =
\sum_{k \in \Lambda^{*}} (\varepsilon_{k} - \mu)\, b^{*}_{k} b_{k} \ , \ \ \ \
 {\rm{dom}}(H_{0,\Lambda,\mu})= {\rm{dom}}(T_{\Lambda}) \ .
\end{eqnarray}
Here one-particle kinetic-energy operator spectrum $\{\varepsilon_{k} = k^2\}_{k \in \Lambda^{*}}$, where
the set $\Lambda^{*}$ is dual to $\Lambda$:
\begin{equation}\label{dual-Lambda}
\Lambda ^{\ast }:= \{k_{j}= \frac{2\pi }{V^{{\alpha_{j}}}}n_{j} : n_{j}
\in \mathbb{Z} \}_{j=1}^{d=3}
\ \ \ {\rm{then}} \ \ \ \varepsilon _{k}= \sum_{j=1}^{d} {k_{j}^2} \ .
\end{equation}
We denote by $b_{k}:=b(\varphi_{k}^{\Lambda})$ and $b^{*}_{k}= (b(\varphi_{k}^{\Lambda}))^{*}$ the $k$-mode
boson annihilation and creation operators in the Fock space $\mathcal{F}_{\Lambda}$. They are indexed by
the ortho-normal basis $\{\varphi_{k}^{\Lambda}(x) = e^{i k x}/\sqrt{V}\}_{k \in \Lambda^{*}}$ in
$\mathcal{L}^2 (\Lambda)$ generated by
the eigenfunctions of the self-adjoint one-particle kinetic-energy operator $(- \Delta)_{p.b.c.}$ in
$\mathcal{L}^2 (\Lambda)$. Formally these operators satisfy the Canonical Commutation Relations (CCR):
$[b_{k},b^{*}_{k'}]=\delta_{k,k'}$, for $k, k' \in \Lambda^{*}$. Then $N_k =  b^{*}_{k} b_{k}$ is
occupation-number operator of the one-particle state $\varphi_{k}^{\Lambda}$ and $N_{\Lambda} =
\sum_{k \in \Lambda^{*}} N_k$ denotes the total-number operator in $\Lambda$.

For temperature $\beta^{-1} := k_B \, T$ and chemical potential $\mu$  we denote by
$\omega_{\beta,\mu,\Lambda}^{0}(\cdot)$ the grand-canonical Gibbs state of the PBG generated by (\ref{G-C-PBG}):
\begin{equation}\label{0-state}
\omega_{\beta,\mu,\Lambda}^{0}(\cdot) = \frac{{\rm{Tr}}_{{\mathcal{F}}_{\Lambda}}
(\exp(-\beta H_{0,\Lambda,\mu})\ \cdot \ )}
{{\rm{Tr}}_{{\mathcal{F}}_{\Lambda}} \exp(-\beta H_{0,\Lambda,\mu})} \ \ .
\end{equation}
Then the problem of existence of a Bose-condensation is related to solution of the equation
\begin{equation}\label{BEC-eq}
\rho = \frac{1}{V} \sum_{k \in \Lambda^{*}} \omega_{\beta,\mu,\Lambda}^{0}(N_k) =
\frac{1}{V} \sum_{k\in \Lambda ^{\ast}}\frac{1}{e^{\beta \left(\varepsilon_{k}-\mu \right)}-1} \ ,
\end{equation}
for a given total particle density  $\rho$ in $\Lambda$. Note that by (\ref{dual-Lambda}) the thermodynamic limit
$\Lambda \uparrow \mathbb{R}^3$ in the right-hand side of (\ref{BEC-eq})
\begin{equation}\label{I}
\mathcal{I}(\beta,\mu) = \lim_{\Lambda} \frac{1}{V} \sum_{k \in \Lambda^{*}} \omega_{\beta,\mu,\Lambda}^{0}(N_k)
= \frac{1}{(2\pi)^3}\int_{\mathbb{R}^3} d^3 k \ \frac{1}{e^{\beta \left(\varepsilon_{k}-\mu \right)}-1} \ ,
\end{equation}
exists for any $\mu <0$. It reaches its (finite) maximal value $\mathcal{I}(\beta,\mu =0) = \rho_c(\beta)$,
which is called the \textit{critical} particle density for a given temperature.

Recall that existence of the finite critical density $\rho_c(\beta)$ triggers (via the \textit{saturation}
mechanism) a \textit{zero-mode} {Bose-Einstein condensation} (BEC): $\rho_0(\beta) := \rho - \rho_c(\beta)$,
when the total particle density $\rho > \rho_c(\beta)$.
We note that indeed for $\alpha_1 < 1/2$, the totality of condensate $\rho_0(\beta)$ is sitting in the
one-particle ground  state mode $k=0$:
\begin{eqnarray} \label{BEC}
&&{\rho_0} (\beta)= {\rho} - \rho _{c}(\beta) = \lim_{\Lambda} \frac{1}{V}
\omega_{\beta,\mu_{\Lambda}(\beta,\rho),\Lambda}^{0}(b^{*}_{0} b_{0})
= \lim_{\Lambda} \frac{1}{V} \frac{1}{e^{-\beta \, {\mu_{\Lambda}(\beta,\rho\geq
\rho _{c}(\beta))} }-1} \ , \\
&&{\mu_{\Lambda}(\beta,\rho\geq \rho _{c}(\beta))}=  {- \, \frac{1}{V}} \ \frac{1}
{\beta(\rho -\rho _{c}(\beta))} + {o}({1}/{V}) \ , \\
&& \lim_{\Lambda} \frac{1}{V} \omega_{\beta,\mu,\Lambda}^{0}(b^{*}_{k} b_{k}) = 0 \ , \ \
\end{eqnarray}
where $\mu_{\Lambda}(\beta,\rho)$ is a unique solution of equation (\ref{BEC-eq}).

Following van den Berg-Lewis-Pul\'{e} \cite{vdBLP}) we introduce \textit{generalised} BEC (gBEC):
%%%%%%%%%%%%%%%%%%%%%%%%%%%%%%%%%%%%%%%%%%%%%% Def %%%%%%%%%%%%%%%%%%%%%%%%%%%%%%%%%%%%%%%%%%%%%%%%%%%%%%%%%%%%
\begin{defi} \label{def:gBEC}
Total amount $\rho_{gBEC}(\beta, \mu)$ of the gBEC is defined by the
\textit{double-limit}:
\begin{equation} \label{gBEC}
\rho_{gBEC}(\beta, \mu) := \lim_{\delta \rightarrow +0}\lim_{\Lambda }
\frac{1}{V}\sum_{\left\{ k\in \Lambda^{\ast }, \,  \left\| k\right\|\leq \delta \right\}} \,
\omega_{\beta,\mu,\Lambda}(b^{*}_{k} b_{k}) \ .
\end{equation}
Here $\omega_{\beta,\mu,\Lambda}(\cdot)$ denotes the corresponding finite-volume grand-canonical Gibbs state.
\end{defi}
%%%%%%%%%%%%%%%%%%%%%%%%%%%%%%%%%%%%%%%%%%%%%%%%%%%%%%%%%%%%%%%%%%%%%%%%%%%%%%%%%%%%%%%%%%%%%%%%%%%%%%%%%%%%%%%

Then according to nomenclature proposed in \cite{vdBLP}) the {zero-mode} BEC in PBG is nothing but
the generalised Bose-Einstein condensation of the \textit{type} I. Indeed, by (\ref{BEC}) and (\ref{gBEC})
a non-vanishing BEC  \textit{implies} a nontrivial gBEC: $\rho_{0, gBEC}(\beta, \rho) > 0$. We denote this
{relation} as
\begin{equation}\label{BEC-gBEC}
{\rm{BEC}} \ \Rightarrow \ {\rm{gBEC}} \ .
\end{equation}
Moreover, (\ref{BEC}) and (\ref{gBEC}) yield: $\rho_{0}(\beta, \rho) = \rho_{0, gBEC}(\beta, \rho)$.

Recall that one also has BEC $\Rightarrow$ ODLRO, which is the \textit{Off-Diagonal-Long-Range-Order} for the
boson field
\begin{equation}\label{b-field}
b(x) = \sum_{k \in \Lambda^{*}} b_{k} {\varphi_{k}^{\Lambda}}(x) \ .
\end{equation}
Indeed, by definition of ODLRO \cite{Ver} the value of the off-diagonal \textit{spacial} correlation
$LRO(\beta,\rho)$ of the Bose-field is:
\begin{equation}\label{PBG-ODLRO}
LRO(\beta,\rho) = \lim_{\|x-y\|\rightarrow\infty}\lim_{\Lambda} \omega_{\beta,\mu_{\Lambda},\Lambda}^{0}
(b^*(x) \ b(y))=
\lim_{\Lambda} \omega_{\beta,\mu_{\Lambda},\Lambda}^{0}(\frac{b^{*}_{0}}{\sqrt{V}}\frac{b_{0}}{\sqrt{V}}) =
\rho_{0}(\beta,\rho) \ .
\end{equation}
Hence, (\ref{PBG-ODLRO}) coincides with the \textit{zero-mode} spacial \textit{averages} correlation of the
local observables (\ref{b-field}).

We recall that the $p$-mode \textit{spacial average} $\eta_{\Lambda,p}(b)$ of (\ref{b-field})
is equal to
\begin{equation}\label{b-p-mode-av}
\eta_{\Lambda,p}(b):= \frac{1}{V}\int_{\Lambda}dx \, b(x) \,  e^{- i \, p x} = \frac{b_p}{\sqrt{V}} \ ,
\ p \in \Lambda^{*} \ .
%{\rm{and}}  \ \eta_{\Lambda}(b) := b_0/\sqrt{V}\ .
\end{equation}
As it is known, for PBG the value $LRO(\beta,\rho)$ of ODLRO {coincides} with the BEC (i.e., also with the
\textit{type} I gBEC) condensate density $\rho_{0}(\beta,\rho)$ \cite{vdBLP}.

To appreciate the relevance of gBEC versus quasi-averages we study more anisotropic thermodynamic limit:
$\alpha_1 = 1/2$,  known as the \textit{Casimir} box. Then one observes the infinitely-many state macroscopic
occupation, which is known as the gBEC of \textit{type} II defined by (\ref{gBEC}).
%$\lim_{\delta \rightarrow +0}\lim_{\Lambda }$ (\ref{cond-II}), \cite{vdBLP}.
The total amount $\rho_{0}(\beta,\rho)$ of this condensate is asymptotically distributed between
infinitely-many low-energy microscopic states $\{\varphi_{k}^{\Lambda}\}_{k \in \Lambda^{*}}$ in such a
way that
\begin{eqnarray} \label{cond-II}
\rho_{0}(\beta,\rho)={\rho} - \rho _{c}(\beta) &=& \lim_{\delta \rightarrow +0}\lim_{\Lambda }
\frac{1}{V}\sum_{\left\{ k\in \Lambda^{\ast }, \,  \left\| k\right\|
\leq \delta \right\}}\left\{e^{\beta(\varepsilon_{k}- {\mu_{\Lambda}(\beta,\rho)})}- 1
\right\}^{-1} \\
&=& \sum_{n_1\in \mathbb{Z}}\frac{1}{(2\pi n_1)^2/2 + A} \ , \ \ {\rho} > \rho _{c}(\beta) \ . \nonumber
\end{eqnarray}
Here the parameter $A = A(\beta, \rho)\geq0$ is a {\textit{unique} root} of equation (\ref{cond-II}).
Then the amount of the zero-mode condensate BEC is:
\begin{eqnarray*}
\lim_{\Lambda} \frac{1}{V}\omega_{\beta,\mu,\Lambda}^{0}({b^{*}_{0} b_{0}}) = (A(\beta, \rho))^{-1}  \ .
\end{eqnarray*}
Note that in contrast to the case of \textit{type} I, the \textit{zero-mode} BEC $(A(\beta, \rho))^{-1}$ is
\textit{smaller} than gBEC of the \textit{type} II (\ref{cond-II}). Therefore, the relation between BEC and
gBEC is nontrivial.

To elucidate this point, we consider $\alpha_1 > 1/2$ (the \textit{van den Berg-Lewis-Pul\`{e}} box \cite{vdBLP}).
Then one obtains
\begin{equation}\label{BEC=0}
\lim_{\Lambda} \omega_{\beta,\mu,\Lambda}^{0}(\frac{b^{*}_{k} b_{k}}{V}) =
\lim_{\Lambda}\frac{1}{V}\left\{e^{\beta(\varepsilon_{k}-{\mu_{\Lambda}(\beta,\rho)})}-
1 \right\}^{-1} = 0  \ , \  \forall k \in \Lambda^{*} \ ,
\end{equation}
i.e., there is no macroscopic occupation of \textit{any} mode $k \in \Lambda^{*}$ for any value of particle
density $\rho$. So, density of the zero-mode BEC is zero, but the gBEC (called the \textit{type} III) does
exist in the same sense as it is defined by (\ref{gBEC}):
\begin{equation}\label{cond-III}
\rho -\rho_{c}(\beta)= \lim_{\delta \rightarrow +0}\lim_{\Lambda }
\frac{1}{V}\sum_{\left\{ k\in \Lambda^{\ast }, \left\| k\right\|
\leq \delta \right\}}\left\{e^{\beta(\varepsilon_{k}- {\mu_{\Lambda}(\beta,\rho)})}- 1
\right\}^{-1} > 0  , \ {\rm{for}} \ \ \rho > \rho_{c}(\beta) \ ,
\end{equation}
with the \textit{same} amount of the total density as that for types I and II.

We note that even for PBG the calculation of the ODLRO for the case of \textit{type} II and \textit{type} III
gBEC is a nontrivial problem.  This concerns, in particular, a regime when
there exists the \textit{second} critical density $\rho_{m}(\beta)>\rho_{c}(\beta)$ separating different
types of gBEC, see \cite{vdBLL} and \cite{BZ}. It is also clear that the zero-mode BEC is a more
\textit{restrictive} concept than the gBEC.

We comment that the fact that gBEC is different from BEC  is \textit{not} exclusively due to a special
anisotropy: $\alpha_1 > 1/2$,
or other geometries for the PBG, see \cite{BZ}. In fact the same phenomenon of the (\textit{type} III) gBEC
occurs due to repulsive \textit{interaction}. A simple example is the model with Hamiltonian \cite{ZBru}:
\begin{equation}\label{Int-TypeIII}
H_{\Lambda }= {\sum_{k\in \Lambda^{*}} }\varepsilon_{k}b_{k}^{*}b_{k}+
\frac{a}{2V}{\sum_{k\in\Lambda^{*}}} b_{k}^{*}b_{k}^{*}b_{k}b_{k}\ , \ \text{ } a>0 \ .
\end{equation}

Summarising we note that the concept of gBEC (\ref{gBEC}) covers the cases (e.g. (\ref{Int-TypeIII}))
when calculation of conventional BEC gives a trivial value: gBEC $\nRightarrow$ BEC, cf (\ref{BEC-gBEC}).
We also conclude that relations between BEC, gBEC and ODLRO are a subtle matter. This
motivates and bolsters a relevance of the Bogoliubov {quasi-average method} \cite{Bog07}-\cite{Bog70},
that we are going to consider also in connection with the Spontaneous Symmetry Breaking (SSB) of
gauge-invariance for the Gibbs states. We call the SSB of the gauge-invariance by the Gauge Symmetry Breaking
(GSB).

%%%%%%%%%%%%%%%%%%%%%%%%%%%%%%%%%%%%%%%%%%%%%%%%%%%%%%%%%%%%%%%%%%%%%%%%%%%%%%%%%%%%%%%%%%%%%%%%%%%%%%%%%%%%%
%%%%%%%%%%%%%%%%%%%%%%%%%%%%%%%%%%%%%%%%%%%%%% subsection %%%%%%%%%%%%%%%%%%%%%%%%%%%%%%%%%%%%%%%%%%%%%%%%%%%
\subsection{Condensates, Bogoliubov quasi-averages and pure states} \label{sec:gBEC-BQ-A}
%%%%%%%%%%%%%%%%%%%%%%%%%%%%%%%%%%%%%%%%%%%%%%%%%%%%%%%%%%%%%%%%%%%%%%%%%%%%%%%%%%%%%%%%%%%%%%%%%%%%%%%%%%%%%
We now study the states of Boson systems, and for that matter assume (see \cite{Ver}, Ch.4.3.2), that they
are analytic in the sense of \cite{BR97}, Ch.5.2.3.

%We remind this property in Appendix.}}
We start with the Hamiltonian for Bosons in a cubic box $\Lambda\subset \mathbb{R}^3$  of side $L$ with p.b.c.
and volume $V=L^{3}$:
\begin{equation}\label{4.1}
H_{\Lambda,\mu} = H_{0,\Lambda,\mu} + V_{\Lambda} \ ,
\end{equation}
where the interaction term has the form
\begin{equation}\label{4.2}
V_{\Lambda} = \frac{1}{2V} \sum_{{k},{p},{q}\in \Lambda^*}\,
\nu({p})b_{{k}+{p}}^{*}b_{{q}-{p}}^{*}b_{{q}} b_{{k}} \ ,
\end{equation}
%Here $\Lambda^*$ is dual (with respect to Fourier transformation) set corresponding to $\Lambda$.
Here $\nu$ is the Fourier transformation in $\mathbb{R}^3$ of the two-body potential
$v({x})$, with bound
\begin{equation}\label{4.3}
|\nu({k})| \le \nu({0}) < \infty \ .
\end{equation}

%However, the equilibrium states ωξ β are not invariant under γs; hence there is a family of equilibrium states ωξ
%β possessing “less symmetry than the dynamics”, a property of the system called spontaneous symmetry breaking.

We define the group $G$ of (global) \textit{gauge} transformations $\{\tau_{s}\}_{s \in [0,2\pi)}$
by the Bogoliubov canonical mappings of CCR:
\begin{eqnarray}\label{4.17}
&& \tau_{s}(b^{*}(f)) = b^{*}( \exp(i \, s) f) = \exp(i \, s)b^{*}(f) \ , \\
&& \tau_{s}(b(f)) = b( \exp(i \, s) f) = \exp(-i \, s)b(f) \ , \nonumber
\end{eqnarray}
where $b^{*}(f)$ and $b(f)$ are the creation and annihilation operators smeared over test-functions $f$
from the Schwartz space. Note that for $f=\varphi_{k}^{\Lambda}$ they coincide with $b^{*}_{k},  b_{k}$, cf
(\ref{b-field}), and $\tau_{s} (\cdot) = \exp(i \, s N_{\Lambda}) (\cdot) \exp(- i \, s N_{\Lambda})$,
see (\ref{G-C-PBG}). By definition (\ref{4.17}) and by virtue of (\ref{G-C-PBG}), (\ref{4.2})
the Hamiltonian (\ref{4.1}) is gauge-invariant:
\begin{equation}\label{4.31}
H_{\Lambda,\mu} = e^{i \, s N_{\Lambda}} H_{\Lambda,\mu} e^{- i \, s N_{\Lambda}} \ .
\end{equation}
Note that the property (\ref{4.31}) evidently implies the \textit{gauge-invariance} of the Gibbs state
(\ref{0-state}) as well as that for Hamiltonian (\ref{4.1}) of imperfect Bose-gas:
\begin{equation}\label{4.32}
\omega_{\beta,\mu,\Lambda}(\cdot) = \omega_{\beta,\mu,\Lambda}(\tau_{s}(\cdot)) =
\frac{{\rm{Tr}}_{{\mathcal{F}}_{\Lambda}}
(\exp(-\beta H_{\Lambda,\mu})\ \tau_{s}(\cdot) \ )}
{{\rm{Tr}}_{{\mathcal{F}}_{\Lambda}} \exp(-\beta H_{\Lambda,\mu})} \ \ .
\end{equation}
Symmetry (\ref{4.32}) is a source of \textit{selection} rules. For example:
\begin{equation}\label{4.33}
\omega_{\beta,\mu,\Lambda}(A_{n,m}) = 0 \ , \ {\rm{for}} \ \
A_{n,m}=\prod_{i=1,\, j=1}^{n,\, m} b^{*}_{k_i} b_{k_j} \ , \ \ {\rm{if}} \ \ n \neq m \ .
\end{equation}
%This group is isomorphic to the group $U(1)$.
%\begin{equation}\label{4.4}
%H_{0,\Lambda,\mu} = \sum_{{k}} {k}^{2} b_{{k}}^{*}b_{{k}} -\mu N_{\Lambda} \ ,
%\end{equation}
%\begin{equation}\label{4.5}
%N_{\Lambda} = \sum_{{k}} b_{{k}}^{*}b_{{k}} \ ,
%\end{equation}
%with $[b_{{k}},b_{{l}}^{*}]=\delta_{{k},{l}}$ the second quantized annihilation and creation operators.

The \textit{quasi}-Hamiltonian corresponding to (\ref{4.1}) with  \textit{gauge symmetry} breaking
sources is taken to be
\begin{equation}\label{4.6}
H_{\Lambda,\mu,\lambda_{\phi}} = H_{\Lambda,\mu} + H_{\Lambda}^{\lambda_{\phi}} \ .
\end{equation}
Here the sources are switched on only in zero mode ($k=0$):
\begin{equation}\label{4.7}
H_{\Lambda}^{\lambda_{\phi}} = \sqrt{V}(\bar{\lambda}_{\phi} b_{{0}}+\lambda_{\phi} b_{{0}}^{*}) \ ,
\end{equation}
for
\begin{equation}\label{4.8}
\lambda_{\phi} = \lambda \exp(i\phi) \ \mbox{ with } \lambda \geq 0 \, , \,
\mbox{ where } \, {\rm{arg}}(\lambda_{\phi}) = \phi \in [0,2\pi) \ .
\end{equation}
In this case the corresponding Gibbs state is \textit{not} gauge-invariant (\ref{4.33}) since, for example
\begin{equation}\label{4.34}
\omega_{\beta,\mu,\Lambda,\lambda_{\phi}}(b_{{k}}) =
\frac{{\rm{Tr}}_{{\mathcal{F}}_{\Lambda}}
(\exp(-\beta H_{\Lambda,\mu,\lambda_{\phi}})\ b_{{k}} \ )}
{{\rm{Tr}}_{{\mathcal{F}}_{\Lambda}} \exp(-\beta H_{\Lambda,\mu,\lambda_{\phi}})} \neq 0
\ \ {\rm{for}} \ \ k=0 \ .
\end{equation}
The GSB of the state (\ref{4.34}), which is induced by the sources in (\ref{4.6}), persists in the thermodynamic
limit for the state $\omega_{\beta,\mu,\lambda_{\phi}}(\cdot):=
\lim_{V \rightarrow \infty}\omega_{\beta,\mu,\Lambda,\lambda_{\phi}}(\cdot)$. But it may occur in this
limit spontaneously without external sources. Let us denote
\begin{equation}\label{4.431}
\omega_{\beta,\mu}(\cdot):=
\lim_{V \rightarrow \infty}\omega_{\beta,\mu,\Lambda,\lambda_{\phi}=0}(\cdot) \ .
\end{equation}
%%%%%%%%%%%%%%%%%%%%%%%%%%%%%%%%%%%%%%%%%%%%%% Def %%%%%%%%%%%%%%%%%%%%%%%%%%%%%%%%%%%%%%%%%%%%%%%%%%%%%%%%%%%%
\begin{defi} \label{SSB}
We say that the state $\omega_{\beta,\mu}$ undergoes a spontaneous breaking
of the $G$-invariance (\textit{spontaneous} Gauge Symmetry Breaking (GSB)), if:\\
(i) $\omega_{\beta,\mu}$ is $G$-invariant,\\
(ii) $\omega_{\beta,\mu}$ has a nontrivial decomposition into ergodic states $\omega_{\beta,\mu}^{'}$,
which means that at least two such distinct states occur in representation
\begin{equation*}
\omega_{\beta,\mu}(\cdot) = \int_{0}^{2\pi} d\nu(s) \ \omega_{\beta,\mu}^{'}(\tau_{s} \ \cdot)\ ,
\end{equation*}
and for some $s$
\begin{equation*}
\omega_{\beta,\mu}^{'}(\tau_{s}\cdot) \ne \omega_{\beta,\mu}^{'}(\cdot) \ .
\end{equation*}
%\begin{itemize}
%\color{red}
%\item
Note that ergodic states are characterized by the \textit{clustering} property,
%see e.g. Definition 2.4 in \cite{Ver},
which implies a decorrelation of the zero-mode spacial averages (\ref{b-p-mode-av}) for the PBG,
as well as in general for the imperfect Bose gas.
%\end{itemize}
\end{defi}
%%%%%%%%%%%%%%%%%%%%%%%%%%%%%%%%%%%%%%%%%%%%%%%%%%%%%%%%%%%%%%%%%%%%%%%%%%%%%%%%%%%%%%%%%%%%%%%%%%%%%%%%%%%%%%%%
We take initially $\lambda \geq 0$ and consider first the perfect Bose-gas (\ref{G-C-PBG}) to define
the Hamiltonian
\begin{equation}\label{4.9.1}
H_{0,\Lambda, \mu, \lambda_{\phi}} = H_{0,\Lambda,\mu} + H_{\Lambda}^{\lambda_{\phi}} \ ,
\end{equation}
which is \textit{not} globally gauge-invariant. To separate the symmetry-breaking term $H_{{0}}$ we rewrite
(\ref{4.9.1}) as
\begin{equation*}
H_{0,\Lambda,\mu,\lambda_{\phi}}= H_{{0}}+H_{{k}\ne{0}} \ ,
\end{equation*}
where $H_{{0}} = -\mu \ b_{{0}}^{*}b_{{0}}+\sqrt{V}(\bar{\lambda}_{\phi} b_{{0}}+
\lambda_{\phi} b_{{0}}^{*}) = -\mu (b_{{0}} - \sqrt{V}{\lambda}_{\phi}/\mu)^{*}
(b_{{0}} - \sqrt{V}{\lambda}_{\phi}/\mu) +  V |{\lambda}_{\phi}|^2/\mu $.

Recall that for the perfect Bose-gas the grand-canonical partition function $\Xi_{0,\Lambda}$ splits into
a product over the zero mode and the remaining modes. We introduce the canonical shift transformation
\begin{equation}\label{4.9.2}
\widehat{b}_{{0}} := b_{{0}} - \frac{\lambda_{\phi} \sqrt{V}}{\mu} \ ,
\end{equation}
without altering the nonzero modes. Since $\mu < 0$, we thus obtain for the grand-canonical partition
function $\Xi_{0,\Lambda}$,
\begin{equation}\label{4.9.3}
\Xi_{0,\Lambda}(\beta,\mu,\lambda_{\phi}) = (1-\exp(\beta \mu))^{-1}
\exp(-\frac{\beta |\lambda_{\phi}|^{2}}{\mu} V)\ \Xi^{\prime}_{0,\Lambda}(\beta,\mu) \ ,
\end{equation}
where
\begin{equation}\label{4.9.4}
\Xi^{\prime}_{0,\Lambda}(\beta,\mu) := \prod_{{k} \ne {0}}  (1-\exp(-\beta(\epsilon_{{k}}-\mu)))^{-1} \ ,
\end{equation}
with $\epsilon_{{k}}={k}^{2}$. Recall that the grand-canonical state for the perfect Bose-gas is
\begin{equation}\label{4.9.5}
\omega^{0}_{\beta,\mu,\Lambda,\lambda_{\phi}}(\cdot):= {\frac{1}{\Xi_{0,\Lambda}(\beta,\mu,\lambda_{\phi})}} \
{\rm{Tr}}_{{\mathcal{F}}_{\Lambda}}[e^{-\beta H_{0,\Lambda,\mu,\lambda_{\phi}}} \ (\cdot )]  \ .
\end{equation}
see Section \ref{sec:gBEC}. Then it follows from (\ref{4.9.3})-(\ref{4.9.5}) that the mean density ${\rho}$
equals to
\begin{equation}\label{4.9.6}
{\rho}=\omega^{0}_{\beta,\mu,\Lambda,\lambda_{\phi}}(\frac{N_{\Lambda}}{V})=
\frac{|\lambda_{\phi}|^{2}}{\mu^{2}} + \frac{1}{V}\frac{1}{\exp(-\beta \mu)-1} +
\frac{1}{V} \sum_{{k} \ne {0}} \frac{1}{\exp(\beta(\epsilon_{{k}}-\mu))-1} \ .
\end{equation}

Equation (\ref{4.9.6}) is the starting point of our analysis. Since the critical density
$\rho_{c}(\beta)= \mathcal{I}(\beta,\mu =0)$ is finite (\ref{I}), we have the following statement.
%\begin{equation}\label{4.10}
%{\rho_{c}}(\beta) \equiv \int \frac{d{k}}{2\pi^{3}}(\exp(\beta \epsilon_{{k}})-1)^{-1} \ .
%\end{equation}
%%%%%%%%%%%%%%%%%%%%%%%%%%%%%%%%%%%%%%%%%%%%%% Proposition %%%%%%%%%%%%%%%%%%%%%%%%%%%%%%%%%%%%%%%%%%%%%%%%%%%%
\begin{proposition}\label{prop:4.1}
Let $0 < \beta <\infty$ be fixed. Then, for each
\begin{equation}\label{4.11}
{\rho_{c}}(\beta) < {\rho} < \infty \ ,
\end{equation}
and for each $\lambda >0$, $V <\infty$, there exists a unique solution of (\ref{4.9.6}) of the form
\begin{equation}
\mu_{\Lambda}({\rho},|\lambda_{\phi}|) = -\frac{|\lambda_{\phi}|}{\sqrt{{\rho}-{\rho_{c}}(\beta)}}\\
 + \alpha(|\lambda_{\phi}|,V) \ ,
\label{4.12.1}
\end{equation}
with
\begin{equation}\label{4.12.2}
\alpha(|\lambda_{\phi}|,V) \ge 0 \ \ \forall \ |\lambda_{\phi}|, V \ ,
\end{equation}
and such that
\begin{equation}\label{4.13}
\lim_{|\lambda_{\phi}| \to 0} \lim_{V \to \infty} \frac{\alpha(|\lambda_{\phi}|,V)}{|\lambda_{\phi}|} = 0 \ .
\end{equation}
\end{proposition}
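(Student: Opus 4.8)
The plan is to read (\ref{4.9.6}) as the scalar equation $\rho = F_\Lambda(\mu)$, where $F_\Lambda(\mu)$ denotes the right-hand side viewed as a function of $\mu \in (-\infty,0)$, and to solve it by elementary monotonicity. First I would record that each of the three summands of $F_\Lambda$ is, for $\mu<0$, a smooth, strictly increasing, positive function of $\mu$: differentiating $|\lambda_\phi|^2/\mu^2$ gives $-2|\lambda_\phi|^2/\mu^3>0$, while every Bose factor $(e^{\beta(\varepsilon_k-\mu)}-1)^{-1}$ (including the $k=0$ term $(e^{-\beta\mu}-1)^{-1}$) has positive $\mu$-derivative. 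Since $F_\Lambda(\mu)\to 0$ as $\mu\to-\infty$ and $F_\Lambda(\mu)\to+\infty$ as $\mu\to 0^-$ (the first term alone diverges like $|\lambda_\phi|^2/\mu^2$ once $\lambda>0$), the intermediate value theorem together with strict monotonicity yields a \emph{unique} root $\mu_\Lambda<0$ for every $\rho>0$, a fortiori in the range (\ref{4.11}). This disposes of existence and uniqueness; the representation (\ref{4.12.1}) is then simply the definition $\alpha(|\lambda_\phi|,V):=\mu_\Lambda(\rho,|\lambda_\phi|)+|\lambda_\phi|/\sqrt{\rho-\rho_c(\beta)}$, and the entire content of the statement lies in the two remaining claims (\ref{4.12.2}) and (\ref{4.13}).

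For the sign (\ref{4.12.2}) I would exploit monotonicity once more: writing $\mu_* := -|\lambda_\phi|/\sqrt{\rho-\rho_c(\beta)}$, the inequality $\alpha\ge 0$ is equivalent to $\mu_\Lambda\ge\mu_*$, hence (as $F_\Lambda$ increases) to $F_\Lambda(\mu_*)\le\rho$. At $\mu=\mu_*$ the coherent term evaluates \emph{exactly} to $|\lambda_\phi|^2/\mu_*^2=\rho-\rho_c(\beta)$, so the claim reduces to showing that the remaining two terms — which together form the finite-volume perfect-gas density $\frac{1}{V}\sum_{k\in\Lambda^*}(e^{\beta(\varepsilon_k-\mu_*)}-1)^{-1}$ — do not exceed $\rho_c(\beta)$. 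In the thermodynamic limit this sum converges to the integral $\mathcal{I}(\beta,\mu_*)$ of (\ref{I}), and $\mathcal{I}(\beta,\mu_*)<\mathcal{I}(\beta,0)=\rho_c(\beta)$ because $\mathcal{I}(\beta,\cdot)$ is strictly increasing in $\mu$; this is exactly the margin that forces $\alpha\ge 0$. I expect this to be the main obstacle: the comparison is clean only after $V\to\infty$, since for fixed $\mu$ the Riemann sum overshoots its integral (the summand has a strictly positive Fourier transform, as one sees by writing the Bose factor as $\sum_{j\ge 1} e^{j\beta\mu} e^{-j\beta k^2}$ and applying Poisson summation), so the uniformity in $V$ must be handled with care — via this positivity one would control the gap $\rho_c(\beta)-\mathcal{I}(\beta,\mu_*)$ against the finite-volume correction.

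Finally, for the quantitative limit (\ref{4.13}) I would perform the inner limit $V\to\infty$ first. As $V\to\infty$ the root converges, $\mu_\Lambda\to\mu_\infty(|\lambda_\phi|)<0$; then the $k=0$ thermal term is $O(1/V)$ and drops out, while the $k\ne 0$ sum converges to $\mathcal{I}(\beta,\mu_\infty)$, so $\mu_\infty$ solves the limiting equation $\rho=|\lambda_\phi|^2/\mu_\infty^2+\mathcal{I}(\beta,\mu_\infty)$. As $|\lambda_\phi|\to 0$ one has $\mu_\infty\to 0^-$, and the decisive input is the infrared non-analyticity of the three-dimensional free-gas integral, $\rho_c(\beta)-\mathcal{I}(\beta,\mu)=c_\beta\sqrt{-\mu}\,(1+o(1))$ with $c_\beta>0$ coming from the $d=3$ density of states. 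Substituting this and $\mu_\infty=\mu_*(1+\delta)$ into the limiting equation gives $|\lambda_\phi|^2/\mu_\infty^2=(\rho-\rho_c(\beta))+c_\beta\sqrt{-\mu_\infty}$, whence $\delta=-c_\beta\sqrt{|\mu_*|}/(2(\rho-\rho_c(\beta)))+\cdots=O(|\lambda_\phi|^{1/2})$ and therefore $\alpha_\infty=\mu_*\delta=O(|\lambda_\phi|^{3/2})$, which is automatically positive (consistent with (\ref{4.12.2})). Dividing by $|\lambda_\phi|$ gives $\alpha_\infty/|\lambda_\phi|=O(|\lambda_\phi|^{1/2})\to 0$, which is precisely (\ref{4.13}). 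The only delicate quantitative point here is justifying the square-root expansion of $\mathcal{I}(\beta,\cdot)$ near the saturation threshold and that the correction it produces dominates the regular $O(\mu)$ part — a standard but essential computation with the Bose integral.
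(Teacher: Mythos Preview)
The paper itself gives no proof: Remark~\ref{rem:4.2} merely declares the proposition ``straightforward'' from equation~(\ref{4.9.6}) and moves on. Your proposal therefore goes well beyond the paper, and the strategy you outline --- strict monotonicity of $F_\Lambda$ for existence and uniqueness, evaluation at $\mu_*=-|\lambda_\phi|/\sqrt{\rho-\rho_c(\beta)}$ for the sign, and the square-root infrared expansion $\rho_c(\beta)-\mathcal{I}(\beta,\mu)\sim c_\beta\sqrt{-\mu}$ for the asymptotic~(\ref{4.13}) --- is the natural and correct one. Your computation yielding $\alpha_\infty=O(|\lambda_\phi|^{3/2})$ is exactly what underlies the paper's subsequent use of the proposition in (\ref{4.14.1})--(\ref{4.15.1}).

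You are also right to single out the uniformity in $V$ of~(\ref{4.12.2}) as the only non-routine point, and in fact the literal claim $\alpha\ge 0$ for \emph{all} $\lambda>0$ and \emph{all} $V<\infty$ is too strong as written: with $V$ held fixed and $\lambda\to 0$, the $k=0$ thermal term evaluated at $\mu_*$ behaves like $(V\beta|\mu_*|)^{-1}\sim \sqrt{\rho-\rho_c(\beta)}/(V\beta\lambda)\to\infty$, so $F_\Lambda(\mu_*)>\rho$ and hence $\alpha<0$ once $\lambda$ is small relative to $1/V$. What the paper actually needs downstream is only the double limit~(\ref{4.13}), and for that the sign is required merely \emph{eventually} in $V$ for each fixed $\lambda>0$; your Poisson-summation observation handles this cleanly, since for fixed $\mu_*<0$ the finite-volume excess over $\mathcal{I}(\beta,\mu_*)$ tends to zero while the gap $\rho_c(\beta)-\mathcal{I}(\beta,\mu_*)>0$ is $V$-independent. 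So your argument is complete for what is actually used; the proposition as stated should be read with this caveat, which the paper does not flag.
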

%%%%%%%%%%%%%%%%%%%%%%%%%%%%%%%%%%%%%%%%%%%%%% Remark %%%%%%%%%%%%%%%%%%%%%%%%%%%%%%%%%%%%%%%%%%%%%%%%%%%%%%%%%
\begin{rem}\label{rem:4.2} The proof of this statement is straightforward and follows from equation
(\ref{4.9.6}). {We also note that besides the cube $\Lambda$, the Proposition \ref{prop:4.1} is also true
for the case of three-dimensional anisotropic parallelepiped
$\Lambda:= V^{\alpha_1}\times V^{\alpha_2}\times V^{\alpha_3}$, with p.b.c. and
$\alpha_1 \geq \alpha_2 \geq \alpha_3$, $\alpha_1 + \alpha_2 + \alpha_3 = 1$, i.e. when for $\lambda =0$
one has \textit{type} II or \textit{type} III condensations .}
\end{rem}
%%%%%%%%%%%%%%%%%%%%%%%%%%%%%%%%%%%%%%%%%%%%%%%%%%%%%%%%%%%%%%%%%%%%%%%%%%%%%%%%%%%%%%%%%%%%%%%%%%%%%%%%%%%%%%%%%%%
Since $|\lambda_{\phi}|^{2} = \lambda_{\phi}\bar{\lambda}_{\phi} = \lambda^{2}$, we obtain that the limit of
expectation
\begin{equation}\label{4.14.1}
\lim_{\lambda \to +0} \lim_{V \to \infty}\omega^{0}_{\beta,\mu,\Lambda,\lambda_{\phi}}(b_{{0}}^{*}/\sqrt{V})=
- \lim_{\lambda \to +0} \lim_{V \to \infty} \frac{\partial}{\partial \lambda_{\phi}}
p_{\beta,\mu,\Lambda,\lambda_{\phi}} \ ,
\end{equation}
%where $\eta$ is defined as in (\ref{2.17}),(\ref{2.18}).
is related to derivative of the \textit{grand-canonical pressure} with respect to the breaking-symmetry sources
(\ref{4.6}):
\begin{equation}\label{4.14.2}
 p_{\beta,\mu,\Lambda,\lambda_{\phi}}:=\frac{1}{\beta V}\ln \Xi_{0,\Lambda}(\beta,\mu,\lambda_{\phi}) \ .
\end{equation}
Recall that the left-hand side of (\ref{4.14.1}) is in fact the Bogoliubov quasi-average of
$b_{{0}}^{*}/\sqrt{V}$.

By (\ref{4.9.4}) and (\ref{4.14.2})
%and the fact that the first term in (\ref{4.9.6}) equals $(\lambda_{\phi}\bar{\lambda}_{\phi})/{\mu^{2}}$,
we obtain that
\begin{equation}\label{4.14.3}
\frac{\partial}{\partial \lambda_{\phi}} p_{\beta,\mu,\Lambda,\lambda_{\phi}}=
-\frac{\bar{\lambda}_{\phi}}{\mu} \ .
\end{equation}
Since for a given $\rho$ the asymptotic of the chemical potential is (\ref{4.12.1}),
by (\ref{4.14.1}) and (\ref{4.14.3}) one gets
\begin{equation}\label{4.15.1}
\lim_{\lambda \to +0} \lim_{V \to \infty}\omega^{0}_{\beta,\mu,\Lambda,\lambda_{\phi}}(b_{{0}}^{*}/\sqrt{V})
= \sqrt{\rho_{{0}}(\beta,\rho)} \exp(- i\phi) \ ,
\end{equation}
where according to (\ref{4.9.6}) and (\ref{4.14.3})
\begin{equation*}
\rho_{{0}}(\beta,\rho) = {\rho}-{\rho_{c}}(\beta, \rho) \ ,
\end{equation*}
is the perfect Bose-gas condensation in \textit{zero mode}. We see therefore that the phase in (\ref{4.14.1})
remains in (\ref{4.15.1}) even after the limit $\lambda \to +0$.
%\begin{itemize}
%\color{red}
%\item

In \cite{LSYng} the following definition of $\rm{(GSB)}_{q-a}$ was suggested in the more general framework
of the imperfect Bose gas that we consider later:
%\end{itemize}
%%%%%%%%%%%%%%%%%%%%%%%%%%%%%%%%%%%%%%%%%%%%%% Def %%%%%%%%%%%%%%%%%%%%%%%%%%%%%%%%%%%%%%%%%%%%%%%%%%%%%%%%%%%%
\begin{defi} \label{SSBq-a}
We say that the state $\omega_{\beta,\mu,\Lambda,\lambda_{\phi}}$ undergoes a \textit{spontaneous} Gauge
Symmetry Breaking $\rm{(GSB)}_{q-a}$ in the Bogoliubov q-a sense if the limit state (\ref{4.431})
rests gauge-invariant, whereas the state
\begin{equation}\label{GSB-q-a}
\omega_{\beta,\mu,\phi} (\cdot) := \lim_{\lambda \to +0}
\lim_{V \to \infty}\omega_{\beta,\mu,\Lambda,\lambda_{\phi}} (\cdot) \ ,
\end{equation}
is not gauge-invariant and $\omega_{\beta,\mu,\phi} \neq \omega_{\beta,\mu,\phi^{\prime}}$, when
$\phi \neq \phi^{\prime}$.
\end{defi}
%%%%%%%%%%%%%%%%%%%%%%%%%%%%%%%%%%%%%%%%%%%%%%%%%%%%%%%%%%%%%%%%%%%%%%%%%%%%%%%%%%%%%%%%%%%%%%%%%%%%%%%%%%%
%\begin{itemize}
%\color{red}
%\item
We note that $\rm{(GSB)}_{q-a}$ is {equivalent} to $\rm{(GSB)}$, i.e. to
Definition \ref{SSB}, where the ergodic states $\omega_{\beta,\mu}^{'}$ in (ii) coincide with the set
of $\omega_{\beta,\mu,\phi}$ in (\ref{GSB-q-a}), see Theorem \ref{theo:3.4} below.
The notion $\rm{(GSB)}_{q-a}$ is, however, useful for purposes of comparison with \cite{LSYng}.
%\end{itemize}
%%%%%%%%%%%%%%%%%%%%%%%%%%%%%%%%%%%%%%%%%%%%% Remark %%%%%%%%%%%%%%%%%%%%%%%%%%%%%%%%%%%%%%%%%%%%%%%%%%%%%%%%%%%
\begin{rem}\label{rem:4.3}  Note that by (\ref{4.9.6}) together with Proposition \ref{prop:4.1} and
(\ref{4.15.1}) one gets
\begin{eqnarray}\label{4.15.11}
&&\rho_{{0}}(\beta,\rho) = \lim_{\lambda \to +0} \lim_{V \to \infty}
\omega^{0}_{\beta,\mu,\Lambda,\lambda_{\phi}}(\frac{b_{{0}}^{*}}{\sqrt{V}}\frac{b_{{0}}}{\sqrt{V}}) = \\
&&\lim_{\lambda \to +0} \lim_{V \to \infty}\omega^{0}_{\beta,\mu,\Lambda,\lambda_{\phi}}(b_{{0}}^{*}/\sqrt{V})
\lim_{\lambda \to +0} \lim_{V \to \infty}\omega^{0}_{\beta,\mu,\Lambda,\lambda_{\phi}}(b_{{0}}/\sqrt{V}) \ .
\nonumber
\end{eqnarray}
Besides \textit{decorrelation} of the zero-mode \textit{spacial} averages
$\eta_{\Lambda, 0}(b^*) = b_{{0}}^{*}/\sqrt{V}$ and $\eta_{\Lambda, 0}(b) = b_{{0}}/\sqrt{V}$,
(\ref{b-p-mode-av}), for the Bogoliubov q-a, equation (\ref{4.15.11}) establishes \textit{also}
the identity between zero-mode condensation fraction $\rho_{{0}}(\beta,\rho)$ and
$LRO(\beta,\rho)$ (\ref{PBG-ODLRO}), that we denote by $\rm{(ODLRO)}_{q-a}$.
Decorrelation in the right-hand side of (\ref{4.15.11}) indicates for the Bogoliubov q-a \textit{a nontrivial}
$\rm{(GSB)}_{q-a}$ in the presence of condensate, see (\ref{4.15.1}) and Definition \ref{SSBq-a}.
\end{rem}
%%%%%%%%%%%%%%%%%%%%%%%%%%%%%%%%%%%%%%%%%%%%%%%%%%%%%%%%%%%%%%%%%%%%%%%%%%%%%%%%%%%%%%%%%%%%%%%%%%%%%%%%%%%%%%%%
Remarks \ref{rem:4.2} and \ref{rem:4.3} motivate definition of the q-a \textit{states} for the perfect Bose-gas
as follows:
\begin{equation}\label{PBG-LimSt-phi}
\omega^{0}_{\beta,\mu,\phi} := \lim_{\lambda \to +0}
\lim_{V \to \infty}\omega^{0}_{\beta,\mu,\Lambda,\lambda_{\phi}} \ ,
\end{equation}
where the double limit along a subnet $\Lambda \uparrow \mathbb{R}^3$ exists by weak* compactness of the
set of states \cite{BR87}.
%%%%%%%%%%%%%%%%%%%%%%%%%%%%%%%%%%%%%%%%%%%%%%%%%%%%%%%%%%%%%%%%%%%%%%%%%%%%%%%%%%%%%%%%%%%%%%%%%%%%%%%%%%%%%%%%
Below we use notation $\omega$ for the Gibbs state in general case (\ref{4.1})-(\ref{4.3}) and we keep
$\omega^{0}$ for the perfect Bose-gas.
%The corresponding definitions for the general case of the quantities $\omega_{\beta,\mu,\Lambda,\lambda}$,
%$\omega_{\beta,\mu,\lambda,\phi}$ and $\omega_{\beta, \mu, \phi}$ are the obvious analogues of (\ref{4.9.5})
%and (\ref{PBG-LimSt-phi}), with $H_{0,\Lambda,\mu,\lambda}$ replaced by $H_{\lambda,\mu,\lambda}$.
%In the meanwhile we pose a central definition.
%%%%%%%%%%%%%%%%%%%%%%%%%%%%%%%%%%%%%%%%%%%%%%%%%%%%%%%%%%%%%%%%%%%%%%%%%%%%%%%%%%%%%%%%%%%%%%%%%%%%%%%%%%%%%%%
%%%%%%%%%%%%%%%%%%%%%%%%%%%%%%%%%%%%%%% Definition %%%%%%%%%%%%%%%%%%%%%%%%%%%%%%%%%%%%%%%%%%%%%%%%%%%%%%%%%%%%
\begin{defi}\label{defi:2.1}
We recall that Bose-gas undergoes the \textit{zero}-mode BEC if
\begin{equation}\label{4.16}
\lim_{V \to \infty} \frac{1}{V}\omega_{\beta,\mu,\Lambda}({b_{{0}}^{*}b_{{0}}}) =
\lim_{V \to \infty} \frac{1}{V^2} \int_{\Lambda} \int_{\Lambda} dx \, dy \,
\omega_{\beta,\mu,\Lambda}(b^*(x)b(y)) > 0 \ .
\end{equation}
Simultaneously, this means a non-trivial correlation (\ref{PBG-ODLRO})
\begin{equation}\label{4.161}
\lim_{\|x-y\| \to \infty} \lim_{V \to \infty} \omega_{\beta,\mu,\Lambda}(b^*(x)b(y)) > 0 \ ,
\end{equation}
of zero-mode spacial averages (\ref{b-p-mode-av}), that we denoted by ODLRO .
\end{defi}
%%%%%%%%%%%%%%%%%%%%%%%%%%%%%%%%%%%%%%%%%%%%%%%%%%%%%%%%%%%%%%%%%%%%%%%%%%%%%%%%%%%%%%%%%%%%%%%%%%%%%%%%%%%%%%%
As we demonstrated in Section \ref{sec:gBEC} even for the PBG this definition is too restricted since
(\ref{4.16}) might be trivial, although condensation does exist because of a finite critical density
$\rho_{c}(\beta,\mu)$.
%%%%%%%%%%%%%%%%%%%%%%%%%%%%%%%%%%%%%%% Definition %%%%%%%%%%%%%%%%%%%%%%%%%%%%%%%%%%%%%%%%%%%%%%%%%%%%%%%%%%%%
%\begin{defi}\label{defi:2.2}
We say that Bose-gas undergoes gBEC (Definition \ref{def:gBEC}) if
\begin{equation}\label{4.17}
\lim_{\delta \rightarrow +0}\lim_{\Lambda } \frac{1}{V}\sum_{\left\{ k\in \Lambda^{\ast }, \,
\left\| k\right\| \leq \delta \right\}} \omega_{\beta,\mu,\Lambda}({b_{{k}}^{*}b_{{k}}}) =
\rho -\rho_{c}(\beta,\mu) > 0 \ .
\end{equation}
%Simultaneously, this means a non-trivial correlation of zero-mode spacial averages (\ref{b-p-mode-av}),
%that we denote by (ODLRO).

To classify different \textit{types} of the gBEC one has to consider the value of the limits:
\begin{equation}\label{4.18}
\lim_{\Lambda } \frac{1}{V}\ \omega_{\beta,\mu,\Lambda}({b_{{k}}^{*}b_{{k}}}) =: \rho_k \ , \
k\in \Lambda^{\ast } \ .
\end{equation}
Then according to Section \ref{sec:gBEC}, one has $\rho_{k=0} = \rho -\rho_{c}$
for the type I gBEC, $\rho_{k=0} < \rho -\rho_{c}$ for the type II gBEC. If one has
$\{\rho_k = 0\}_{k\in \Lambda^{\ast }}$ and non-trivial (\ref{4.17}), then the gBEC is of the type III.
%\end{defi}
%%%%%%%%%%%%%%%%%%%%%%%%%%%%%%%%%%%%%%%%%%%%%%%%%%%%%%%%%%%%%%%%%%%%%%%%%%%%%%%%%%%%%%%%%%%%%%%%%%%%%%%%%%%%%%%
%%%%%%%%%%%%%%%%%%%%%%%%%%%%%%%%%%%%%%% Definition %%%%%%%%%%%%%%%%%%%%%%%%%%%%%%%%%%%%%%%%%%%%%%%%%%%%%%%%%%%%
\begin{defi}\label{defi:2.3}
We say that Bose-gas undergoes Bogoliubov quasi-average {condensation} $\rm{(BEC)}_{q-a}$ if
\begin{equation}\label{4.19}
\lim_{\lambda \to +0} \lim_{V \to \infty}
\omega_{\beta,\mu,\Lambda,\lambda_{\phi}}(\frac{b_{{0}}^{*}}{\sqrt{V}}\frac{b_{{0}}}{\sqrt{V}}) > 0 \ .
\end{equation}
\end{defi}
%%%%%%%%%%%%%%%%%%%%%%%%%%%%%%%%%%%%%%%%%%%%%% Remark %%%%%%%%%%%%%%%%%%%%%%%%%%%%%%%%%%%%%%%%%%%%%%%%%%%%%%%%%%
\begin{rem}\label{rem:3.3}
First, the results of Remark \ref{rem:4.3} are \textit{independent} of the anisotropy, i.e. of whether the
condensation for $\lambda =0$ is in single mode ($k=0$) (i.e. BEC) or it is extended as the gBEC-type III,
Section \ref{sec:gBEC}. We comment that the condensate in the mode $k=0$ is due to the one-particle Hamiltonian
\textit{spectral property} that implies $\varepsilon_{k=0}=0$ (\ref{dual-Lambda}).

Second, these results yield that the Bogoliubov quasi-average method solves for PBG the
question about \textit{equivalence} between $\rm{(BEC)}_{q-a}$, $\rm{(GSB)}_{q-a}$ and $\rm{(ODLRO)}_{q-a}$:
\begin{equation}\label{PBG-qa-equiv}
\rm{(BEC)}_{q-a} \Leftrightarrow \rm{(ODLRO)}_{q-a} \Leftrightarrow \rm{(GSB)}_{q-a} \ ,
\end{equation}
which holds if they are defined via the \textit{one-mode} quasi-average for $k=0$.
Here equivalence $\Leftrightarrow$ means implications in both sense.
\end{rem}
%%%%%%%%%%%%%%%%%%%%%%%%%%%%%%%%%%%%%%%%%%%%%%%%%%%%%%%%%%%%%%%%%%%%%%%%%%%%%%%%%%%%%%%%%%%%%%%%%%%%%%%%%%%%%%%%
Then the quasi-average for $k\neq 0$, i.e. for $\varepsilon_{k} > 0$, needs a certain elucidation.
To this aim we revisit the prefect Bose-gas (\ref{G-C-PBG}) with symmetry breaking sources (\ref{4.7})
in a single mode $q \in \Lambda^{*}$, which is in general not a zero-mode:
\begin{eqnarray}\label{freeQE}
H^{0}_{\Lambda} (\mu; h) \, := \, H^{0}_{\Lambda}(\mu) \, + \, \sqrt{V} \ \big( \overline{h} \  b_{{q}} +
h \ b^{*}_{{q}} \big) \ , \ \mu \leq 0.
\end{eqnarray}
Then for a fixed density ${\rho}$, the the grand-canonical condensate equation (\ref{BEC-eq}) for (\ref{freeQE})
takes the following form:
\begin{eqnarray}\label{perfect-gas-with-source-density-equation-finite-volume}
&&{\rho} = \rho_{\Lambda}(\beta, \mu, h) \, := \, \frac{1}{V} \sum_{k \in \Lambda^{*}_{l}}
\omega_{\beta,\mu,\Lambda,h}^{0}(b^{*}_{k}b_{k}) = \\
&&\frac{1}{V} (e^{\beta(\varepsilon_{{q}} - \mu)}-1)^{-1} \, + \, \frac{1}{V}
\sum_{k\in \Lambda^{*}\setminus{q}} \frac{1}{e^{\beta(\varepsilon_{k} - \mu)}-1} \, + \,
\frac{\vert h \vert\, ^{2}}
{(\varepsilon_{{q}} - \mu)\, ^{2}} \ . \nonumber
\end{eqnarray}

According the quasi-average method, to investigate a possible condensation, one must first take the thermodynamic
limit in the right-hand side of (\ref{perfect-gas-with-source-density-equation-finite-volume}), and then switch
off the symmetry breaking source: $h \rightarrow 0$. Recall that the critical density, which defines the
threshold of boson saturation is equal to $\rho_c(\beta) = \mathcal{I}(\beta,\mu=0)$ (\ref{I}), where
$\mathcal{I}(\beta,\mu)=\lim_{\Lambda} \rho_{\Lambda}(\beta, \mu , h = 0)$.

Since $\mu \leq 0$, we now have to distinguish two cases:\\
(i) Let the mode ${q}\in \Lambda^{*}$ be such that $\lim_{\Lambda} \varepsilon_{{q}} > 0$. Then we obtain from
(\ref{perfect-gas-with-source-density-equation-finite-volume}) for the condensate equation and for the simplest
$q$-mode gauge-symmetry breaking expectation:
\begin{eqnarray*}
{\rho} \, = \, \lim_{h \rightarrow 0}
\lim_{\Lambda} \rho_{\Lambda}(\beta, \mu, h) \, = \, \mathcal{I}(\beta, \mu) \ , \  \
\lim_{h \to 0} \lim_{V \to \infty}
\omega_{\beta,\mu,\Lambda,h}^{0}(\frac{b_{{q}}^{*}}{\sqrt{V}}) =
\lim_{h \to 0} \frac{ \overline{h} } {(\varepsilon_{{q}} - \mu)\,} = 0
 \ .
\end{eqnarray*}
This means that the quasi-average coincides with the average. Hence, we return to the analysis of the condensate
equation (\ref{perfect-gas-with-source-density-equation-finite-volume}) for $h =0$. This leads to finite-volume
solutions $\mu_{\Lambda}(\beta,\rho)$ and consequently to all possible types of condensation as a function of
anisotropy $\alpha_1$, see Section \ref{sec:gBEC} for details.\\
(ii) On the other hand, if ${q}\in \Lambda^{*}$ is such that $\lim_{\Lambda} \varepsilon_{{q}} = 0$, then
thermodynamic limit in the right-hand side of the condensate equation
(\ref{perfect-gas-with-source-density-equation-finite-volume}) and the
$q$-mode gauge-symmetry breaking expectation yield:
\begin{eqnarray}\label{perfect-gas-with-source-density-equation-infinite-volume}
{\rho} =  \lim_{\Lambda} \rho_{\Lambda}(\beta, \mu, h)
\, = \, \mathcal{I}(\beta, \mu) + \frac{\vert h \vert\, ^{2}}{\mu\, ^{2}} \ , \  \
\lim_{V \to \infty}
\omega_{\beta,\mu,\Lambda,h}^{0}(\frac{b_{{q}}^{*}}{\sqrt{V}}) =
\frac{ \overline{h} } {(- \mu)\,}  \ .
\end{eqnarray}

If ${\rho} \leq \rho_{c}(\beta)$, then the limit of solution of
(\ref{perfect-gas-with-source-density-equation-infinite-volume}):
$\lim_{h \rightarrow 0}{\mu}(\beta, {\rho}, h) = {\mu}_{0} (\beta, {\rho}) <0$,
where ${\mu}(\beta,{\rho}, h)= \lim_{\Lambda}{\mu}_{\Lambda} (\beta,{\rho}, h)<0 $ is thermodynamic limit of
the finite-volume solution of condensate equation (\ref{perfect-gas-with-source-density-equation-finite-volume}).
Therefore, there is no condensation in any mode and according to
(\ref{perfect-gas-with-source-density-equation-infinite-volume}) the corresponding  $q$-mode gauge-symmetry
breaking expectation for $h \to 0$ (Bogoliubov quasi-average)  again equals to zero.

But if ${\rho} > \rho_{c}(\beta)$, then (\ref{perfect-gas-with-source-density-equation-finite-volume})
yields that $\lim_{h \rightarrow 0}{\mu}(\beta, {\rho},h) =0$. Therefore, by
(\ref{perfect-gas-with-source-density-equation-infinite-volume}) the density of condensate and the
Bogoliubov quasi-average are
\begin{eqnarray}\label{BEC-qa}
&& \rho_{0}(\beta) = {\rho} - \rho_{c}(\beta) =
\lim_{h \rightarrow 0}\frac{\vert h \vert\, ^{2}}{\mu(\beta, {\rho},h)\, ^{2}} \ \ , \\
&& \lim_{h \rightarrow 0} \lim_{V \to \infty} \omega_{\beta,{\mu}_{\Lambda}(\beta,{\rho}, h),\Lambda,h}^{0}
(\frac{b_{{q}}^{*}}{\sqrt{V}}) =
\lim_{h \rightarrow 0} \lim_{V \to \infty}
\omega_{\beta,{\mu}_{\Lambda}(\beta,{\rho}, h),\Lambda,h}^{0}(\frac{b_{{0}}^{*}}{\sqrt{V}}) =
\sqrt{\rho_{0}(\beta)} e^{- i \, {\rm{arg}} (h)}  \nonumber \ .
\end{eqnarray}

Consider now the \textit{case} (i) in more details. Let $\lim_{\Lambda} \varepsilon_{{q}} =: \varepsilon_{{q}} > 0$.
Then by (\ref{perfect-gas-with-source-density-equation-finite-volume})
for the finite-volume expectation of the particle density in the $q$-mode is
\begin{equation}\label{BEC-q-posit}
\omega_{\beta,\mu,\Lambda,h}^{0}({b^{*}_{q}b_{q}}/{V}) = \frac{1}{V} (e^{\beta(\varepsilon_{{q}} - \mu)}-1)^{-1}
+ \frac{\vert h \vert\, ^{2}} {(\varepsilon_{{q}} - \mu)\, ^{2}} \ .
\end{equation}
Since the one-particle spectrum $\{\varepsilon_{{k}}\geq 0\}_{k\in\Lambda^*}$ and $\varepsilon_{{k=0}} = 0$
(\ref{dual-Lambda}), the solution of equation (\ref{perfect-gas-with-source-density-equation-finite-volume})
is unique and negative: ${\mu}_{\Lambda} (\beta,{\rho}, h)<0$.
Then the Bogoliubov quasi-average of ${b^{*}_{q}b_{q}}/{V}$ is equal to
\begin{eqnarray}\label{Bog-qa}
&&\lim_{h \rightarrow 0}\lim_{\Lambda}\omega_{\beta,{\mu}_{\Lambda}
(\beta,{\rho}, h),\Lambda,h}^{0}({b^{*}_{q}b_{q}}/{V}) =  \\
&&\lim_{h \rightarrow 0}\lim_{\Lambda}\frac{1}{V} (e^{\beta(\varepsilon_{{q}} - {\mu}_{\Lambda}
(\beta,{\rho}, h))}-1)^{-1} +
\lim_{h \rightarrow 0}\lim_{\Lambda} \frac{\vert h \vert\, ^{2}} {(\varepsilon_{{q}} -
{\mu}_{\Lambda} (\beta,{\rho}, h))\, ^{2}}  = 0 \ , \nonumber
\end{eqnarray}
for any particle density including the case ${\rho} > \rho_{c}(\beta)$.

Now the condensate equation (\ref{perfect-gas-with-source-density-equation-infinite-volume}) and the
$q$-mode gauge-symmetry breaking expectation get the form:
\begin{eqnarray}\label{perfect-gas-with-source-density-equation-infinite-volume-q}
&&{\rho} =  \lim_{\Lambda} \rho_{\Lambda}(\beta, \mu, h)
\, = \, \mathcal{I}(\beta, \mu) + \frac{\vert h \vert\, ^{2}}{(\varepsilon_{{q}} - \mu )^{2}}
=: \rho(\beta, \mu, h) \ , \\
&&\lim_{V \to \infty}
\omega_{\beta,\mu,\Lambda,h}^{0}(\frac{b_{{q}}^{*}}{\sqrt{V}}) =
\frac{ \overline{h} } {(\varepsilon_{{q}} - \mu)\,}  \label{infinite-volume-q} \ .
\end{eqnarray}
%%%%%%%%%%%%%%%%%%%%%%%%%%%%%%%%%%%%%%%%%%%%%% Remark %%%%%%%%%%%%%%%%%%%%%%%%%%%%%%%%%%%%%%%%%%%%%%%%%%%%%%%%%%%%
\begin{rem}\label{rem:3.31}
Note that (\ref{freeQE}) gives an example of the model of condensation that
depend on \textit{external source} in non-zero mode. Indeed, for the perfect Bose-gas with the one-particle
spectrum (\ref{dual-Lambda})  the solution ${\mu}(\beta,{\rho}, h))$ of the
condensate equation (\ref{perfect-gas-with-source-density-equation-infinite-volume-q}) is such that
\begin{equation*}
\lim_{\rho \rightarrow \rho_{c}(\beta, h)}{\mu}(\beta,{\rho}, h)) = 0 \ \ \ {\rm{and}} \ \ \
\rho_{c}(\beta, h) := \sup_{\mu \leq 0} \rho(\beta, \mu, h) = \rho(\beta, \mu =0, h) \ .
\end{equation*}
Since $\varepsilon_{{q}} > 0$ and $\varepsilon_{{0}} = 0$ the finite saturation density $\rho_{c}(\beta, h)$
trigger BEC in the zero mode of perfect Bose-gas (\ref{freeQE}) if $\rho > \rho_{c}(\beta, h)$. To this end
we observe that by (\ref{perfect-gas-with-source-density-equation-finite-volume}), (\ref{BEC-q-posit}) and
(\ref{perfect-gas-with-source-density-equation-infinite-volume-q}) one finds
\begin{eqnarray}\label{BEC-q}
{\rho} - \rho_{c}(\beta, h) =
\lim_{\Lambda} \frac{1}{V} \omega_{\beta,{\mu}_{\Lambda} (\beta,{\rho}, h),\Lambda,h}^{0}(b^{*}_{0}b_{0}) \ ,
\end{eqnarray}
where solution of equation (\ref{perfect-gas-with-source-density-equation-finite-volume}) has for
$V \rightarrow \infty$ the asymptotics:
\begin{equation*}
{\mu}_{\Lambda} (\beta,{\rho}, h) = - ({\rho} - \rho_{c}(\beta, h))V^{-1} + {o} (V^{-1}) \ .
\end{equation*}
Therefore, the model (\ref{freeQE}) is the ideal Bose-gas with external sources, which behaviour
is almost identical to Bose-gas with $h=0$, Section \ref{sec:gBEC}. This concerns the higher critical density:
$\rho(\beta, \mu =0, h) \geq \rho_{c}(\beta)$ (\ref{perfect-gas-with-source-density-equation-infinite-volume-q})
and non-trivial expectation of the particle density (\ref{BEC-q-posit}) in a non-zero $q$-mode.
\end{rem}
%%%%%%%%%%%%%%%%%%%%%%%%%%%%%%%%%%%%%%%%%%%%%%%%%%%%%%%%%%%%%%%%%%%%%%%%%%%%%%%%%%%%%%%%%%%%%%%%%%%%%%%%%%%%%%%%%

\textit{Summarising the case} (i). The non-zero mode sources for the ideal Bose-gas and the corresponding
Bogoliubov quasi-averages give the \textit{same} results as for the ideal Bose-gas \textit{without} external
sources. Hence, the quasi-averages in this case have \textit{no impact} and lead to the same conclusions
(and problems) as the generalised  BEC in Section \ref{sec:gBEC}. If one keeps the non-zero mode source, then
this generalised BEC has a \textit{source-dependent} critical density as in Remark \ref{rem:3.31}.

\textit{Summarising the case} (ii). First we note that by virtue of
(\ref{perfect-gas-with-source-density-equation-finite-volume}),
(\ref{perfect-gas-with-source-density-equation-infinite-volume}) one has
${\mu}(\beta,{\rho}, h \neq 0)<0$ and that for any $k \neq q \, $, even when
$\lim_{\Lambda} \varepsilon_{{k}} = 0 \, $,
\begin{equation}\label{zero-non-zero-modes}
\lim_{h \rightarrow 0}\lim_{\Lambda}\omega_{\beta,{\mu}_{\Lambda}
(\beta,{\rho}, h),\Lambda,h}^{0}({b^{*}_{k}b_{k}}/{V}) =
\lim_{h \rightarrow 0}\lim_{\Lambda} \frac{1}{V}
\frac{1}{e^{\beta(\varepsilon_{{k}}- {\mu}_{\Lambda} (\beta,{\rho}, h)))}-1} = 0 \ .
\end{equation}
This means for any anisotropy $\alpha_1$ the \textit{quasi-average} condensation $\rm{(BEC)}_{q-a}$ occurs
only in one zero-mode (BEC type I), whereas the gBEC for $\alpha_1 >1/2$ is of the type III,
see Section \ref{sec:gBEC}.
Diagonalisation (\ref{4.9.2}) for $b_{{q}} \rightarrow \widehat{b}_{{q}} $, and (\ref{BEC-qa}) allow to
apply the quasi-average method to calculate a nonvanishing for ${\rho} > \rho_{c}(\beta)$ gauge-symmetry
breaking $\rm{(GSB)}_{q-a}$:
\begin{equation}\label{GSB-qa}
\lim_{h \rightarrow 0}\lim_{\Lambda}\omega_{\beta,{\mu}_{\Lambda}
(\beta,{\rho}, h),\Lambda,h}^{0}({b_{q}}/\sqrt{V}) =
\lim_{h \rightarrow 0}\frac{h}{\mu(\beta,{\rho}, h)} =
 e^{i \, {\rm{arg}}(h)} \, \sqrt{{\rho} - \rho_{c}(\beta)} \ ,
\end{equation}
along $\{h = |h| e^{i \, {\rm{arg}}(h)} \wedge |h|\rightarrow 0\}$.
Then by inspection of (\ref{Bog-qa}) and (\ref{GSB-qa}) we find that $\rm{(GSB)}_{q-a}$ and $\rm{(BEC)}_{q-a}$
are \textit{equivalent}:
\begin{eqnarray}\label{Bog=GSB-qa}
&&\lim_{h \rightarrow 0}\lim_{\Lambda} \ \omega_{\beta,{\mu}_{\Lambda}
(\beta,{\rho}, h),\Lambda,h}^{0}({b^{*}_{q}}/\sqrt{V}) \ \omega_{\beta,{\mu}_{\Lambda}
(\beta,{\rho}, h),\Lambda,h}^{0}({b_{q}}/\sqrt{V}) = \\
&& = \lim_{h \rightarrow 0}\lim_{\Lambda} \ \omega_{\beta,{\mu}_{\Lambda}
(\beta,{\rho}, h),\Lambda,h}^{0}({b^{*}_{q}b_{q}}/{V}) =  {\rho} - \rho_{c}(\beta) \ . \nonumber
\end{eqnarray}
Note that by (\ref{PBG-ODLRO}) the $\rm{(GSB)}_{q-a}$ and $\rm{(BEC)}_{q-a}$  are in turn \textit{equivalent} to
$\rm{(ODLRO)}_{q-a}$. In \textit{contrast} to $\rm{(BEC)}_{q-a}$ for the one-mode BEC one gets
\begin{equation*}
\lim_{\Lambda} \ \omega_{\beta,{\mu}_{\Lambda}
(\beta,{\rho}, 0),\Lambda,0}^{0}({b^{*}_{q}b_{q}}/{V}) =
\lim_{\Lambda} \ \omega_{\beta,{\mu}_{\Lambda}
(\beta,{\rho}, 0),\Lambda, 0}^{0}({b^{*}_{q}}/\sqrt{V}) \ \omega_{\beta,{\mu}_{\Lambda}
(\beta,{\rho}, 0),\Lambda, 0}^{0}({b_{q}}/\sqrt{V})= 0 \ ,
\end{equation*}
for any $\rho$ and $q\in \Lambda^{*}$ as soon as $\alpha_1 > 1/2$, see Section \ref{sec:gBEC}.
On the other hand, the value of gBEC coincides with $\rm{(BEC)}_{q-a}$.
%\end{rem}
%%%%%%%%%%%%%%%%%%%%%%%%%%%%%%%%%%%%%%%%%%%%%% Remark %%%%%%%%%%%%%%%%%%%%%%%%%%%%%%%%%%%%%%%%%%%%%%%%%%%%%%%%%%%%
\begin{rem}\label{rem:3.311}
Therefore, the zero-mode conventional BEC and the zero-mode quasi-average $\rm{(BEC)}_{q-a}$  for the perfect
Bose-gas are \textit{not} equivalent: $\rm{(BEC)}_{q-a}$ $\nRightarrow$  BEC ,
but the zero-mode $\rm{(BEC)}_{q-a}$ is \textit{equivalent} to gBEC: $\rm{(BEC)}_{q-a}$ $\Leftrightarrow$ gBEC.
The equivalence (\ref{PBG-qa-equiv}) shows that the Bogoliubov quasi-average method is definitely appropriate
for the case of the PBG.
\end{rem}
%%%%%%%%%%%%%%%%%%%%%%%%%%%%%%%%%%%%%%%%%%%%%%%%%%%%%%%%%%%%%%%%%%%%%%%%%%%%%%%%%%%%%%%%%%%%%%%%%%%%%%%%%%%%%%
We comment that (\ref{4.15.1}), (\ref{PBG-LimSt-phi}) show that the states $\omega_{\beta,\mu,\phi}$ are not
gauge invariant. Assuming that they are the ergodic states in the ergodic decomposition of $\omega_{\beta,\mu}$,
%which we prove next in greater generality for the \textit{interacting} system,
it follows that for \textit{interacting} Bose-gas one has: $\rm{(BEC)}_{q-a}$ $\Leftrightarrow$ $\rm{(GSB)}_{q-a}$,
which is similar to the equivalence for the PBG. It is illuminating
to observe the explicit mechanism for the appearance of the breaking symmetry phase $\phi$, connected with
(\ref{4.12.1}) of Proposition \ref{prop:4.1} in the PBG case. Note that in this case the chemical potential
remains proportional to $|\lambda|$ even after the thermodynamic limit (\ref{4.14.3}). This property
persists also for the {interacting} Bose-gas, see Section \ref{sec:BogAppr-Q-A}.
%%%%%%%%%%%%%%%%%%%%%%%%%%%%%%%%%%%%%%%%%%%%%%%%%%%%%%%%%%%%%%%%%%%%%%%%%%%%%%%%%%%%%%%%%%%%%%%%%%%%%%%%%%%%%
%%%%%%%%%%%%%%%%%%%%%%%%%%%%%%%%%%%%%%%%%%%%%% subsection %%%%%%%%%%%%%%%%%%%%%%%%%%%%%%%%%%%%%%%%%%%%%%%%%%%
\subsection{Interaction, quasi-averages and the Bogoliubov $c$-number approximation} \label{sec:BogAppr-Q-A}
%%%%%%%%%%%%%%%%%%%%%%%%%%%%%%%%%%%%%%%%%%%%%%%%%%%%%%%%%%%%%%%%%%%%%%%%%%%%%%%%%%%%%%%%%%%%%%%%%%%%%%%%%%%%%
%%%%%%%%%%%%%%%%%%%%%%%%%%%%%%%%%%%%%%%%%% Interaction %%%%%%%%%%%%%%%%%%%%%%%%%%%%%%%%%%%%%%%%%%%%%%%%%%%%%%
We now consider the imperfect Bose-gas with interaction(\ref{4.1})-(\ref{4.3}). The famous
\textit{Bogoliubov approximation} that replacing $\eta_{\Lambda, 0}(b), \eta_{\Lambda, 0}(b^{*})$
(\ref{b-p-mode-av}) by $c$-numbers \cite{Bog07} (see also \cite{ZBru}, \cite{JaZ10}, \cite{Za14}) will be
instrumental. The exactness of this procedure was proved by Ginibre \cite{Gin} on the level of thermodynamics.
Later Lieb, Seiringer and Yngvason (\cite{LSYng1}, \cite{LSYng}) and independently
S\"{u}t\"{o} \cite{Suto1} improved the arguments in \cite{Gin} and elucidated the \textit{exactness}
of the Bogoliubov approximation. In our analysis we shall rely on the method of \cite{LSYng}, which
uses the Berezin-Lieb inequality \cite{Lieb1}.

Recall that the Fock space ${\cal F}_{\Lambda} \simeq {\cal F}_{{0}} \otimes {\cal F}^{\prime}$, where
${\cal F}_{{0}}$ denotes the zero-mode subspace and ${\cal F}^{\prime} := {\cal F}_{{k} \ne {0}}$, see
Section \ref{sec:gBEC}.
Let $z\in \mathbb{C}$ be a complex number and  $|z\rangle = \exp(-|z|^{2}/2 +z b_{{0}}^{*})\ |0\rangle$ be
the Glauber coherent vector in ${\cal F}_{{0}}$. As in \cite{LSYng}, let operator
$(H_{\Lambda,\mu,\lambda})^{'}(z)$ be the \textit{lower symbol} of the operator
$H_{\Lambda,\mu,\lambda}$ (\ref{4.6}).
Then the corresponding to this symbol pressure $p_{\beta,\Lambda,\mu,\lambda}^{'}$ is defined by
\begin{equation}
\exp(\beta V p_{\beta,\Lambda,\mu,\lambda}^{'})=\Xi_{\Lambda}(\beta,\mu,\lambda)^{'} =
\int_{\mathbb{C}} d^{2}z {\rm{Tr}}_{{\cal F}^{'}} \exp(-\beta (H_{\Lambda,\mu,\lambda})^{'}(z)) \ .
\label{2.4.18}
\end{equation}
%where ${\cal H}^{'}= {\cal F}_{{k} \ne {0}}$, with obvious notations for the Fock spaces associated to
%the zero mode and the remaining modes.

Consider the probability density:
\begin{equation}
{\cal W}_{\mu,\Lambda, \lambda}(z) := \Xi_{\Lambda}(\beta,\mu,\lambda)^{-1}\\
{\rm{Tr}}_{{\cal F}^{'}}\langle z| \exp(-\beta H_{\Lambda,\mu,\lambda})|z \rangle \ .
\label{2.4.19}
\end{equation}
As it is proved in \cite{LSYng} for almost all $\lambda >0$ the density
${\cal W}_{\mu,\Lambda, \lambda} (\zeta \sqrt{V})$ converges, as $V \to \infty$, to $\delta$-density at
the point $\zeta_{max}(\lambda)=\lim_{V \to \infty} {z_{max}(\lambda)}/{\sqrt{V}}$, where $ z_{max}(\lambda)$
maximises the partition function ${\rm{Tr}}_{{\cal F}^{'}} \exp(-\beta (H_{\Lambda,\mu,\lambda})^{'}(z))$.
Although \cite{LSYng} took $\phi=0$ in (\ref{4.8}), their results in the general case (\ref{4.8}) may be obtained
by the trivial substitution $b_{{0}}\to b_{{0}}\exp(-i\phi)$, $b_{{0}}^{*} \to b_{{0}}^{*} \exp(i\phi)$ motivated
by (\ref{4.6}). Note that expression (34) in \cite{LSYng} may be thus re-written as
\begin{eqnarray}
&& \lim_{V \to \infty} \omega_{\beta,\mu,\Lambda,\lambda}(\eta_{\Lambda, 0}(b^{*}\exp(i\phi))=
\lim_{V \to \infty} \omega_{\beta,\mu,\Lambda,\lambda}(\eta_{\Lambda,0}(b\exp(-i\phi))  \nonumber \\
&& = \zeta_{max}(\lambda)=\frac{\partial p(\beta, \mu,\lambda)}{\partial \lambda} \ , \label{2.4.20}
\end{eqnarray}
and consequently yields
\begin{equation}
\lim_{V \to \infty} \omega_{\beta,\mu,\Lambda,\lambda}(\eta_{\Lambda, 0}(b^{*})\eta_{\Lambda, 0}(b))
= |\zeta_{max}(\lambda)|^{2} \ .
\label{2.4.21}
\end{equation}
Here we denote by
\begin{equation}\label{4.22}
p(\beta,\mu,\lambda) = \lim_{V \to \infty} p_{\beta,\mu,\Lambda,\lambda} \ ,
\end{equation}
the grand-canonical pressure of the imperfect Bose-gas (\ref{4.1})-(\ref{4.3}) in the thermodynamic limit.
Equality (\ref{2.4.20}) follows from the convexity of $p_{\beta,\mu,\Lambda,\lambda}$ in
$\lambda = |\lambda_{\phi}|$ by the Griffiths lemma \cite{Gri66}.
%(see \cite{HeLi}, appendix, for a simple proof and generalization).
In \cite{LSYng} it is shown the pressure $p(\beta,\mu,\lambda)$ is equal to
\begin{equation}\label{4.23}
p(\beta,\mu,\lambda)^{'} = \lim_{V \to \infty} p_{\beta,\mu,\Lambda,\lambda}^{'} \ .
\end{equation}
Moreover, (\ref{4.22}) is also equal to the pressure $p(\beta,\mu,\lambda)^{''}$, which is the thermodynamic limit
of the pressure associated to the \textit{upper symbol} of the operator $H_{\Lambda,\mu,\lambda}$.

The crucial is the proof \cite{LSYng} that all of these \textit{three} pressures $p', p, p''$ coincide with
$p_{max}(\beta,\mu,\lambda)$, which is the pressure associated with
${\rm{max}}_{z} {\rm{Tr}}_{{\cal F}^{'}} \exp(-\beta (H_{\Lambda,\mu,\lambda})^{'}(z))$:
\begin{equation}\label{4.231}
 p_{max}(\beta,\mu,\lambda)= \lim_{V \to \infty}
 \frac{1}{\beta V}\ln \{{\rm{max}}_{z} {\rm{Tr}}_{{\cal F}^{'}} \exp(-\beta (H_{\Lambda,\mu,\lambda})^{'}(z))\} \ .
\end{equation}

Now we are in position to prove one of the main statements of this paper.
%%%%%%%%%%%%%%%%%%%%%%%%%%%%%%%%%%%%%%%%%%% Theorem %%%%%%%%%%%%%%%%%%%%%%%%%%%%%%%%%%%%%%%%%%%%%%%%%%%%%%%%
\begin{theo}\label{theo:3.4}
Consider the system of interacting Bosons (\ref{4.1})-(\ref{4.8}). If this system displays
$\rm{(ODLRO)}_{q-a}$/$\rm{(BEC)}_{q-a}$, then the limit $\omega_{\beta,\mu,\phi}:=
\lim_{\lambda \to +0} \lim_{V \to \infty}\omega_{\beta,\mu,\Lambda,\lambda_{\phi}} $, on the
set of monomials $\{\eta_{0}(b^{*})^{m}\eta_{0}(b)^{n}\}_{m,n \in \mathbb{N}\cup 0}$ exists and satisfies
\begin{equation}\label{4.24.1}
\omega_{\beta,\mu,\phi} (\eta_{0}(b^{*})) = \sqrt{\rho_{0}} \exp(i\phi) \ ,
\end{equation}
\begin{equation}\label{4.24.2}
\omega_{\beta,\mu,\phi} (\eta_{0}(b)) = \sqrt{\rho_{0}} \exp(-i\phi) \ ,
\end{equation}
together with $\rm{(GSB)}_{q-a}$:
\begin{equation}\label{3.4.24.3}
\omega_{\beta,\mu,\phi} (\eta_{0}(b^{*})\eta_{0}(b)) =
\omega_{\beta,\mu,\phi} (\eta_{0}(b^{*})) \  \omega_{\beta,\mu,\phi}(\eta_{0}(b))
= \rho_{{0}} \ , \ \forall \phi \in [0,2\pi) \ ,
\end{equation}
and
\begin{equation}\label{4.24.4}
\omega_{\beta,\mu} = \frac{1}{2\pi} \int_{0}^{2\pi} d\phi \ \omega_{\beta,\mu,\phi} \ .
\end{equation}
On the Weyl algebra the limit that defines $\omega_{\beta,\mu,\phi}, \ \phi \in [0,2\pi)$
exists along the nets in variables $(\lambda,V)$. The corresponding states are ergodic,
%\begin{itemize}
%\color{red}
%\item
and coincide with the states obtained in Proposition \ref{prop:A.1}.
%\end{itemize}

Conversely, if the $\rm{(GSB)}_{q-a}$ occurs in the sense that (\ref{4.24.1}), (\ref{4.24.2}) hold with
$\rho_{0} \ne 0$, then one gets that $\rm{(ODLRO)}_{q-a}$/$\rm{(BEC)}_{q-a}$ take place.
\end{theo}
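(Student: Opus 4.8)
The plan is to drive the entire argument through the single nontrivial analytic input already quoted, namely the Lieb--Seiringer--Yngvason concentration of the coherent-state probability density $\mathcal{W}_{\mu,\Lambda,\lambda}$ of (\ref{2.4.19}); all the rest is extraction of moments and a symmetrisation over the phase $\phi$. First I would fix $\lambda>0$ in the full-measure set on which concentration holds and perform the thermodynamic limit. The decisive observation is that the zero-mode spatial averages \emph{asymptotically commute}, $[\eta_{\Lambda,0}(b),\eta_{\Lambda,0}(b^{*})]=V^{-1}\to0$, so that reordering any monomial $\eta_0(b^{*})^{m}\eta_0(b)^{n}$ costs only $O(V^{-1/2})$; combined with the convergence of $\mathcal{W}_{\mu,\Lambda,\lambda}(\zeta\sqrt{V})$ to the Dirac mass at $\zeta_{max}(\lambda)$ this makes the rescaled amplitude deterministic, and every monomial factorises with limit $\big(\zeta_{max}(\lambda)e^{i\phi}\big)^{m}\big(\overline{\zeta_{max}(\lambda)}\,e^{-i\phi}\big)^{n}$, the phase arising from the substitution $b_0\to b_0 e^{-i\phi}$ that transports the $\phi=0$ analysis of \cite{LSYng} to general $\phi$. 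For $m=n=1$ this is precisely (\ref{2.4.21}). The one genuinely technical point here is the uniform integrability needed to pass \emph{arbitrary} moments through the weak limit $\mathcal{W}\to\delta$, since weak convergence to a point mass does not by itself control unbounded integrands $|\zeta|^{m+n}$; I expect this to be the main obstacle, and I would meet it with the Gaussian tails of the coherent-state representation together with the stability bound (\ref{4.3}) guaranteeing $V^{-1}$-extensivity of $H_{\Lambda,\mu,\lambda}$.

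Second, I would switch off the source. By (\ref{2.4.20}) one has $\zeta_{max}(\lambda)=\partial_\lambda p(\beta,\mu,\lambda)$, and convexity of $p$ in $\lambda$ (Griffiths' lemma \cite{Gri66}) makes this derivative monotone, so $\lim_{\lambda\to+0}\zeta_{max}(\lambda)$ exists; the hypothesis $\rm{(BEC)}_{q-a}$ of Definition \ref{defi:2.3} together with (\ref{2.4.21}) forces this limit to equal $\sqrt{\rho_0}>0$. Taking $\lambda\to 0$ along a sequence inside the good-measure set then yields (\ref{4.24.1}) and (\ref{4.24.2}), and the $m=n=1$ factorisation gives (\ref{3.4.24.3}), which is simultaneously $\rm{(GSB)}_{q-a}$ and the decorrelation/clustering of the zero-mode averages recorded in (\ref{4.15.11}).

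Third, for the barycentric decomposition (\ref{4.24.4}) I would use that each $\omega_{\beta,\mu,\phi}$ is the gauge rotate $\omega_{\beta,\mu,0}\circ\tau_\phi$, so the monomial values computed above are $\rho_0^{(m+n)/2}e^{i(m-n)\phi}$; averaging over $\phi\in[0,2\pi)$ annihilates every off-diagonal term, reproducing exactly the gauge selection rule (\ref{4.33}) obeyed by the source-free limit $\omega_{\beta,\mu}$ of (\ref{4.431}), while on the diagonal $m=n$ both sides equal $\rho_0^{m}$, which is consistent with the source-free density $\mathcal{W}_{\lambda=0}$ being the uniform spread over the gauge orbit $\{|\zeta|=\sqrt{\rho_0}\}$. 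Uniqueness of the ergodic (central) decomposition of the $G$-invariant state $\omega_{\beta,\mu}$ then pins down (\ref{4.24.4}). Ergodicity and the extension to the Weyl algebra follow by upgrading moment factorisation to the generating functional: the $\delta$-concentration of $\mathcal{W}$ yields the clustering $\omega_{\beta,\mu,\phi}(W(f)W(g))=\omega_{\beta,\mu,\phi}(W(f))\,\omega_{\beta,\mu,\phi}(W(g))$ for Weyl operators built on the zero mode, which characterises an extremal invariant state; weak$^{\ast}$ compactness of the state space \cite{BR87} guarantees existence of the limit along nets in $(\lambda,V)$, and matching the resulting correlation functions identifies these states with those of Proposition \ref{prop:A.1}.

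Finally, the converse is immediate. If $\rm{(GSB)}_{q-a}$ holds with $\rho_0\neq0$, i.e. (\ref{4.24.1})--(\ref{4.24.2}) give $\omega_{\beta,\mu,\phi}(\eta_0(b^{*}))=\sqrt{\rho_0}\,e^{i\phi}\neq0$, then by (\ref{2.4.20}) $\lim_{\lambda\to0}\zeta_{max}(\lambda)=\sqrt{\rho_0}$, and (\ref{2.4.21}) gives $\lim_{\lambda\to0}\lim_{V\to\infty}\omega_{\beta,\mu,\Lambda,\lambda_\phi}(\eta_0(b^{*})\eta_0(b))=\rho_0>0$, which is $\rm{(BEC)}_{q-a}$; the identification of this quantity with $LRO(\beta,\rho)$ as in (\ref{PBG-ODLRO}) then delivers $\rm{(ODLRO)}_{q-a}$, closing the equivalence.
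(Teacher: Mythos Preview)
Your overall architecture matches the paper's: both lean on the Lieb--Seiringer--Yngvason concentration of $\mathcal{W}_{\mu,\Lambda,\lambda}$ to a point mass for $\lambda>0$, the identification $\zeta_{max}(\lambda)=\partial_\lambda p$ in (\ref{2.4.20}), and convexity/Griffiths to pass to $\lambda\to0^+$. Your treatment of general monomials via the asymptotic commutativity $[\eta_{\Lambda,0}(b),\eta_{\Lambda,0}(b^{*})]=V^{-1}$ is more explicit than the paper's, and your converse via (\ref{2.4.20})--(\ref{2.4.21}) is equivalent to (indeed slightly cleaner than) the paper's Schwarz-inequality route.

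There is, however, one genuine gap in your derivation of (\ref{4.24.4}). You claim that on the diagonal $m=n$ the source-free state gives $\omega_{\beta,\mu}\big(\eta_0(b^{*})^m\eta_0(b)^m\big)=\rho_0^{\,m}$, justified by $\mathcal{W}_{\lambda=0}$ being ``the uniform spread over the gauge orbit $\{|\zeta|=\sqrt{\rho_0}\}$''. But what \cite{LSYng} actually provides is only that $\mathcal{W}_{\lambda=0}$ is supported on the \emph{disc} of radius $\partial_\lambda p|_{0^+}=\sqrt{\rho_0}$, not that it concentrates on the boundary circle; a priori the source-free diagonal moments are only $\le\rho_0^{\,m}$. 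Without this you cannot match the two sides of (\ref{4.24.4}) on the monomial algebra, and you cannot invoke Proposition~\ref{prop:A.1} either, since its hypothesis is \emph{source-free} ODLRO rather than the $\rm{(BEC)}_{q-a}$ you are assuming; uniqueness of the ergodic decomposition of $\omega_{\beta,\mu}$ does not by itself identify the components as your quasi-average states. The paper closes exactly this gap with the two-sided sandwich (\ref{3.4.27})--(\ref{3.4.28}): gauge invariance of $H_{\Lambda,\mu}$ forces $p(\beta,\mu,\lambda)=p(\beta,\mu,-\lambda)$, hence $\partial_\lambda p|_{0^-}=-\partial_\lambda p|_{0^+}=-\sqrt{\rho_0}$, and Griffiths' lemma then squeezes $\lim_V\omega_{\beta,\mu,\Lambda}(b_0^{*}b_0/V)$ between the $\lambda\to\pm0$ quasi-averages, which coincide by (\ref{2.4.26.4}). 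This yields the key identity (\ref{3.4.28}), namely that the source-free zero-mode density equals $\rho_0$; once you have that, your moment matching and the identification with the states of Proposition~\ref{prop:A.1} go through as written.
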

%%%%%%%%%%%%%%%%%%%%%%%%%%%%%%%%%%%%%%%%%%%%%%%%%%%%%%%%%%%%%%%%%%%%%%%%%%%%%%%%%%%%%%%%%%%%%%%%%%%%%%%%%%%%
%\noindent
%\textbf{Proof}
\begin{proof} We only have to prove the direct statement, because the converse follows by applying the Schwarz
inequality to the states $\omega_{\beta,\mu,\phi}$, together with the forthcoming (\ref{3.4.27}).

We thus prove $\rm{(ODLRO)}_{q-a}$ $\Rightarrow$ $\rm{(GSB)}_{q-a}$. We first assume that some state
$ \omega_{\beta,\mu,\phi_{0}},\phi_{0}
\in [0,2\pi)$ satisfies $\rm{(ODLRO)}_{q-a}$. Then by (\ref{2.4.21}),
\begin{equation}\label{2.4.25.1}
\lim_{\lambda \to +0} \lim_{V \to \infty}
\omega_{\beta,\mu,\Lambda,\lambda}(\eta_{\Lambda,0}(b^{*})\eta_{\Lambda,0}(b))
= \lim_{\lambda \to +0} |\zeta_{max}(\lambda)|^{2} =: \rho_{{0}} > 0 \ .
\end{equation}
The above limit exists by the convexity of $p(\beta,\mu,\lambda)$ in $\lambda$ and (\ref{4.14.1}) by virtue of
(\ref{2.4.25.1}),
\begin{equation}\label{4.25.2}
\lim_{\lambda \to +0} \frac{\partial p(\beta,\mu,\lambda)}{\partial \lambda} \ne 0 \ .
\end{equation}
At the same time, (\ref{2.4.20}) shows that all states $\omega_{\beta,\mu,\phi}$ satisfy (\ref{2.4.25.1}).
Thus, $\rm{(GSB)}_{q-a}$ is broken in the states $\omega_{\beta,\mu,\phi}, \phi \in [0,2\pi)$.\,
We now prove that the original assumption (\ref{4.16}) implies that all states
$\omega_{\beta,\mu,\phi}, \phi \in [0,2\pi)$ exhibit $\rm{(ODLRO)}_{q-a}$.

Gauge invariance of $\omega_{\beta,\mu,\Lambda}$ (or equivalently $H_{\Lambda,\mu}$) yields, by (\ref{4.7}),
(\ref{4.17}),
\begin{equation}
\omega_{\beta,\mu,\Lambda,\lambda}(\eta_{\Lambda,0}(b^{*})\eta_{\Lambda,0}(b))
=\omega_{\beta,\mu,\Lambda,-\lambda}(\eta_{\Lambda,0}(b^{*})\eta_{\Lambda,0}(b)) \ .
\label{2.4.26.1}
\end{equation}
Again by (\ref{4.7}), (\ref{4.12.1}) and gauge invariance of $H_{\Lambda,\mu}$,
\begin{equation*}
\lim_{\lambda \to -0} \frac{\partial p(\beta,\mu,\lambda)}{\partial \lambda}=
-\lim_{\lambda \to +0} \frac{\partial p(\beta,\mu,\lambda)}{\partial \lambda} \ ,
\end{equation*}
and, since by convexity the derivative ${\partial p(\beta,\mu,\lambda)}/{\partial \lambda}$ is monotone
increasing, we find
\begin{equation}
\lim_{\lambda \to +0} \frac{\partial p(\beta,\mu,\lambda)}{\partial \lambda}
= \lim_{\lambda \to +0} \zeta_{max}(\lambda) = \sqrt{\rho_{0}} \ ,
\label{2.4.26.2}
\end{equation}
\begin{equation}
\lim_{\lambda \to -0} \frac{\partial p(\beta,\mu,\lambda)}{\partial \lambda}
= -\lim_{\lambda \to +0} \zeta_{max}(\lambda)= -\sqrt{\rho_{0}} \ .
\label{2.4.26.3}
\end{equation}
Again by (\ref{2.4.26.1}),
\begin{equation}
\lim_{\lambda \to -0}\lim_{V \to \infty}
\omega_{\beta,\mu,\Lambda,\lambda}(\eta_{\Lambda,0}(b^{*})\eta_{\Lambda,0}(b))
= \lim_{\lambda \to +0}\lim_{V \to \infty}
\omega_{\beta,\mu,\Lambda,\lambda}(\eta_{\Lambda,0}(b^{*})\eta_{\Lambda,0}(b)) \ .
\label{2.4.26.4}
\end{equation}
By \cite{LSYng}, the weight ${\cal W}_{\mu,\lambda}$ is, for $\lambda=0$, supported on a disc with radius
equal to the right-derivative (\ref{4.25.2}). Convexity of the pressure as a function of $\lambda$ implies
\begin{eqnarray*}
\frac{\partial p(\beta,\mu,\lambda_{0}^{-})}{\partial \lambda_{0}^{-}} \le \lim_{\lambda \to -0}\frac{\partial
p(\beta,\mu,\lambda)}{\partial \lambda}
\le \lim_{\lambda \to +0}\frac{\partial p(\beta,\mu,\lambda)}{\partial \lambda} \le \frac{\partial p(\beta,\mu,
\lambda_{0}^{+})}{\partial \lambda_{0}^{+}} \ ,
\end{eqnarray*}
for any $\lambda_{0}^{-}<0<\lambda_{0}^{+}$. Therefore, by the Griffiths lemma (see e.g. \cite{Gri66},
\cite{LSYng}) one gets
\begin{eqnarray}
\lim_{\lambda \to -0}\lim_{V \to \infty}
\omega_{\beta,\mu,\Lambda,\lambda}(\eta_{\Lambda,0}(b^{*})\eta_{\Lambda,0}(b))
&\le& \lim_{V \to \infty} \omega_{\beta,\mu,\Lambda}(\frac{b_{{0}}^{*}b_{{0}}}{V}) \nonumber \\
&\le& \lim_{\lambda \to +0}\lim_{V \to \infty}
\omega_{\beta,\mu,\Lambda,\lambda}(\eta_{\Lambda,0}(b^{*})\eta_{\Lambda,0}(b)) \ .
\label{3.4.27}
\end{eqnarray}
Then (\ref{2.4.26.4}) and (\ref{3.4.27}) yield
\begin{equation}
\lim_{V \to \infty} \omega_{\beta,\mu,\Lambda}(\frac{b_{{0}}^{*}b_{{0}}}{V})=\\
\lim_{\lambda \to +0}\lim_{V \to \infty}
\omega_{\beta,\mu,\Lambda,\lambda}(\eta_{\Lambda,0}(b^{*})\eta_{\Lambda,0}(b)) \ , \  \
\forall \phi \in [0,2\pi) \ .
\label{3.4.28}
\end{equation}
This proves that all $\omega_{\beta,\mu,\phi}, \phi \in [0,2\pi)$ satisfy $\rm{(ODLRO)}_{q-a}$, as asserted.

By (\ref{2.4.20}) and (\ref{2.4.26.2}) one gets (\ref{4.24.1}) and (\ref{4.24.2}). Then (\ref{4.24.4}) is a
consequence of the gauge-invariance of $\omega_{\beta,\mu}$. Ergodicity of the states
$\omega_{\beta,\mu,\phi}, \phi \in [0,2\pi)$ follows from
%\begin{itemize}
%\color{red}
%\item
(\ref{4.24.1}), (\ref{4.24.2}), and (\ref{3.4.28}), see  Definition \ref{SSB}(ii).
%\end{itemize}

An equivalent construction is possible using the \textit{Weyl algebra} instead of the \textit{polynomial algebra},
see \cite{Ver}, Ch.4.3.2, and references given there for Proposition \ref{prop:A.1}.
%and similarly we could have proceeded so here.
The limit along a subnet in the $(\lambda,V)$ variables exists by weak*-compactness, and,
by asymptotically abelianness of the Weyl algebra  for space translations (see, e.g., \cite{BR97}, Example 5.2.19),
the ergodic decomposition (\ref{4.24.4}), which is also a central decomposition, is unique. Thus, the
$\omega_{\beta,\mu,\phi}, \phi \in [0,2\pi)$ coincide with the states constructed in Proposition \ref{prop:A.1}.
%q.e.d.
\end{proof}
%%%%%%%%%%%%%%%%%%%%%%%%%%%%%%%%%%%%%%%%%%%%% Remark %%%%%%%%%%%%%%%%%%%%%%%%%%%%%%%%%%%%%%%%%%%%%%%%%%%%%%%%%%%
\begin{rem}\label{rem:3.5}
Our Remark \ref{rem:3.3} and Theorem \ref{theo:3.4} elucidate a problem discussed in \cite{LSYng}.
In this paper the authors defined a generalised  Gauge Symmetry Breaking via quasi-average $\rm{(GSB)}_{q-a} \ $,
i.e. by
$\lim_{\lambda \to +0} \lim_{V \to \infty} \omega_{\beta,\mu,\Lambda,\lambda}(\eta_{\Lambda, 0}(b)) \ne 0 \ $.
(If it involves other than the gauge group, we denote this by $\rm{(SSB)}_{q-a}$.)
Similarly they modified definition of the one-mode condensation denoted by $\rm{(BEC)}_{q-a}$ (\ref{2.4.25.1}),
and established the equivalence: $\rm{(GSB)}_{q-a} \Leftrightarrow \rm{(BEC)}_{q-a}$.
They also posed a problem: whether $\rm{(BEC)}_{q-a} \Leftrightarrow \rm{BEC}\ $?

In Theorem \ref{theo:3.4} we show that $\rm{(GSB)}_{q-a}$ (\ref{4.24.1}) implies $\rm{(ODLRO)}_{q-a}$, or
$\rm{(BEC)}_{q-a}$.
%\begin{itemize}
%\color{red}
%\item
Note that for the \textit{zero-mode} BEC (Definition \ref{defi:2.1}), the same theorem shows that their
question is answered in the \textit{affirmative}. This is due to the crucial fact that the state
$\omega_{\beta,\mu}$ is gauge-invariant, which is consistent with the decomposition (\ref{4.24.4}) and leads to
the inequalities (\ref{3.4.27}).

On the other hand, for another (but nonetheless equally important, as argued in (\ref{Int-TypeIII}))
%\end{itemize}
types of condensation the comparison (implication "$\Rightarrow$", or equivalence "$\Leftrightarrow$", see
Remark \ref{rem:3.3}) between q-a and \textit{non} q-a values may \textit{fail}.

For example, we note that for PBG the value of $\rm{(BEC)}_{q-a}$ is strictly \textit{larger} the zero-mode
BEC for anisotropy $\alpha_1 \geq 1/2$, and $\rm{(BEC)}_{q-a} \nRightarrow $ BEC for $\alpha_1 > 1/2 $,
see Section \ref{sec:gBEC}. {{One observes the similar phenomenon
$\rm{(BEC)}_{q-a} \nRightarrow $ BEC for the interacting Bose-gas (\ref{Int-TypeIII}). Although for the
both cases (PBG and (\ref{Int-TypeIII})) we get $\rm{(BEC)}_{q-a} \Leftrightarrow $ gBEC, see
Section \ref{sec:gBEC-BQ-A}.}} Therefore, in the general case the answer to the question in Remark \ref{rem:3.5}
is \textit{negative}.
%\begin{itemize}
%\color{red}
%\item
Note that the fact established in Theorem \ref{theo:3.4}, that the quasi-averages lead to ergodic states
clarifies an important conceptual aspect of the quasi-average \textit{trick}.
%\end{itemize}
\end{rem}

%(\ref{4.24.2}) (and Definition \ref{SSBq-a}), coincides with GSB (Definition \ref{SSB} and(\ref{4.24.4})),
%and that ODLRO/BEC $\rho_{0}$ is equal to $\rm{(ODLRO)}_{q-a}/\rm{(BEC)}_{q-a}$ (\ref{3.4.24.3}).
%%%%%%%%%%%%%%%%%%%%%%%%%%%%%%%%%%%%%%%%%%%%% Remark %%%%%%%%%%%%%%%%%%%%%%%%%%%%%%%%%%%%%%%%%%%%%%%%%%%%%%%%%%%
\begin{rem}\label{rem:4.21}
The states $\omega_{\beta,\mu,\phi}$ in Theorem \ref{theo:3.4} have the property ii) of Proposition \ref{prop:A.1},
i.e., if $\phi_{1} \ne \phi_{2}$, then
$\omega_{\beta,\mu,\phi_{1}} \ne \omega_{\beta,\mu,\phi_{2}}$. By a theorem of Kadison \cite{Kadison}, two factor
states are either disjoint or quasi-equivalent,
%(see Remark 3.1 and references given there),
and thus the states $\omega_{\beta,\mu,\phi}$ for different $\phi$ are mutually disjoint.
%This fact has a simple explanation: only for a finite system is the Bogoliubov transformation (\ref{4.9.2})
%(which also applies to the interacting system), which connects different $\phi$, unitary: for infinite systems
%one has to make an infinite change of an extensive observable $b_{{0}} \sqrt{V}$, and mutually disjoint sectors
%result.(compare, again, remark 3).
This phenomenon also occurs for spontaneous magnetisation in quantum spin systems.
%in correspondence to (\ref{2.31}) and
It is in this sense that the word ''degeneracy'' must be understood, compare with the discussion in \cite{Bog70}.
\end{rem}
%%%%%%%%%%%%%%%%%%%%%%%%%%%%%%%%%%%%%%%%%%%%%%%%%%%%%%%%%%%%%%%%%%%%%%%%%%%%%%%%%%%%%%%%%%%%%%%%%%%%%%%%%%%%%%%%%%
%%%%%%%%%%%%%%%%%%%%%%%%%%%%%%%%%%%%%%%%%%%%%%%%%%%%%%%%%%%%%%%%%%%%%%%%%%%%%%%%%%%%%%%%%%%%%%%%%%%%%%%%%%%%%%
\section{Bogoliubov quasi-averages and critical quantum fluctuations} \label{BQ-A-QFl}
%%%%%%%%%%%%%%%%%%%%%%%%%%%%%%%%%%%%%%%%%%%%%%%%%%%%%%%%%%%%%%%%%%%%%%%%%%%%%%%%%%%%%%%%%%%%%%%%%%%%%%%%%%%%%%
The aim of this section is to show that the scaled breaking symmetry external sources may have a nontrivial
impact on critical quantum fluctuations. This demonstrates that quasi-averages are helpful not only to study
phase transitions via $\rm{(SSB)}_{q-a}$, but also to analyse the corresponding \textit{critical} and, in
particular, commutative and noncommutative \textit{quantum} fluctuations. To this end we use for illustration
an example of a concrete model that manifests quantum phase transition with discrete $\rm{(SSB)}_{q-a}$
\cite{VZ1}
%%%%%%%%%%%%%%%%%%%%%%%%%%%%%%%%%%%%%%%%%%%%%%%%%%%%%%%%%%%%%%%%%%%%%%%%%%%%%%%%%%%%%%%%%%%%%%%%%%%%%%%%%%%%%%
\subsection{Algebra of fluctuation operators.}\label{Alg-Fl-Op}

We start this section by a general setup to recall the concept of \textit{quantum fluctuations} via the
noncommutative Central Limit Theorem (CLT) and the corresponding to them Canonical Commutation Relations (CCR).

To describe any ($\mathbb{Z}^d$-lattice) quantum statistical model, one has to
start from {\it microscopic} dynamical system, which is a triplet ($\mathcal{A},\omega,\alpha_t$) where:\\
\hspace*{1 cm}(a) $\mathcal{A}=\cup_\Lambda \mathcal{A}_\Lambda$ is the quasi-local algebra of observables,
here $\Lambda$ are bounded subset of $\mathbb{Z}^d$ and $[\mathcal{A}_{\Lambda'},\mathcal{A}_{\Lambda''}]=0$ if
$\Lambda'\cap\Lambda''=\emptyset $.\\
\hspace*{1 cm}(b) $\omega$ is a {\it state} on $\mathcal{A}$.  Let $\tau_x$ be space
translation automorphism of translations over the distance $x\in\mathbb{Z}^d$,
i.e., the map $\tau_x:A\in \mathcal{A}_\Lambda\rightarrow\tau_x(A)\in \mathcal{A}_{\Lambda+x}$. Then the
state $\omega$ is {\it translation-invariant} if $\omega\circ\tau_x(A)\equiv\omega(\tau_x(A))=\omega(A)$ and
{\it space-clustering} if $\lim_{\vert x\vert\rightarrow\infty}\omega(A\tau_x(B))= \omega(A)\omega(B)$ for
$A,B\in\mathcal{A}$.\\
\hspace*{1 cm}(c) $\alpha_t$ is dynamics described by the family of local Hamiltonians
$\{{H}_\Lambda\}_{\Lambda \subset \mathbb{Z}^d}$. Usually, $\alpha_t$ is defined as a norm limit of
the local dynamics:
$\alpha_t(A):= \lim_\Lambda\exp(it{H}_\Lambda)A\exp(-it{H}_\Lambda)$, i.e.,
$\alpha_t:\mathcal{A}\rightarrow\overline{\cal A}$-norm-closure
of $\mathcal{A}$.  For {\it equilibrium states} one assumes that $\omega\circ\alpha_t=\omega$
({\it time invariance}).

Note that usually one assumes also that the space and time translations {\it commute}:
$\tau_x(\alpha_t(A))=\alpha_t(\tau_x((A))$, where $A\in \mathcal{A}_{\Lambda}$
and $\Lambda \subset \mathbb{Z}^d$.

On the way from the {\it micro system} $(\mathcal{A}, \, \omega,\alpha_t)$ to {\it macro
system} of physical observables, one has to distinguish \textit{two} essentially different classes.

The first one ({\textit{macro} I}) corresponds to the \textit{Weak Law of Large Numbers} (WLLN).
It is well-suited for description of {\it order parameters} in the system.
Formally this class of observables is defined as follows:
for any $A\in\mathcal{A}$ the local space \textit{mean} mapping $m_\Lambda:A\rightarrow
m_\Lambda(A):= \vert\Lambda\vert^{-1} \sum_{x\in\Lambda}\tau_x(A)$.
Then, the limiting map $m:A\rightarrow {\cal C}$
\begin{equation}\label{1.1}
m(A)= w\!-\!\lim_\Lambda m_\Lambda(A) \ , \ \forall A\in\mathcal{A} \ ,
\end{equation}
exists in the $\omega$-\textit{weak} topology, induced by the \textit{ergodic}, see (b), state $\omega$.

%\eqno{(1.1)}
Let $m(\mathcal{A})=\{m(A):  A\in\mathcal{A}\}$.  Then the {\it macro system} I has the
following properties:\\
\hspace*{1 cm}(Ia) $m(\mathcal{A})$ is a set of observables {\it at infinity}  because
[$m(\mathcal{A}), \mathcal{A}]=0$.\\
\hspace*{1cm}(Ib) $m(\mathcal{A})$ is an {\it abelian} algebra and $m(A)=\omega(A) \cdot \I $. Hence
the states on $m(\mathcal{A})$ are probability measures.\\
\hspace*{1 cm}(Ic) Since $m(\tau_a(A)) = m(A)$, the map $m$: $\mathcal{A}\rightarrow m(\mathcal{A})$ is
\textit{not} injective. This is a mathematical expression of the {\it coarse graining} under the WLLN.\\
\hspace*{1 cm}(Id) The {\it macro-dynamics} $\tilde{\alpha}_t (m(A)):= m(\alpha_t (A))$ induced by the
micro-dynamics (c) on $m(\mathcal{A})$ is {\it trivial} since
$m(\alpha_t (A))=\omega(\alpha_t(A))\cdot \I=\omega(A)\cdot \I =m(A)$.

The second class of {\it macro-observable} (\textit{macro} II) correspond to the {\it
Quantum Central Limit} (QCL), which is well-suited for description of
(quantum) fluctuations and, in particular, for description of collective and
elementary excitations (\textit{phonons}, \textit{plasmons}, \textit{excitons}, etc) in many body
quantum systems \cite{Ver}.

To proceed in construction of the \textit{macro} II one has to be more precise.
Let $A\in\mathcal{A}_{sa}:=\{B\in\mathcal{A}:  B=B^\ast\}$
be self-adjoint operators in a Hilbert
space $\mathfrak{H}$.  Then one can define the local
mapping $F^\d_{k,\Lambda}:  A\rightarrow F^\d_{k,\Lambda}(A)$, where
\begin{equation}\label{1.2}
F^{\d_A}_{k,\Lambda}(A): =\frac{1}{\vert\Lambda\vert^{\frac{1}{2}+\d_A}}\sum_{x\in\Lambda}(\tau_x(A)-
\omega(A))e^{ikx} \, , \ \ k \, , \ \d_A \in \mathbb{R} \ .
\end{equation}
%\eqno{(1.2)}
This is nothing but the {\it local fluctuation operator} for the mode k.  If $\d_A =0$, this fluctuation
operator is called {\it normal}.  The next important concept is due to \cite{GVV1}-\cite{GV} and a further
development in \cite{Re}:\\
\textbf{Quantum Central Limit Theorem.} Let
\begin{equation*}
\g_\omega(r):=\sup_{\Lambda,\Lambda'}\sup_{A\in {\cal
A}_\Lambda\atop B\in {\cal A}_{\Lambda'}}\left\{\frac
{\omega(AB)-\omega(A)\omega(B)}{\parallel A\parallel \parallel
B\parallel}:\;r\leq dist( \Lambda,\Lambda')\right\} \ \ {\rm{and}} \ \
\sum_{x\in\mathbb{Z}^d}\g_\omega(\vert x\vert)<\infty \ .
\end{equation*}
%and let
%\begin{equation*}
%\sum_{x\in\mathbb{Z}^d}\g_\omega(\vert x\vert)<\infty \ .
%\end{equation*}
Then, for any $A\in\mathcal{A}_{sa}$, the corresponding limiting \textit{characteristic} function exists
for the normal fluctuation operator ($\d_A=0$) for the \textit{zero-mode} $k=0$:
\begin{equation}\label{1.3}
\lim_\Lambda\omega(e^{i\, u \, F_\Lambda(A)})=e^{- {u^2}S_\omega(A,A)/{2}} \ \ ,\ u\in\mathbb{R} \ ,
\end{equation}
%\eqno{(1.3)}
where sesquilinear form
$S_\omega(A,B):={\rm{Re}}\sum_{x\in\mathbb{Z}^d}\omega((A-\omega(A)) \, \tau_x(B-\omega(B)))$, for
$A,B \in\mathcal{A}_{sa}$.
\\
\hspace*{1 cm}(IIa) The result (\ref{1.3}) establishes the meaning of the QCL
for normal fluctuation operators. If (\ref{1.3}) exists for $\d_{A,B} \not= 0$ with the
modified sesquilinear form
\begin{equation}\label{1.4}
S_{\omega,\d_{A,B}}(A,B)=\lim_\Lambda {\rm{Re}}\frac{1}{\vert \Lambda\vert^{\d_A+\d_B}}
\sum_{x\in\mathbb{Z}^d}\omega((A-\omega(A))\ \tau_x(B-\omega(B))) \ ,
\end{equation}
%\eqno{(1.4)}
we say that QCL exists for the zero-mode {\it abnormal fluctuations}:
\begin{equation}\label{1.5}
\lim_\Lambda F^{\d_A}_\Lambda(A)=F^{\d_A}(A) \ .
\end{equation}
%\eqno{(1.5)}
The fluctuation operators $\{F^{\d_A}(A)\}_{A \in\mathcal{A}_{sa}}$ act in a Hilbert space $\mathcal{H}$,
which is defined by the corresponding to (\ref{1.3}) and (\ref{1.4}) Reconstruction Theorem.\\
\hspace*{1 cm}(IIb) To this end we consider $\mathcal{A}_{as}$ as a {\it vector-space} with {\it symplectic} form
$\sigma_\omega(\cdot,\cdot)$, which is correctly defined for the case $\d_A+\d_B=0$ by the WLLN :
\begin{equation}\label{1.6}
i\sigma_\omega(A,B)\cdot \I=\lim_\Lambda[F^{\d_A}_\Lambda(A),F^{\d_B}_\Lambda(B)]=
2\, i \, {\rm{Im}}\sum_{y\in\mathbb{Z}^d}(\omega(A\tau_y(B))-\omega(A)\omega(B)) \ .
\end{equation}
%\eqno{(1.6)}
Suppose that $W(\mathcal{A}_{sa},\sigma_\omega)$ is the {\it Weyl algebra}, i.e., the family of the Weyl operators
$W\, : \, \mathcal{A}_{sa}\ni A \mapsto W(A)$ such that
\begin{equation}\label{1.7}
W(A)W(B)=W(A+B)e^{- i \, \sigma_\omega(A,B)/2} \ ,
\end{equation}
where operators $A,B\in\mathcal{A}_{sa}$, acting in the Hilbert space $\mathfrak{H}$.\\
\textbf{Reconstruction Theorem.}  Let $\tilde{\omega}$ be a {\it quasi-free} state on the Weyl algebra
$W(\mathcal{A}_{sa},\sigma_\omega)$, which is defined by the sesquilinear form $S_\omega(\cdot,\cdot)$:
\begin{equation}\label{1.8}
\tilde{\omega}(W(A)):=e^{- S_\omega(A,A)/2} \ .
\end{equation}
Since (\ref{1.7}) implies that $W(A):=e^{i\Phi (A)}$, where $\Phi : A \mapsto \Phi (A)$ are {\it boson
field} operators acting in the Canonical Commutation Relations (CCR) representation Hilbert space
${\cal H}_{\tilde{\omega}}$ corresponding to the state $\tilde{\omega}$,
the relations (\ref{1.2})-(\ref{1.8}) yield {\it identifications} of the spaces:
${\cal H}={\cal H}_{\tilde{\omega}}$, and of the operators:
\begin{equation}\label{1.9}
\lim_\Lambda F^{\d_A}_{\Lambda}(A) =: F^{\d_A}(A)=\Phi (A) \ .
\end{equation}
\hspace*{1cm}(IIc) The {\it Reconstruction Theorem} gives a transition from the
micro-system ($\mathcal{A}_{sa},\omega$) to the macro-system of {\it fluctuation
operators} ($F(\mathcal{A}_{sa},\sigma_\omega),\tilde{\omega}$). We note that
$F(\mathcal{A}_{sa},\sigma_\omega)=\{F^{\d_A}(A)\}_{A\in {\cal H}_{sa}}$ is the
{\it CCR-algebra} on the symplectic space $(\mathcal{A}_{sa},\sigma_\omega)$, see
(\ref{1.7})-(\ref{1.9}).\\
\hspace*{1 cm}(IId) The map $F:\mathcal{A}_{sa}\rightarrow
F(\mathcal{A}_{sa},\sigma_\omega)$ is not injective (the zero-mode coarse graining).
For example, $\tilde{\tau}_x(F(A)):= F(\tau_x(A))=F(A)$, but it has a non-trivial
{\it macro-dynamics} $\tilde{\alpha}_t(F(A)):= F(\alpha_t(A))$.
Therefore, the {\it macro-system} II defined by the algebra of fluctuation
operators is the {\it triplet}
($F(\mathcal{A}_{sa},\sigma_\omega),\tilde{\omega},\tilde{\alpha}_t)$.
%\end{document}

%%%%%%%%%%%%%%%%%%%%%%%%%%%%%%%%%%%%%%%%%%%%%%%%%%% Subsection %%%%%%%%%%%%%%%%%%%%%%%%%%%%%%%%%%%%%%%%%%%%%%%%%
%\subsection{Fluctuation Operators}
%%%%%%%%%%%%%%%%%%%%%%%%%%%%%%%%%%%%%%%%%%%%%%%%%%%%%%%%%%%%%%%%%%%%%%%%%%%%%%%%%%%%%%%%%%%%%%%%%%%%%%%%%%%%%%%%
Identification of the algebra of the fluctuation operators $F$(\h$_{sa},\sigma_\omega)$ for a given
micro-system (\h,$\omega,\alpha_t$) with the CCR-algebra of the boson field operators
supplies a mathematical description of so-called {\it collective
excitations} (phonons, plasmons, excitons etc) in the \textit{pure} state $\omega$.

The same approach gives as well a break into the mathematical foundation of
another physical concept:  the {\it Linear Response Theory} \cite{GVV2}.
In the latter case, it became clear that {\it algebra of fluctuations} is more
sensible with respect to "gentle" perturbations of the microscopic
Hamiltonian by \textit{external sources} than, e.g., {\it algebra at infinity} $m($\h).  This property
gets even more sound if the equilibrium state $\omega$ (being pure) belongs
to the critical domain \cite{VZ1}. In this case, perturbations of microscopic
Hamiltonien, which do not change equilibrium state $\omega$ ("gentle" pertubations), can produce
{\it different} algebras of fluctuations independent of quantum or classical nature of the micro-system.

As we learned in Section \ref{sec:gBEC-BQ-A} the idea of perturbation of Hamiltonien to produce
{\it pure} equilibrium states comes back to the Bogoliubov quasi-averages.
Later this method was generalised to include the construction of the {\it mixed states} \cite{BZT}.
We recall that it can be formulated as follows:

(i) Let $\{B_l=\tau_l \,(B)\}_{l\in\mathbb{Z}}$ be operators breaking the symmetry of the initial system
\begin{equation*}
H_\Lambda(h) := H_\Lambda-\sum_{l\in\Lambda}h_lB_l\;\;,\;\;\; h_l\in\reel^1 \ .
\end{equation*}
%\eqno{(1.9)}

(ii) Then the limiting states for $h_l=h$
\begin{equation}\label{eq:1.10}
\l-\r=\lim_{h\rightarrow 0}\lim_{\Lambda}\l-\r_{\Lambda,h} \ ,
\end{equation}
%\eqno{(1.10)}
pick out pure states with respect to decomposition corresponding the
symmetry (Bogoliubov's quasi-averages).

(iii) If the external field $h=\widehat{h}/\vert \Lambda\vert^\alpha$, then the obvious generalization
of (\ref{eq:1.10}) either coincides with \textit{pure} states $(\alpha<\alpha_c)$ or give a
family of \textit{mixed} states enumerated by $\widehat{h}$ and $\alpha\geq\alpha_c$, see \cite{BZT}.
%For an interesting application of this concept in the
%nontranslation-invariant case $h_l=\widehat{h}_l/\vert \Lambda\vert^\alpha$ see [13].
%%%%%%%%%%%%%%%%%%%%%%%%%%%%%%%%%%%%%%%%%%%%%%%%%%%%%%%%%%%%%%%%%%%%%%%%%%%%%%%%%%%%%%%%%%%%%%%%%%

As it was found in \cite{VZ1}, the algebra of fluctuations for a
quantum model of ferroelectric (\textit{structural} phase transitions) depends on the
parameter $\alpha$ in the critical domain (below the critical line) even for
the pures states, i.e., for $\a<\a_c=1$ one obtains for correlation critical exponents (\ref{1.2}):
$\d_Q=\a/2$, while $\d_P=0$ (for $T\not= 0$, $T$ is the temperature).  Here $A:=Q$ and
$B:=P$ are respectively the atomic displacement and momentum operators in the
site ($l=0$) of $\mathbb{Z}$. The second observation of \cite{VZ1}
concerns the {\it quantum nature} of the critical fluctuations $F^{\d}(\cdot)$, i.e.
fluctuations in the \textit{pure} state $\omega$, which belongs to the critical line.
It was shown that expected abelian properties of critical fluctuations
can changes into non-abelian commutations between $F^{\d_Q}(Q)$ and
$F^{\d_P}(P)$ with $\d_Q=-\d_P>0$, at the quantum critical point $(T=0, \lambda =\lambda_c)$.  Here,
$\lambda :=\hbar/\sqrt{m}$ is the \textit{quantum parameter} of the model, where $m$ is the
mass of atoms in the nodes of lattice $\mathbb{Z}$.

%The aim of the present section is to study the algebra of fluctuation operators
%and its instability driven by quantum fluctuations on the critical line.
%As far as corresponding critical equilibrium state use
Since usually one has a long-range correlations on the critical line, the {\it critical} fluctuations are
anticipated to be sensitive with respect to the above "gentle" perturbations
$\ h=\widehat{h}/\vert \Lambda\vert^\alpha$. On the other hand, they have to be also sensitive to
decay of a \textit{direct} interaction between particles:  in our model, the decay of
the harmonic force matrix elements is given by
\begin{equation}\label{eq:1.11}
\phi_{l,l'}\sim\vert l-l'\vert^{-(d+\s)}\;\; {\rm{for}} \;\;\vert l-l'\vert \longrightarrow \infty \ .
\end{equation}
If $\sigma\geq 2$, then one classifies interaction (\ref{eq:1.11}) as a {\it short-range}, whereas the
case $0<\sigma<2$ as {\it long-range}, because the corresponding lattice Fourier-transform has the
following two types of asymptotics for $k\rightarrow 0$:
\begin{equation}\label{eq:1.12}
\tilde{\phi}(k)\sim\left\{\begin{array}{ll} a^\sigma k^\sigma+o(k^\sigma) \ ,
& 0<\sigma<2 \ , \\ a^2k^2+o(k^2) \ , & \sigma\geq 2 \ . \end{array} \right.
%\eqno{(1.12)}
\end{equation}

Therefore, our purpose is to find exponents $\d_A$ as the function of the
parameter $\a$ and $\sigma$ for a quantum ferroelectric model.  Note that $\d_Q=\d_Q(\a,\sigma)$ is
directly related to the critical exponent $\eta$ describing decay of the two-point correlation function
for displacements on the critical line: $\eta=2-2d\d_A$ ,  \cite{APS}.

%The paper is organized as follows. In the next section, we recall our model and describe its thermodynamics
%related to fluctuations on the critical line.  Detailed analysis of the
%latter as well as the main result:  calculation of $\d_Q(\a,\sigma)$ and $\d_P(\a,\sigma)$ are collected
%in section 3?.  Concluding remarks are postponed up to section 4?.

%%%%%%%%%%%%%%%%%%%%%%%%%%%%%%%%%%% subsection %%%%%%%%%%%%%%%%%%%%%%%%%%%%%%%%%%%%%%%%%%%%%%%%%%%%%%%%%%%%%%%%
\subsection{Quantum phase transition, fluctuations and quasi-averages} \label{QPT-Fluct-Q-A}
%\subsection{The Ferroelectric Model}
%%%%%%%%%%%%%%%%%%%%%%%%%%%%%%%%%%%%%%%%%%%%%%%%%%%%%%%%%%%%%%%%%%%%%%%%%%%%%%%%%%%%%%%%%%%%%%%%%%%%%%%%%%%%%%%
Let $\mathbb{Z}$ the d-dimensional square lattice.  At each lattice site $l$
occupied by a particle with mass $m$, we associate the position operator
$Q_l\in\reel^1$ and the momentum operator $P_l=(\hbar/i)(\partial/\partial
Q_l)$ in the Hilbert space $\mathcal{H}_l = L^2(\reel^1, dx)$.  Let $\Lambda$ be a finite cubic
subset of $\mathbb{Z}$, $V=\vert\Lambda\vert$ and the set $\Lambda^\ast$ is \textit{dual} to $\Lambda$
with respect to \textit{periodic} boundary conditions.  The local Hamiltonian $H_\Lambda$ of the model is
a self-adjoint operator on domain ${\rm{dom}}(H_\Lambda) \subset \mathcal{H}_\Lambda$, given by
\begin{equation}\label{2.1}
H_\Lambda=\sum_{l\in\Lambda}\frac{P_l^2}{2m}+\frac{1}{4}\sum_{l, \, l' \, \in \Lambda}\phi_{l,l'}(Q_l-
Q_{l'})^2+\sum_{l\in \Lambda}U(Q_l)-h\sum_{l\in \Lambda}Q_l \ .
\end{equation}
Here the local Hilbert space $\mathcal{H}_\Lambda := \otimes_{l \in \Lambda} \mathcal{H}_l $.
Note that the second term of (\ref{2.1}) represents the harmonic interaction between particles, the last term
represents the action of an external field and the third one is the
anharmonic on-site potential acting in each $l \in \mathbb{Z}$. Recall that potential $U$ must have a
\textit{double-well} form to describe a \textit{displacive} structural phase transition
attributed to the \textit{one}-component ferroelectric \cite{APS}. For example:
$U(x)=\frac{a}{2}Q_l^2+W(Q_l^2)$, $a<0$, with $W(x)=\frac{1}{2}bx^2$, $b>0$. Another example is
is a nonpolynomial $U$, such that $a>0$ and $W(x)=\frac{1}{2}b\exp{(-\eta x)}\;,\;\eta>0$ for $b>0$.
Then (\ref{2.1}) becomes
\begin{equation}\label{2.2}
H_\Lambda=\sum_{l\in \Lambda}\frac{P_l^2}{2m}+\frac{1}{4}\sum_{l, \, l' \, \in \Lambda}\phi_{l,l'}
(Q_l-Q_{l'})^2+\frac{a}{2}\sum_{l\in \Lambda}Q_l^2 \ + $$ $$ + \ \sum_{l\in
\Lambda}W(Q_l^2)-h\sum_{l\in \Lambda}Q_l \ .
\end{equation}
Recall that model (\ref{2.2}) manifests a structural phase transition, breaking $Z_2$-symmetry
$\{Q_l \rightarrow - Q_l\}_{l\in \mathbb{Z}}$ at low temperature, if the quantum parameter
$\lambda < \lambda_c$,  \cite{MPZ}, \cite{AKKR}.

We comment that a modified model (\ref{2.2}) can be solved exactly if one applies the following
approximation:
\begin{equation*}
\sum_{l\in\Lambda} W(Q_l^2)\longrightarrow  V \, W(\frac{1}{V}\sum_{l\in\Lambda}Q_l^2) \ ,
\end{equation*}
known as the concept of \textit{self-consistent} phonons (SCP), see \cite{APS}. This yields a model with
Hamiltonian
\begin{equation}\label{2.3}
H_\Lambda^{SCP}=\sum_{l\in
\Lambda}\frac{P_l^2}{2m}+\frac{1}{4}\sum_{l,\, l' \, \in \Lambda}\phi_{l,l'}
(Q_l-Q_{l'})^2+\frac{a}{2}\sum_{l\in \Lambda}Q_l^2 \ +$$
$$ \ + V \, W(\frac{1}{V}\sum_{l\in \Lambda}Q_l^2) - h \sum_{l\in \Lambda}Q_l \ ,
\end{equation}
that can be solved by the Approximating Hamiltonian Method (AHM) \cite{BBZKT}, see \cite{PT} and \cite{VZ1}.
Then the \textit{free-energy} density for Hamiltonian $H_\Lambda(c)$, which is
approximating for $H_\Lambda^{SCP}$ (\ref{2.3}), is
\begin{equation}\label{2.31}
f_{\Lambda}[H_\Lambda(c)] : =
 - \frac{1}{\beta V} \ln {\rm{Tr}}_{{\mathcal{H}}_{\Lambda}} e^{-\beta H_{\Lambda}(c)} \ , \ \ \
 \beta := \frac{1}{k_B T} \ ,
\end{equation}
where $k_B$ is the Boltzmann constant. Since the AHM yields that
\begin{equation}\label{2.32}
H_\Lambda(c) := \sum_{l\in
\Lambda}\frac{P_l^2}{2m}+\frac{1}{4}\sum_{l,\, l' \, \in \Lambda}\phi_{l,l'}
(Q_l-Q_{l'})^2+\frac{a}{2}\sum_{l\in \Lambda}Q_l^2 \ +$$
$$ \ + V \, \left[W(c) + W'(c)\left(\frac{1}{V}\sum_{l\in \Lambda}Q_l^2 - c\right)\right] -
h \sum_{l\in \Lambda}Q_l \ ,
\end{equation}
the \textit{free-energy} density (\ref{2.31}) gets the explicit form
$$ f_{\Lambda}[\H]=\frac{1}{\beta V}\sum_{q\in\Lambda^*}
\ln{\left[2\sinh{\frac{\beta\lambda\Omega_q(\C)}{2}}\right]}-
\frac{1}{2}\frac{h^2}{\Dc}$$
$$+[W(\C)-\C W'(\C)] \ . $$
Here $c =\C$ is a solution of the self-consistency equation:
\begin{equation}\label{2.5}
c=\frac{h^2}{\Delta^2(c)}+\frac{1}{V}\sum_{q\in\Lambda^*}\frac{\lambda}{2\O(c)}
\coth{\frac{\beta\lambda}{2}\O(c)} \ .
\end{equation}
The spectrum $\Omega_q(\C)$, $q\in\Lambda^*$, of $\H$ is defined by the harmonic spectrum $\omega_q$ and
by the gap $\Delta(\C)$:
$$\Omega_q^2(\C):=\Delta(\C)+\omega_q^2 \ , $$
%\eqno{(2.6)}$$
$$\Delta(\C):=a+2W'(\C) \ , $$
%\eqno{(2.7)}$$
$$\omega_q^2:=:\tilde{\phi}(0)-\tilde{\phi}(q) \ , \
\tilde{\phi}(q) := \sum_{l\in\Lambda}\phi_{l,0}\exp{(-iql)} \ . $$
Finally, $\lambda = {\hbar}/{\sqrt{m}}$ is the quantum parameter of the model and $\beta=(k_B T)^{-1}$,
where $T$ is the temperature.

The approximating Hamiltonian method gives for $\H \geq 0$ the following condition of \textit{stability}
in thermodynamic limit $\Lambda\rightarrow\mathbb{Z}$:
\begin{equation}\label{2.12}
\Delta(c_{h}(T,\lambda))=\lim_\Lambda \Delta(c_{\Lambda, h}(T,\lambda)) \geq 0 \ , \
c_{h}(T,\lambda):= \lim_\Lambda c_{\Lambda, h}(T,\lambda) \ .
\end{equation}
% \eqno{(2.12)}$$
Let $a>0$ and $W:\mathbb{R}_{+}^1\rightarrow \mathbb{R}_{+}^1$ be a monotonous decreasing function with
$W''(c)\geq w>0$.  Then by definition of the gap $\Delta(\C)$ and by (\ref{2.12}) one gets for the
\textit{stability domain}:  $D=[c^*,\infty)$, where
$c^*=\inf\{{c\;: c\geq0\;,\;\Delta(c)\geq 0\}}$ and $\Delta(c^*)= a+2W'(c^*) = 0 $.
%\eqno{(2.13)}$$
%%%%%%%%%%%%%%%%%%%%%%%%%%%%%%%%%%%%%%%%%%%%% Theorem %%%%%%%%%%%%%%%%%%%%%%%%%%%%%%%%%%%%%%%%%%%%%%%%%%%%%%
\begin{theo} \label{thm:3.1}
\begin{equation}\label{AHM}
\lim_\Lambda f_{\Lambda}[H_\Lambda^{SCP}] = \lim_\Lambda \sup_{c \, \geq \, c^*} f_{\Lambda}[H_\Lambda (c)]
=: f(\beta, h).
\end{equation}
\end{theo}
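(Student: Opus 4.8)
The plan is to prove \eqref{AHM} by establishing the two matching inequalities between $f_{\Lambda}[H_\Lambda^{SCP}]$ and $\sup_{c\ge c^*}f_{\Lambda}[H_\Lambda(c)]$ at finite volume, and only then passing to the limit $\Lambda\uparrow\mathbb{Z}$. Throughout I abbreviate the extensive quantity by $X_\Lambda:=V^{-1}\sum_{l\in\Lambda}Q_l^2$ and write $\langle\,\cdot\,\rangle_{H_\Lambda(c)}$ for the finite-volume Gibbs state generated by $H_\Lambda(c)$ in \eqref{2.32}.

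First (the easy half) I would exploit convexity of $W$. Since $W''\ge w>0$, the tangent-line bound $W(x)\ge W(c)+W'(c)(x-c)$ holds for all $x,c\ge 0$, with equality iff $x=c$. Applying it through the functional calculus to the self-adjoint operator $X_\Lambda$ (whose spectrum lies in $[0,\infty)$) gives $W(X_\Lambda)\ge W(c)+W'(c)(X_\Lambda-c)$ as operators, so that the Hamiltonians \eqref{2.3} and \eqref{2.32}, which coincide in all other terms, differ by $H_\Lambda^{SCP}-H_\Lambda(c)=V\,[\,W(X_\Lambda)-W(c)-W'(c)(X_\Lambda-c)\,]\ge 0$ for every $c$ in the stability domain $D=[c^*,\infty)$. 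Monotonicity of $H\mapsto\operatorname{Tr}e^{-\beta H}$ then yields $f_{\Lambda}[H_\Lambda^{SCP}]\ge f_{\Lambda}[H_\Lambda(c)]$ for each such $c$, and taking the supremum over $c\ge c^*$ gives $f_{\Lambda}[H_\Lambda^{SCP}]\ge\sup_{c\ge c^*}f_{\Lambda}[H_\Lambda(c)]$.

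For the reverse inequality I would use the Peierls--Bogoliubov (Gibbs) upper bound with the quadratic trial Hamiltonian $H_\Lambda(c_\Lambda)$, where $c_\Lambda:=c_{\Lambda,h}(T,\lambda)$ is the self-consistency root of \eqref{2.5}. The Feynman--Hellmann identity applied to \eqref{2.32} gives $\partial_c f_{\Lambda}[H_\Lambda(c)]=W''(c)\,(\langle X_\Lambda\rangle_{H_\Lambda(c)}-c)$, so \eqref{2.5} is precisely the stationarity condition $\langle X_\Lambda\rangle_{H_\Lambda(c_\Lambda)}=c_\Lambda$; since $W''>0$ this is a genuine interior maximiser (the free-energy density diverges to $-\infty$ both as $c\to\infty$, through the decreasing tangent $y$-intercept $W(c)-cW'(c)\le W(0)-\tfrac{w}{2}c^2$, and as $c\downarrow c^*$, where $\Delta(c^*)=0$). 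The bound then reads
\begin{equation*}
f_{\Lambda}[H_\Lambda^{SCP}]\le f_{\Lambda}[H_\Lambda(c_\Lambda)]+\big\langle\, W(X_\Lambda)-W(c_\Lambda)-W'(c_\Lambda)(X_\Lambda-c_\Lambda)\,\big\rangle_{H_\Lambda(c_\Lambda)}.
\end{equation*}
The linear term drops out by self-consistency $\langle X_\Lambda\rangle_{H_\Lambda(c_\Lambda)}=c_\Lambda$, and the two-sided control $0\le W(x)-W(c)-W'(c)(x-c)\le \tfrac12\|W''\|_\infty(x-c)^2$ (with $\|W''\|_\infty$ finite on $[0,\infty)$ for the admissible potentials, e.g. $W=\tfrac12 b x^2$ or $W=\tfrac12 b e^{-\eta x}$) bounds the correction by $\tfrac12\|W''\|_\infty\,\mathrm{Var}_{H_\Lambda(c_\Lambda)}(X_\Lambda)$.

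The crux, and the step I expect to be the main obstacle, is therefore the variance estimate $\mathrm{Var}_{H_\Lambda(c_\Lambda)}(X_\Lambda)=O(1/V)$. Because $H_\Lambda(c_\Lambda)$ is quadratic in the $(Q_l,P_l)$, its Gibbs state is quasi-free, so Wick's theorem reduces the connected four-point function to products of the two-point function $G_{l,l'}:=\langle Q_l;Q_{l'}\rangle_{H_\Lambda(c_\Lambda)}$, and by Parseval one gets $\mathrm{Var}(X_\Lambda)=O(V^{-1})\cdot V^{-1}\sum_{q\in\Lambda^*}|\widehat G(q)|^2$ with $\widehat G(q)=\tfrac{\lambda}{2\,\Omega_q(c_\Lambda)}\coth\big(\tfrac{\beta\lambda}{2}\Omega_q(c_\Lambda)\big)$. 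On the interior of $D$ one has a strictly positive gap $\Delta(c_\Lambda)>0$, hence $\Omega_q(c_\Lambda)\ge\sqrt{\Delta(c_\Lambda)}$ and $V^{-1}\sum_q|\widehat G(q)|^2$ stays uniformly bounded, giving $\mathrm{Var}(X_\Lambda)=O(1/V)\to 0$. The delicate part, which must be controlled with care, is the uniformity of this bound as the parameters approach the critical line $\Delta(c_\Lambda)\downarrow 0$, where $\widehat G(q)$ develops an infrared singularity governed by the asymptotics \eqref{eq:1.12}; there one must verify that $V^{-1}\sum_q|\widehat G(q)|^2$ remains summable in the relevant dimension/range regime. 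Granting this, the correction is $O(1/V)$, so $f_{\Lambda}[H_\Lambda^{SCP}]\le\sup_{c\ge c^*}f_{\Lambda}[H_\Lambda(c)]+O(1/V)$; combined with the easy half and the existence of the limits \eqref{2.12} furnished by the AHM \cite{BBZKT}, letting $\Lambda\uparrow\mathbb{Z}$ yields \eqref{AHM}.
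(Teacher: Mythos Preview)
The paper does not give its own proof of Theorem~\ref{thm:3.1}: immediately after the statement it writes ``By this main for the AHM theorem \cite{VZ1}\ldots'', i.e.\ the result is quoted from the Approximating Hamiltonian Method literature \cite{BBZKT,VZ1,PT}. There is thus nothing in the paper to compare your argument against line by line.

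That said, your outline \emph{is} precisely the standard AHM strategy those references implement: (i) the operator inequality $H_\Lambda^{SCP}\ge H_\Lambda(c)$ from convexity of $W$ gives the easy half $f_\Lambda[H_\Lambda^{SCP}]\ge\sup_{c}f_\Lambda[H_\Lambda(c)]$; (ii) the Bogoliubov (Gibbs) variational bound with trial Hamiltonian $H_\Lambda(c_\Lambda)$ at the self-consistent point $c_\Lambda=\langle X_\Lambda\rangle_{H_\Lambda(c_\Lambda)}$ gives the reverse inequality up to a remainder controlled by $\mathrm{Var}_{H_\Lambda(c_\Lambda)}(X_\Lambda)$; (iii) the variance of the intensive observable $X_\Lambda$ is $O(1/V)$ in the quasi-free state of the quadratic Hamiltonian $H_\Lambda(c_\Lambda)$. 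So your proposal is correct and matches the cited approach.

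Two minor comments. First, your Taylor-remainder step uses $\|W''\|_\infty<\infty$, which holds for the concrete potentials in the paper ($W(x)=\tfrac{1}{2}bx^2$ and $W(x)=\tfrac{1}{2}be^{-\eta x}$) but is not implied by the stated abstract hypothesis $W''\ge w>0$ alone; the AHM references handle this more generally via Bogoliubov--Duhamel susceptibility bounds rather than a pointwise second-order Taylor estimate. Second, your worry about uniformity of the variance bound at criticality is well placed but not, strictly speaking, needed for \eqref{AHM}: the statement is pointwise in $(\beta,h)$, and for $h\neq0$ or off the critical line one has $\Delta(c_\Lambda)$ bounded away from zero, while on the critical line the gap vanishes at a rate $\Delta\sim V^{-\gamma}$ (Propositions~\ref{proposition:5.1}--\ref{proposition:5.2}) that still makes $V^{-1}\sum_q|\widehat G(q)|^2$ finite in the regimes considered.
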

%%%%%%%%%%%%%%%%%%%%%%%%%%%%%%%%%%%%%%%%%%%%%%%%%%%%%%%%%%%%%%%%%%%%%%%%%%%%%%%%%%%%%%%%%%%%%%%%%%%%%%%%%%%%
%\subsection{The Phase Transition}
%%%%%%%%%%%%%%%%%%%%%%%%%%%%%%%%%%%%%%%%%%%%%%%%%%%%%%%%%%%%%%%%%%%%%%%%%%%%%%%%%%%%%%%%%%%%%%%%%%%%%%%%%%%%
By this main for the AHM theorem \cite{VZ1} thermodynamics of the system $H_\Lambda^{SCP}$ and
$H_\Lambda (c)$ for $c =\C$ (\ref{2.5}) are \textit{equivalent}.
Therefore, to study the phase diagram of the model (\ref{2.3}) we have to consider equation
(\ref{2.5}) in the thermodynamic limit $\Lambda\rightarrow\mathbb{Z}$:
\begin{equation}\label{2.14}
c_{h}(T,\lambda)=\rho(T,\lambda, h)+I_d(c_{h}(T,\lambda),T,\lambda) \ .
\end{equation}
%\eqno{(2.14)}$$
Here we split the thermodynamic limit of the integral sum (\ref{2.5}) into zero-mode term plus $h$-term
and the rest:
\begin{eqnarray}\label{2.15}
% \nonumber to remove numbering (before each equation)
&&\rho(T,\lambda, h) = \lim_{\Lambda} \rho_{\Lambda}(T,\lambda, h) :=\\
&&\lim_{\Lambda}\left\{\frac{h^2}{\Delta^2(\C)} + \frac{1}{V}\frac{\lambda}{2\sqrt{\Delta(\C)}}
\coth\frac{\beta\lambda}{2}\sqrt{\Delta(\C)}\right\} ,  \nonumber \\
&&I_d(c_{h}(T,\lambda),T,\lambda):= \frac{\lambda}{(2\pi)^d}\int_{q\in B_d}d^dq
\frac{1}{2\Omega_q (c_{h}(T,\lambda))}\coth\frac{\beta\lambda}{2}\Omega_q(c_{h}(T,\lambda)) \ . \nonumber
\end{eqnarray}
Here, $B_d=\{q\in\reel^d;\vert q\vert\leq\pi\}$ is the first Brillouin zone.

To analyse solution of (\ref{2.14}) we consider below two cases:  (a) $h=0$ and (b) $h\not = 0$.

\smallskip

\noindent (a) $h=0$:  From (\ref{2.14}), (\ref{2.15}), one easily gets that for $T=0$, there is
$\lambda_c$ such that $c^\ast\leq I_d(c^\ast,0,\lambda)$ for
$\lambda\geq\lambda_c$ and $c^\ast\ = I_d(c^\ast,0,\lambda_c)$ defines the critical value of the
\textit{quantum parameter} $\lambda$. Then the line $(\lambda, T_c(\lambda))$ of critical temperatures:
$\lambda \mapsto T_c(\lambda)$, which separates the \textit{phase diagram} $(\lambda, T)$ into two domains
(A)-(B), verifies the identity:
\begin{equation}\label{2.17}
c^*=I_d(c^*,T_c(\lambda),\lambda)\, , \ \ \lambda\leq\lambda_c \ \ {\rm{and}} \ \ T_c(\lambda_c ) = 0 \ .
\end{equation}
%\eqno{(2.17)}$$

Taking into account (\ref{2.14}) and (\ref{2.15}) one can express the conditions (\ref{2.17}) as the
\textit{critical-line} equation:
\begin{equation}\label{2.16}
\rho_{c^*}(T_c(\lambda),\lambda):= \rho(T,\lambda, h)\big|_{c_{h}(T,\lambda) = c^*} =
c^*-I_d(c^*,T_c(\lambda),\lambda) = 0 \ .
\end{equation}
Therefore, we obtain two solutions of (\ref{2.14}) distinguished by the value of the gap (\ref{2.12}):
\begin{itemize}
\item  (A) $ \ \rho(T,\lambda, 0) = 0 \ , \ \ c_{0}(T,\lambda) > c^\ast \ \ {\rm{or}} \ \
\Delta(c_{0}(T,\lambda)) > 0 : \,  T > T_c(\lambda) \vee \lambda > \lambda_c$  \ ,
\item  (B) $ \ \rho(T,\lambda, 0) \geq 0 \ , \ \ c_{0}(T,\lambda)=c^\ast \ \ {\rm{or}} \ \
\Delta(c_{0}(T,\lambda)) = 0 : \, 0 \leq T \leq T_c(\lambda) \wedge \lambda \leq \lambda_c$  \ .
\end{itemize}

For $\lambda < \lambda_c$ fixed, by looking along the vertical ($\lambda= \, $ const) line, we observe
the well-known temperature-driven phase transition at $T_c(\lambda)>0$ with  \textit{order} parameter,
which can be identified with $\rho$. On the other hand, for a fixed $T < T_c(0)$, looking along the horizontal
($T = \, $ const) line
one observes a phase transition at $\{\lambda: T_c(\lambda) = T\}$, which is driven by the quantum parameter
$\lambda = \hbar/\sqrt{2m}$.

Note that for $\lambda>\lambda_c$, i.e.  for \textit{light} atoms, the temperature-driven phase
transition is suppressed by quantum tunneling or quantum fluctuations. Decreasing of $T_c(\lambda)$ for
light atoms is well-known as \textit{isotopic effect} in ferroelectrics \cite{APS}. Since by
Theorem \ref{thm:3.1} thermodynamics of the models (\ref{2.3}) and approximating Hamiltonian $\H$ are
are equivalent the proof that one has the same effect in the model (\ref{2.3}) including the existence of
$\lambda_c$ follows from solution of equation (\ref{2.3}) and monotonicity of
$\lambda \mapsto I_d(c^\ast, 0,\lambda)$. The proof of the isotopic effect for the original model (\ref{2.1})
was obtained in \cite{VZ2}, see also \cite{MPZ}, \cite{AKKR}.

To proceed we introduce for Hamiltonians (\ref{2.2}), (\ref{2.3}), and (\ref{2.32}) the canonical Gibbs states:
\begin{equation}\label{2.171}
\omega_{\beta,\Lambda, \ast}(\cdot) = \frac{{\rm{Tr}}_{{\mathcal{H}}_{\Lambda}}
[\exp(-\beta H_{\Lambda,  \ast})\ (\cdot) \ ]}
{{\rm{Tr}}_{{\mathcal{H}}_{\Lambda}} \exp(-\beta H_{\Lambda,   \ast})} \ \ , \ \
H_{\Lambda,   \ast} = H_{\Lambda} \vee  H_{\Lambda}^{SCP} \vee H_{\Lambda}(c) \ .
\end{equation}

Note that by (\ref{2.171}) these states inherit for $h=0$ the $\relatif^2$-symmetry of Hamiltonians
(\ref{2.2}), (\ref{2.3}), and (\ref{2.32}):  $Q_l\rightarrow -Q_l$, i.e. one has
\begin{equation}\label{2.18}
\omega_{\beta,\Lambda, \ast}(Q_l) = \lim_\Lambda \omega_{\beta,\Lambda, \ast}(- Q_l) = 0 \ .
\end{equation}
%\eqno{(2.18)}$$

\noindent (b) $h\not = 0$: Then we obtain
\begin{equation}\label{2.19}
\omega_{\beta, c_h}(Q_l)=\frac{h}{\D(c_h(T,\lambda))} \ .
\end{equation}
%\eqno{(2.19)}$$
For \textit{disordered} phase (A), we have $\lim_{h\rightarrow 0} c_h(T,\lambda)=c(T,\lambda)> c^*$.
So, $\D(c)>0$ and
\begin{equation}\label{2.20}
\lim_{h\rightarrow 0}\omega_{\beta, c_h}(Q_l)=0 \ .
\end{equation}
% \eqno{(2.20)}$$
For \textit{ordered} phase (B), we have $\lim_{h\rightarrow 0} c_h(T,\lambda)=c^*$, then by (\ref{2.15})
\begin{equation}\label{2.21}
\rho_{c^*}(T,\lambda)=c^*-I_d(c^*,T,\lambda)=\lim_{h\rightarrow 0}\frac{h^2}{\D^2(c_h)}>0 \ .
\end{equation}
%\eqno{(2.21)}$$
Finally, (\ref{2.19}) and (\ref{2.21}) yield the values of the physical order parameters
\begin{equation}\label{2.22}
\omega_{\beta, \pm}(Q_l):= \lim_{h\rightarrow\pm 0}\omega_{\beta, c_h}(Q_l)=
\pm\sqrt{\rho_{c^*}(T,\lambda)}\not=0 \ .
\end{equation}
%\eqno{(2.22)}$$

Therefore, using the Bogoliubov quasi-average (\ref{2.22}), and Section \ref{sec:gBEC-BQ-A}, we obtain
two extremal translation invariant equilibrium states $\omega_{\beta, +}$ and $\omega_{\beta, -}$,
invariant by translations, such that
\begin{equation}\label{2.23}
\omega_{\beta, +}(Q_l) = - \ \omega_{\beta, -}(Q_l)=[\rho_{c^*}(T,\lambda)]^{{1}/{2}}\not=0 \ , \ \ \
l\in \mathbb{Z} \ .
\end{equation}
%\eqno{(2.23)}
In this case, one can easily check that positions and momenta have \textit{normal}
fluctuations $\delta_Q = \delta_P = 0$ (\ref{1.2}), \cite{VZ1}. We return to this observation below in a
framework of a more general approach: a {scaled} Bogoliubov quasi-averages \cite{VZ1}.
%%%%%%%%%%%%%%%%%%%%%%%%%%%%%%%%%%%%%%%%%%%%%% Definition %%%%%%%%%%%%%%%%%%%%%%%%%%%%%%%%%%%%%%%%%%%%%%%%%%%%%%%%%
\begin{defi} \label{def:3.11}
We say that external source in (\ref{2.2}), (\ref{2.3}), and (\ref{2.32}) corresponds to the \textit{scaled}
Bogoliubov quasi-average $h\rightarrow 0$, if it is coupled with the thermodynamic limit
$\Lambda\uparrow\mathbb{Z}$ by the relation:
\begin{equation}\label{2.24}
h_{\alpha}:=\frac{\widehat{h}}{V^\alpha} \ , \ \alpha>0 \ .
\end{equation}
\end{defi}
%%%%%%%%%%%%%%%%%%%%%%%%%%%%%%%%%%%%%%%%%%%%%%%%%%%%%%%%%%%%%%%%%%%%%%%%%%%%%%%%%%%%%%%%%%%%%%%%%%%%%%%%%%%%%%%%%%%
This choice of quasi-average is flexible enough to scan between weak/strong external sources as a function of
$0< \alpha$. It is a message of the following proposition, see \cite{VZ1}.
%%%%%%%%%%%%%%%%%%%%%%%%%%%%%%%%%%%%%%%%%%%%%% Proposition %%%%%%%%%%%%%%%%%%%%%%%%%%%%%%%%%%%%%%%%%%%%%%%%%%%%%%%
\begin{proposition}\label{prop:3.12}
If $\alpha<1$, then limiting equilibrium states rest \textit{pure}:  $\lim_\Lambda
\omega_{\beta, c_h}(Q_l)=(sign \, \widehat{h})[\rho_{c^*}(T,\lambda)]^{{1}/{2}}$, which is similar to the case of
the standard Bogoliubov quasi-average $h\rightarrow \pm 0$, (\ref{2.23}).\\
If $\alpha\geq1$, then the limiting state $\omega_{\beta, \widehat{h}}(Q_l)$ becomes a mixture of pure states:
\begin{equation*}
\omega_{\beta, \widehat{h}}(Q_l)= a \ \omega_{\beta, +}(Q_l) + (1-a) \ \omega_{\beta, -}(Q_l) \ ,
\end{equation*}
where $a := a(\widehat{h}, \alpha, \rho_{c^*}(T,\lambda)) \in [0,1]$ is
\begin{equation}\label{2.24a}
 a(\widehat{h}, \alpha, \rho) = \frac{1}{2} \left(1 + \frac{\widehat{h}}{\xi \sqrt{\rho}}\right) \ , \ {\rm{for}} \
 \alpha = 1\ , \ \ \ {\rm{and}} \ \  \  \  a(\widehat{h}, \alpha, \rho) = 1/2 \ , \ {\rm{for}} \  \alpha > 1 \ .
\end{equation}
Here $\xi: = \lim_{\Lambda} [\Delta(\C) \, V] = (2\beta \rho)^{-1} + \sqrt{(2\beta \rho)^{-2} +
\widehat{h}^2/\rho}\ $.
\end{proposition}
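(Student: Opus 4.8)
The plan is to reduce everything to the finite-volume order-parameter formula (\ref{2.19}), namely $\omega_{\beta, c_h}(Q_l) = h/\Delta(c_{\Lambda,h}(T,\lambda))$, evaluated with the scaled source $h = h_\alpha = \widehat{h}/V^{\alpha}$, and then to extract the joint asymptotics of the numerator and of the vanishing gap $\Delta_\Lambda := \Delta(c_{\Lambda,h}(T,\lambda))$ in the ordered phase (B). Since we sit in phase (B) we have $c_{\Lambda,h}\to c^*$ and $\Delta(c^*)=0$, so that both $h\to 0$ and $\Delta_\Lambda\to 0$; the whole statement is governed by the \emph{ratio} of their rates. The natural object is the scaling variable $\xi_\Lambda := \Delta_\Lambda\, V$, and the crux is to show that it has a definite limit and to identify the equation it solves.

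First I would use the self-consistency equation in the split form (\ref{2.14})--(\ref{2.15}): the $q\neq 0$ part of the sum converges to $I_d(c^*,T,\lambda)$, whereas the $q=0$ contribution together with the source term forms $\rho_\Lambda(T,\lambda,h)$, which in phase (B) tends to $\rho := \rho_{c^*}(T,\lambda) = c^*-I_d(c^*,T,\lambda) > 0$ by (\ref{2.16}), (\ref{2.21}). Expanding the zero-mode term for a small gap, $\coth(\tfrac{\beta\lambda}{2}\sqrt{\Delta_\Lambda}) = \tfrac{2}{\beta\lambda\sqrt{\Delta_\Lambda}}(1+o(1))$, turns it into $\tfrac{1}{\beta\Delta_\Lambda V}(1+o(1)) = \tfrac{1}{\beta\xi_\Lambda}(1+o(1))$, while the source term is $h^2/\Delta_\Lambda^2 = \widehat{h}^{\,2} V^{2-2\alpha}/\xi_\Lambda^2$. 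Collecting these gives the single scaling identity
\begin{equation*}
\rho\,\xi_\Lambda^2 = \widehat{h}^{\,2}\, V^{2-2\alpha} + \frac{\xi_\Lambda}{\beta} + o(1) \ ,
\end{equation*}
from which all three regimes follow by inspecting the exponent $2-2\alpha$.

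Then I would run the case analysis. For $\alpha=1$ the driving term is $V$-independent and the limit equation is the quadratic $\rho\xi^2-\xi/\beta-\widehat{h}^{\,2}=0$, whose positive root is exactly $\xi=(2\beta\rho)^{-1}+\sqrt{(2\beta\rho)^{-2}+\widehat{h}^{\,2}/\rho}$; hence $\omega_{\beta,c_h}(Q_l)=\widehat{h}/\xi_\Lambda\to\widehat{h}/\xi$. For $\alpha>1$ the source term vanishes, leaving $\xi=1/(\beta\rho)$ and $\omega_{\beta,c_h}(Q_l)=\widehat{h}\,V^{1-\alpha}/\xi_\Lambda\to 0$. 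For $\alpha<1$ the source term diverges, forcing the dominant balance $\rho\,\xi_\Lambda^2\sim\widehat{h}^{\,2}V^{2-2\alpha}$, so that $\xi_\Lambda\sim(|\widehat{h}|/\sqrt{\rho})\,V^{1-\alpha}\to\infty$, the zero-mode term drops out, and $\omega_{\beta,c_h}(Q_l)\to(\mathrm{sign}\,\widehat{h})\sqrt{\rho}$, reproducing the pure-state value (\ref{2.23}). In every case, since $\omega_{\beta,\pm}(Q_l)=\pm\sqrt{\rho}$ by (\ref{2.22}), writing the limit value as $(2a-1)\sqrt{\rho}$ yields $a=\tfrac12(1+\widehat{h}/(\xi\sqrt{\rho}))$ for $\alpha=1$ and $a=1/2$ for $\alpha>1$; a short check that $\xi\sqrt{\rho}>|\widehat{h}|$ confirms $a\in[0,1]$.

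The main obstacle is the rigorous justification that $\xi_\Lambda=\Delta_\Lambda V$ actually converges, i.e. that the implicit solution of the coupled limit ($V\to\infty$ with $h$ tied to $V$) admits the claimed scaling behaviour: one must show that $c_{\Lambda,h}\to c^*$ at the correct rate, that the $q\neq 0$ sum tends to $I_d(c^*,T,\lambda)$ uniformly in the source, and that the error in the $\coth$-expansion is controlled, so that the $o(1)$ above is legitimate and the scaling equation has a unique positive limiting root. A secondary point is to upgrade the computation of the single expectation $\omega_{\beta,\widehat{h}}(Q_l)$ to the assertion that the limiting \emph{state} is genuinely the convex combination $a\,\omega_{\beta,+}+(1-a)\,\omega_{\beta,-}$; this follows from the AHM structure by analysing the concentration of the distribution of the macroscopic coordinate $V^{-1}\sum_{l}Q_l$ on the two minima $\pm\sqrt{\rho}$, whose relative weights are fixed by the scaled source, as in \cite{VZ1}, \cite{BZT}.
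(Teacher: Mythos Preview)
Your proposal is correct and follows essentially the same route as the paper. The paper does not give a self-contained proof of this proposition (it defers to \cite{VZ1}), but the identical computation is carried out in the proof of Proposition~\ref{prop:3.13}: one reads off from (\ref{2.14})--(\ref{2.15}) that in phase (B) the zero-mode plus source contribution $\rho_\Lambda$ tends to $\rho_{c^*}$, expands the zero-mode term for small gap to get $\rho_{c^*}=\widehat{h}^{\,2}/(V^\alpha\Delta_\Lambda)^2+1/(\beta V\Delta_\Lambda)+o(1)$, and then runs the same three-way case split on $\alpha$; your scaling variable $\xi_\Lambda=V\Delta_\Lambda$ and the resulting quadratic for $\alpha=1$ are exactly what produce the formula for $\xi$ in the statement. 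Your write-up is in fact slightly more explicit than the paper's, in that you isolate the scaling identity $\rho\,\xi_\Lambda^2=\widehat{h}^{\,2}V^{2-2\alpha}+\xi_\Lambda/\beta+o(1)$ and verify $a\in[0,1]$, and you correctly flag that the upgrade from the value of $\omega_{\beta,\widehat{h}}(Q_l)$ to a genuine convex decomposition of the \emph{state} requires the additional AHM/large-deviation input from \cite{VZ1}, \cite{BZT}.
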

%%%%%%%%%%%%%%%%%%%%%%%%%%%%%%%%%%%%%%%%%%%%%%%%%%%%%%%%%%%%%%%%%%%%%%%%%%%%%%%%%%%%%%%%%%%%%%%%%%%%%%%%%%%%%%%%%%

Our next step is to study the impact of the scaled quasi-average sources on the quantum fluctuation operators.
%%%%%%%%%%%%%%%%%%%%%%%%%%%%%%%%%%%%%%%%%%%%%%%%%%%%%%%%%%%%%%%%%%%%%%%%%%%%%%%%%%%%%%%%%%%%%%%%%%%%%%%%%%%%%%
%\subsection{Phase Transition and Fluctuations}
%%%%%%%%%%%%%%%%%%%%%%%%%%%%%%%%%%%%%%%%%%%%%%%%%%%%%%%%%%%%%%%%%%%%%%%%%%%%%%%%%%%%%%%%%%%%%%%%%%%%%%%%%%%%%%
Consider now the zero-mode ($k=0$, (\ref{1.2})) fluctuation operators of position and momentum given by
\begin{equation}\label{2.25}
F_{\d_Q}(Q)=\lim_\L\frac{1}{V^{\frac{1}{2}+\d_Q}}\sum_{i\in\L}(Q_i-\omega_{\beta, \Lambda, c_h}(Q_i)) \ ,
\end{equation}
%\eqno{(2.25)}$$
and
\begin{equation}\label{2.26}
F_{\d_P}(P)=\lim_\L\frac{1}{V^{\frac{1}{2}+\d_P}}\sum_{i\in\L}(P_i-\omega_{\beta, \Lambda, c_h}(P_i)) \ .
\end{equation}
%\eqno{(2.26)}$$
Since the approximating Hamiltonian is quadratic operator form (\ref{2.32}), one can calculate the variances
of fluctuation operators (\ref{2.25}) and (\ref{2.26}) explicitly:
\begin{equation*}
\lim_\L\omega_{\beta, \Lambda, c_h}\left(\{\frac{1}{V^{\frac{1}{2}+\d_Q}}
\sum_{i\in\L}(Q_i-\omega_{\beta, \Lambda, c_h}(Q_i))\}^2\right)=
\end{equation*}
\newline
\begin{equation}\label{2.27}
= \lim_\L\frac{1}{V^{2\delta_Q}}\frac{\lambda}{2\sqrt{\Delta(c_{\Lambda, h}(T,\lambda))}}
\coth\frac{\beta\lambda}{2}\sqrt{\Delta(c_{\Lambda, h}(T,\lambda))} \ ,
\end{equation}
%\eqno{(2.27)}$$
\begin{equation*}
\lim_\L\omega_{\beta, \Lambda, c_h}\left(\{\frac{1}{V^{\frac{1}{2}+\d_P}}
\sum_{i\in\L}(P_i-\omega_{\beta, \Lambda, c_h}(P_i))\}^2\right)=
\end{equation*}
\newline
\begin{equation}\label{2.28}
= \lim_\L\frac{1}{V^{2\delta_P}}\frac{\lambda
m\sqrt{\Delta(c_{\Lambda, h}(T,\lambda))}}{2}\coth\frac{\beta\lambda}{2}\sqrt{\Delta(c_{\Lambda, h}(T,\lambda))} \ .
\end{equation}
%\eqno{(2.28)}$$
Here $c_h : = \C$ is a solution of the self-consistent equation (\ref{2.5}) and by the $\relatif^2$-symmetry
of Hamiltonian (\ref{2.3}):  $P_l\rightarrow - P_l$, one has $\omega_{\beta,\Lambda,  c_h}(P_l) = 0$ in
(\ref{2.28}) for all $l \in \Lambda$ and for any values of $\beta , h$.

We note that existence of \textit{nontrivial} variances (\ref{2.27}) and (\ref{2.28}) is sufficient for the
proof of existence of the characteristic function (\ref{1.3}) with sesquilinear form
$S_\omega(\cdot,\cdot)$. The next ingredient is the corresponding to the fluctuation operator algebra
{symplectic} form $\sigma_\omega(\cdot,\cdot)$ one has to calculate the limit of commutator (\ref{1.6}). By
(\ref{2.25}) and (\ref{2.26}) we get
\begin{equation}\label{1.61}
\lim_\Lambda[F^{\d_P}_\Lambda(P),F^{\d_Q}_\Lambda(Q)]=
\lim_\Lambda \frac{1}{V^{1+\d_P + \d_Q}}\sum_{l,l'\in\L} [P_{l} , Q_{l'}] =
\lim_\Lambda \frac{1}{V^{\d_P + \d_Q}} \ \frac{\hbar}{i} \ .
\end{equation}

We summarise this subsection by the following list of comments and remarks.
%%%%%%%%%%%%%%%%%%%%%%%%%%%%%%%%%%%%%%%%%%%%%% Remark %%%%%%%%%%%%%%%%%%%%%%%%%%%%%%%%%%%%%%%%%%%%%%%%%%%%%%%
\begin{rem}\label{rem:3.12} We summarise this subsection by the following list of comments.

(a)-(A) Let $h = 0$ and let $[0, \lambda_c] \ni \lambda \mapsto T_c (\lambda)$.
If the point $(\lambda, T)$ on the phase diagram is above the critical line $(\lambda, T_c (\lambda))$:
$T > T_c(\lambda)$, or if $\lambda > \lambda_c$, see (\ref{2.17}), then this is the case (A), when
$\Delta(c_{h=0}(T,\lambda)) > 0$. Consequently (\ref{2.27}) and (\ref{2.28}) yield $\d_Q=\d_P = 0$,
to ensure  non-triviality of the variances, i.e. of the central limit
both for momentum and for displacement fluctuation operators. They are called \textit{normal}, or
\textit{noncritical} fluctuation operators. Since in this case the commutator (\ref{1.61}) is nontrivial,
the operators $F_{0}(P)$ and $F_{0}(Q)$ are generators of \textit{non-abelian} algebra of normal
fluctuations. Since in this domain of the phase diagram the order parameter $\rho(T,\lambda, h=0) =0$
(\ref{2.15}), we call this pure phase disordered. Note that $\rho(T,\lambda, h=0) =0$ implies
$\omega_{\beta,\Lambda, c_{h=0}}(Q_l) = 0$ even without the reference on $Z_2$-symmetry (\ref{2.15}).

(a)-(B) Let $h = 0$. If the point $(\lambda, T)$ on the phase diagram is below the critical line
$(\lambda, T_c (\lambda))$: $T < T_c(\lambda)$ and $\lambda < \lambda_c$, then
$\lim_\Lambda c_{\Lambda, h=0}(T,\lambda)= c^*$, i.e. the gap
$\lim_\Lambda \Delta(c_{\Lambda, h=0}(T,\lambda)) = 0$ (\ref{2.12}) and the order parameter
$\rho(T,\lambda, h=0) > 0$ (\ref{2.15}). Therefore, by (\ref{2.27}) one gets $\delta_Q = 1/2$ and by
(\ref{2.28}) one gets $\delta_P = 0$ to ensure a nontrivial central limit. Hence, the
displacement fluctuation operator $F_{1/2}(Q)$ is \textit{abnormal}, whereas the momentum fluctuation operator
$F_{0}(P)$ is normal. By (\ref{1.61}) the operators $F_{1/2}(Q), F_{0}(P)$  (\ref{2.25}), (\ref{2.26}),
commute, i.e.  they generate a \textit{abelian} algebra of fluctuations. We comment that although
order parameter $\rho(T,\lambda, h=0) > 0$ the $Z_2$-symmetry (\ref{2.15}) implies that \textit{displacement}
order parameter $\omega_{\beta, c^*}(Q_l) = 0$. The Bogoliubov quasi-average (\ref{2.22})
gives non-zero value for \textit{displacement} order parameter. This means that $\omega_{\beta, c^*}$
is the \textit{one-half} mixture of the pure states $\omega_{\beta,\pm}$  (\ref{2.23}) and explains
abnormal fluctuation of displacement.

Now let $h \neq 0$ and consider the standard Bogoliubov quasi-averages, see (b).

(b)-(A) Since $\Delta(c_{h}(T,\lambda)) > 0$, by (\ref{2.27}), (\ref{2.28}) one gets the finite
quasi-averages
\begin{eqnarray*}
&&\lim_{h \rightarrow 0} \lim_\L\omega_{\beta, \Lambda, c_h}\left(\{\frac{1}{V^{\frac{1}{2}}}
\sum_{i\in\L}(Q_i-\omega_{\beta, \Lambda, c_h}(Q_i))\}^2\right) \ , \\
&&\lim_{h \rightarrow 0} \lim_\L\omega_{\beta, \Lambda, c_h}\left(\{\frac{1}{V^{\frac{1}{2}}}
\sum_{i\in\L}(P_i-\omega_{\beta, \Lambda, c_h}(P_i))\}^2\right)\ . \nonumber
\end{eqnarray*}
They yield the same result on the normal fluctuations as in (a)-(A).

(b)-(B) Since $h \neq 0$, the difference with the case (a)-(B) comes from
$\lim_\Lambda \Delta(c_{\Lambda, h}(T,\lambda)) > 0$ (\ref{2.12}) and from (\ref{2.21}), which is valid in
the ordered phase. Then the quasi-average for the displacement variance (\ref{2.27}):
\begin{equation}\label{1.62}
\lim_{h \rightarrow  0}
\lim_\L\frac{1}{V^{2\delta_Q}}\frac{\lambda}{2\sqrt{\Delta(c_{\Lambda, h}(T,\lambda))}}
\coth\frac{\beta\lambda}{2}\sqrt{\Delta(c_{\Lambda, h}(T,\lambda))} \ ,
\end{equation}
has \textit{no} nontrivial sense for any $\delta_Q$. Whereas the quasi-average for the momentum variance
(\ref{2.28}) is nontrivial only when $\delta_P = 0$.

(b*)-(B) This difficulty is one of the motivation to consider instead of (\ref{1.62}) the \textit{scaled}
Bogoliubov quasi-average (\ref{2.24}) for $h_{\alpha}$.
%%%%%%%%%%%%%%%%%%%%%%%%%%%%%%%%%%%%%%%%%%%%%%%%%%%%%%%%%%%%%%%%%%%%%%%%%%%%%%%%%%%%%%%%%%%%%%%%%%%%%%%%%%%%%%%%

(a)-(B*) We conclude this remark by the case when the point $(\lambda, T)$ belongs to the
\textit{critical line}: $(\lambda, T_c(\lambda))$, where $\lambda \leq \lambda_c$. Therefore, the gap
$\lim_\Lambda \Delta(c_{\Lambda, h=0}(T_c(\lambda),\lambda)) = 0$ (\ref{2.12}) and the order parameter
$\rho(T_c(\lambda),\lambda, h=0) = 0$ (\ref{2.15}).

(i) If $\lambda < \lambda_c$, then $T_c(\lambda) > 0$. Hence, by (\ref{2.28}) the momentum fluctuation operator
is normal, $\d_P = 0$, whereas displacement fluctuation operator is \textit{abnormal} with the power
$\d_Q >0 $, which depends on the asymptotics $\mathcal{O}(V^{-\gamma}), \gamma > 0$,  of the gap
$\Delta(c_{\Lambda, h=0}(T_c(\lambda)))$ in thermodynamic limit.

Note that in the \textit{scaled} limit $\lim_\Lambda \Delta(c_{\Lambda, h_{\alpha}}(T_c(\lambda),\lambda)) = 0$
the asymptotics $\mathcal{O}(V^{-\gamma})$, and by consequence $\d_Q >0 $, may be modified by the power
$\alpha$. Although it leaves stable $\d_P = 0$. We study this phenomenon in the next section. By (\ref{1.61})
the corresponding algebra of fluctuations is abelian.

(ii) If $\lambda = \lambda_c$, then $T_c(\lambda_c) = 0$ (\ref{2.15}) and one observes a zero-temperature
\textit{quantum} phase transition at the critical point $(0,\lambda_c)$ by varying the quantum parameter
$\lambda$. In this case the variances (\ref{2.27}), (\ref{2.28}) take the form
\begin{equation}\label{2.271}
% \nonumber to remove numbering (before each equation)
\lim_\L\frac{1}{V^{2\delta_Q}}\frac{\lambda_c}{2\sqrt{\Delta(c_{\Lambda, h}(0,\lambda_c))}} \ ,
\end{equation}
\begin{equation}\label{2.281}
\lim_\L\frac{1}{V^{2\delta_P}}\frac{\lambda_c m\sqrt{\Delta(c_{\Lambda, h}(0,\lambda_c))}}{2}\ .
\end{equation}
Since $\Delta(c_{\Lambda, h_{\alpha}}(0,\lambda_c)) = \mathcal{O}(V^{-\gamma})$, (\ref{2.271})
implies that the displacement fluctuation operator is abnormal with the power $\d_Q = \gamma/4 >0 $,
which may be modified by the power $\alpha$. The momentum fluctuation operator is also abnormal, but
\textit{squeezed} since by (\ref{2.281}) one gets  $\d_P = - \gamma/4 < 0$ . Note that $\d_Q + \d_P  = 0$
yields a nontrivial commutator (\ref{1.61}). Therefore, algebra of abnormal fluctuations generated by
$F_{\d_Q}(Q), F_{\d_P}(P)$ is non-abelian and possibly $\alpha$-dependent.

\end{rem}
%%%%%%%%%%%%%%%%%%%%%%%%%%%%%%%%%%%%%%%%%%%%%%%%%%%%%%%%%%%%%%%%%%%%%%%%%%%%%%%%%%%%%%%%%%%%%%%%%%%%%%%%%%%%%

In the next Sections we elucidate a relation between definition of quantum fluctuation
operators and the scaled Bogoliubov quasi-averages (\ref{2.24}) indicated in Remark \ref{rem:3.12}.

%%%%%%%%%%%%%%%%%%%%%%%%%%%%%%%%%%%%%%%%%%%% Section %%%%%%%%%%%%%%%%%%%%%%%%%%%%%%%%%%%%%%%%%%%%%%%%%%%%
\section{Quasi-averages for critical quantum fluctuations} \label{Q-A-Cr-Q-Fl}
%%%%%%%%%%%%%%%%%%%%%%%%%%%%%%%%%%%%%%%%%%%%%%%%%%%%%%%%%%%%%%%%%%%%%%%%%%%%%%%%%%%%%%%%%%%%%%%%%%%%%%%%%
\subsection{Quantum fluctuations below the critical line} \label{QF-below}
%%%%%%%%%%%%%%%%%%%%%%%%%%%%%%%%%%%%%%%%%%%%%%%%%%%%%%%%%%%%%%%%%%%%%%%%%%%%%%%%%%%%%%%%%%%%%%%%%%%%%%%%%
We consider here the case (b*)-(B). We show that the \textit{scaled} Bogoliubov quasi-average (\ref{2.24})
for $h_{\alpha}$ is relevant for analysis of fluctuations below the critical line.
%%%%%%%%%%%%%%%%%%%%%%%%%%%%%%%%%%%%%%%%%%%%%% Proposition %%%%%%%%%%%%%%%%%%%%%%%%%%%%%%%%%%%%%%%%%%%%%%
\begin{proposition}\label{prop:3.13}
Let $0 \leq T < T_c(\lambda) \wedge \lambda < \lambda_c$. Then the momentum fluctuation operator
is normal, $\d_P = 0$, whereas displacement fluctuation operator is \textit{abnormal} with the power
$0 < \d_Q \leq 1/2 $, which depends on the scaled Bogoliubov quasi-average parameter ${\alpha}$ (\ref{2.24}).
The fluctuation algebra is abelian.
\end{proposition}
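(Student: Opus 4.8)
The plan is to reduce the entire statement to the large-volume asymptotics of the single quantity $\Delta(\C)$, the gap evaluated at the self-consistent solution $\C = c_{\Lambda,h}(T,\lambda)$, under the scaled source $h_{\alpha} = \widehat{h}/V^{\alpha}$ of (\ref{2.24}). Since the approximating Hamiltonian (\ref{2.32}) is a quadratic form, the variances of the displacement and momentum fluctuation operators are already available in closed form in (\ref{2.27}) and (\ref{2.28}), and their commutator in (\ref{1.61}); all three depend on the volume only through $\Delta(\C)$ together with the chosen exponents $\delta_Q,\delta_P$. Hence everything follows once the power $\gamma$ in the behaviour $\Delta(\C)\sim C\,V^{-\gamma}$ is identified as a function of $\alpha$.

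First I would pin down this gap asymptotics. In the ordered phase (B) one has $\lim_\Lambda\Delta(c_{\Lambda,0})=0$ together with a strictly positive order parameter $\rho_{c^*}(T,\lambda)>0$ by (\ref{2.21}). Inserting the scaled source into the self-consistency equation (\ref{2.14})--(\ref{2.15}) and isolating the zero-mode-plus-source contribution $\rho_\Lambda(T,\lambda,h_\alpha)$, this contribution must converge to $\rho_{c^*}(T,\lambda)$ because the remaining band integral $I_d(\cdot)$ stays regular as $c\to c^*$. For $T>0$ the soft zero mode sits in the classical regime, $\coth(\tfrac{\beta\lambda}{2}\sqrt{\Delta})\sim 2/(\beta\lambda\sqrt{\Delta})$, so that
\[
\rho_{c^*}(T,\lambda)=\lim_\Lambda\left(\frac{\widehat{h}^{\,2}}{V^{2\alpha}\,\Delta^{2}(\C)}+\frac{1}{\beta V\,\Delta(\C)}\right).
\]
Substituting $\Delta(\C)\sim C\,V^{-\gamma}$ and balancing the two terms gives, for $\alpha<1$, dominance of the source term and hence $\gamma=\alpha$ with $C=|\widehat{h}|/\sqrt{\rho_{c^*}(T,\lambda)}$, the thermal term being then $O(V^{\alpha-1})\to 0$; this is exactly the pure-state regime of Proposition \ref{prop:3.12}. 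The unscaled limit (equivalently $\alpha\to 1$) returns the thermal rate $\gamma=1$ and recovers $\delta_Q=1/2$ as in Remark \ref{rem:3.12}(a)-(B).

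With $\gamma=\gamma(\alpha)$ in hand the exponents follow by inspection. Feeding $\Delta(\C)\sim C\,V^{-\gamma}$ into (\ref{2.27}) and again using the classical form of the $\coth$, the displacement variance behaves like $V^{\gamma-2\delta_Q}/(\beta C)$, which is finite and nonzero precisely for $\delta_Q=\gamma/2$, i.e. $\delta_Q=\alpha/2\in(0,1/2)$ in the source-dominated regime $\alpha<1$, with the boundary value $\delta_Q=1/2$ attained in the thermal (unscaled, $\gamma=1$) limit; in all cases $0<\delta_Q\le 1/2$ and depends on $\alpha$, so the displacement fluctuation is \emph{abnormal}. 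Feeding the same ansatz into (\ref{2.28}), the factor $\sqrt{\Delta}$ cancels against the $\coth$ and the momentum variance behaves like $V^{-2\delta_P}\,m/\beta$, finite and nonzero iff $\delta_P=0$: the momentum fluctuation is \emph{normal}. Finally (\ref{1.61}) gives the commutator $\lim_\Lambda V^{-(\delta_P+\delta_Q)}\,\hbar/i$, and since $\delta_P+\delta_Q=\delta_Q>0$ it vanishes, so $F_{\delta_Q}(Q)$ and $F_{\delta_P}(P)$ generate an \emph{abelian} algebra, as claimed.

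The principal obstacle is the rigorous control of the gap asymptotics: one must show that the genuine solution $\C$ of (\ref{2.14}) really carries the leading power $V^{-\gamma}$, i.e. that the finite-size corrections to $I_d$ and the subleading terms of the self-consistency equation do not spoil the balance between the source and the thermal zero-mode terms, and that the small-argument expansion of $\coth(\tfrac{\beta\lambda}{2}\sqrt{\Delta})$ is uniform in $V$ along the chosen net. A second, delicate point is the endpoint $T=0$ admitted by the hypothesis: there $\beta=\infty$ forces $\coth\to 1$, the soft mode leaves the classical regime, and both the balance and the relation between $\gamma$ and $(\delta_Q,\delta_P)$ change; I would re-run the same ansatz with the quantum form $\coth\to 1$ to check whether $\delta_P=0$ and the abelian property survive, and otherwise restrict the normality and abelianness conclusions to $T>0$, treating $T=0$ within the quantum-critical analysis of Remark \ref{rem:3.12}.
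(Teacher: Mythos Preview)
Your approach is essentially the same as the paper's: both reduce the problem to the large-volume asymptotics $\Delta(\C)\sim C\,V^{-\gamma}$ obtained from the self-consistency relation (\ref{2.14})--(\ref{2.15}) with the scaled source, then read off $\delta_Q=\gamma/2$ and $\delta_P=0$ from the explicit variances (\ref{2.27})--(\ref{2.28}) via the classical $\coth$ expansion, and conclude abelianness from (\ref{1.61}).

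The only organisational difference is that the paper splits the analysis into three explicit cases $0<\alpha<1$, $\alpha=1$, $\alpha>1$, whereas you treat $\alpha<1$ in detail and lump the remainder into an ``unscaled limit''. This is slightly imprecise: for $\alpha>1$ the thermal zero-mode term $(\beta V\Delta)^{-1}$ dominates and fixes $\gamma=1$ (hence $\delta_Q=1/2$), while for $\alpha=1$ both terms contribute to the constant $C$; the paper records the resulting mixed state (\ref{2.274}) at $\alpha=1$ and the $\xi=1/2$ mixture for $\alpha>1$. Your balance argument yields the same exponents once you make this case distinction explicit.

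Your caveat about the endpoint $T=0$ is well taken: the paper's proof, like yours, relies on the classical expansion $\coth(\tfrac{\beta\lambda}{2}\sqrt{\Delta})\sim 2/(\beta\lambda\sqrt{\Delta})$, which fails at $\beta=\infty$. At $T=0$ the variances take the quantum form (\ref{2.271})--(\ref{2.281}), forcing $\delta_P=-\gamma/4<0$ and a non-abelian algebra, so the conclusion $\delta_P=0$ with abelian fluctuations is really a $T>0$ statement; the paper does not address this discrepancy with its own hypothesis $0\le T<T_c(\lambda)$.
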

%%%%%%%%%%%%%%%%%%%%%%%%%%%%%%%%%%%%%%%%%%%%%%% Proof %%%%%%%%%%%%%%%%%%%%%%%%%%%%%%%%%%%%%%%%%%%%%%%%%%%%
\begin{proof}
(1) Let $0<\alpha<1$. Then by (\ref{2.15}), (\ref{2.21}), and (\ref{2.24}) we obtain that
\begin{equation}\label{2.272-1}
\omega_{\beta, {\rm{sign}}(\widehat{h})}(Q_l) = \lim_{\Lambda}\omega_{\beta,\Lambda, c_h}(Q_l)= \lim_{\Lambda}
\frac{\widehat{h}}{V^{\alpha}\Delta(c_{\Lambda,h}(T,\lambda))} =
{\rm{sign}}(\widehat{h})\sqrt{\rho_{c^*}(T,\lambda)} \ .
\end{equation}
This indicates that this scaled quasi-average limit gives pure states (\ref{2.23}) and that by (\ref{2.27})
the variance of displacement fluctuation operator has a finite value
\begin{equation}\label{2.272}
0 < \lim_\L\omega_{\beta, \Lambda, c_h}\left(\{\frac{1}{V^{\frac{1}{2}+\d_Q}}
\sum_{i\in\L}(Q_i-\omega_{\beta, \Lambda, c_h}(Q_i))\}^2\right)=
\lim_\L\frac{1}{V^{2\delta_Q - \alpha}}\frac{\sqrt{\rho_{c^*}(T,\lambda)}}
{2 \beta |\widehat{h}|}  < \infty ,
\end{equation}
if $\delta_Q = \alpha/2$. On the other hand, the finiteness of (\ref{2.28}) implies that $\delta_P = 0$,
i.e. the momentum fluctuation operator is normal. Since $0<\alpha$, by (\ref{1.61}) the fluctuation algebra is
abelian.

(2) Let $\alpha = 1$. Then by (\ref{2.14}), (\ref{2.15}), and (\ref{2.24}) we obtain that
\begin{eqnarray}\label{2.273}
% \nonumber to remove numbering (before each equation)
\rho_{c^*}(T,\lambda) &=& \lim_{\Lambda} \rho_{\Lambda}(T,\lambda, h) =
\lim_{\Lambda} \left\{\frac{{\widehat{h}}^2}{[V \Delta(c_{\Lambda,h}(T,\lambda))]^2} +
\frac{1}{V \Delta(c_{\Lambda,h}(T,\lambda))}\right\} \\
&=& c^* - I_d(c^* ,T,\lambda) > 0 \ , \nonumber
\end{eqnarray}
yields the bounded $w_{\widehat{h}}((T,\lambda))) : = \lim_{\Lambda} [V \Delta(c_{\Lambda,h}(T,\lambda))] > 0$ for
$h = \widehat{h}/V$. Then the displacement order parameter
\begin{equation*}
- \sqrt{\rho_{c^*}(T,\lambda)} < \lim_{\Lambda}\omega_{\beta,\Lambda, c_h}(Q_l)= \lim_{\Lambda}
\frac{\widehat{h}}{V \Delta(c_{\Lambda,h}(T,\lambda))} = \frac{\widehat{h}}{w_{\widehat{h}}((T,\lambda)))}
< \sqrt{\rho_{c^*}(T,\lambda)} \ .
\end{equation*}
This means that the equilibrium Gibbs state
\begin{equation}\label{2.274}
\omega_{\beta, \widehat{h}}(\cdot) = \xi \ \omega_{\beta, +}(\cdot) +
(1 - \xi) \ \omega_{\beta, -}(\cdot) \ , \ \ \xi = \frac{1}{2}[1 +
{\widehat{h}}/{(w_{\widehat{h}}(T,\lambda)\sqrt{\rho_{c^*}(T,\lambda)})}] \in (0,1) \ .
\end{equation}
is a convex combination the pure states (\ref{2.272-1}). Note that (\ref{2.27}) and the boundedness of
$w_{\widehat{h}}((T,\lambda)))$ imply: $\delta_Q = 1/2$, whereas (\ref{2.28}) gives $\delta_P = 0$ .
So, in the mixed state $\omega_{\beta, \widehat{h}}(\cdot)$ the displacement fluctuations are abnormal,
but the momentum fluctuation operator rests normal, and the fluctuation algebra is abelian as in the
case (1).

(3) Let $\alpha > 1$. Then again by (\ref{2.14}), (\ref{2.15}), and by (\ref{2.24}) we obtain that
\begin{equation}\label{2.275}
\rho_{c^*}(T,\lambda) =
\lim_{\Lambda} \frac{1}{V \Delta(c_{\Lambda,h}(T,\lambda))} = c^* - I_d(c^* ,T,\lambda) > 0 \ ,
\end{equation}
which by (\ref{2.19}) yields for the displacement order parameter
\begin{equation}\label{2.276}
\lim_{\Lambda}\omega_{\beta,\Lambda, c_h}(Q_l)= \lim_{\Lambda}
\frac{\widehat{h}}{V^{\alpha} \Delta(c_{\Lambda,h}(T,\lambda))} = 0 \ .
\end{equation}
Note that these scaled quasi-averages (\ref{2.275}), (\ref{2.276}) in the ordered phase (B) are completely
different from the standard quasi-average (\ref{2.21}), (\ref{2.22}). By (\ref{2.27}) and by (\ref{2.28})
one gets $\delta_Q = 1/2$ and $\delta_P = 0$, which are the same as in the case (2), including the abelian
fluctuation algebra.
We comment that the case $\alpha > 1$ is formally equivalent to the case (2) for $\widehat{h} \rightarrow 0$,
which implies $\xi \rightarrow 1/2$, see (\ref{2.274}). The same one deduce from (\ref{2.276}).
\end{proof}
%%%%%%%%%%%%%%%%%%%%%%%%%%%%%%%%%%%%%%%%%%%%%%%%%%%%%%%%%%%%%%%%%%%%%%%%%%%%%%%%%%%%%%%%%%%%%%%%%%%%%%%%%
\subsection{Abelian algebra of fluctuations on the critical line} \label{Ab-alg-fl}
%%%%%%%%%%%%%%%%%%%%%%%%%%%%%%%%%%%%%%%%%%%%%%%%%%%%%%%%%%%%%%%%%%%%%%%%%%%%%%%%%%%%%%%%%%%%%%%%%%%%%%%%%

In this section we are going to characterise the exponents $\d_Q$ and $\d_P$ on the critical line as
function of parameters $d,\sigma$ and (if it is the case) of the parameter $\alpha$. To this
end, we proceed as follows. Note that the critical line is defined by equation (\ref{2.16}).
Hence, $\rho_{c^\ast}(T_c(\lambda),\lambda)=0$, and (\ref{2.15}) for $\lim_\Lambda \Dc = c^\ast$ takes
the form
\begin{equation}\label{3.4}
\lim_{\Lambda}\left\{\frac{1}{V}\frac{\lambda}{2\sqrt{\Dc}}\coth\frac{\beta_{c}\lambda}{2}\sqrt{\Dc}+
\frac{\widehat{h}^2}{V^{2\alpha}\Dc^2}\right\}=0 \ ,
%\eqno{(3.4)}
%\eqno{(3.5)}
\end{equation}
where $\beta_{c} := (k_B T_c(\lambda))^{-1}$.

Since for the scaled quasi-average (\ref{2.24}) we choose $h=\widehat{h}/V^\alpha$, by (\ref{2.15}) and
(\ref{2.16}) the limit $\lim_\Lambda c_{\Lambda,h}(T_c(\lambda),\lambda) = c^\ast$. Hence, $\lim_\Lambda\Dc =0$.
Now one has to distinguish two cases: \\ \\
\hspace*{2 cm}(a)
$T_c(\lambda)>0$ (\ref{2.17}), then (\ref{3.4}) is equivalent to
\begin{equation}\label{3.6}
\lim_{\Lambda}\left\{\frac{1}{V\Dc\beta_c}+
\frac{\widehat{h}^2}{V^{2\alpha}\Dc^2}\right\}=0 \ ,
%\eqno{(3.6)}$$
\end{equation}
\hspace*{2cm}(b) $T_c(\lambda_c)=0$ (\ref{2.17}), then (\ref{3.4}) is equivalent to
\begin{equation}\label{3.7}
\lim_{\Lambda}\left\{\frac{\lambda}{2V\sqrt{\Delta(c_{\Lambda,h}(0,\lambda))}}+
\frac{\widehat{h}^2}{V^{2\alpha}\Delta(c_{\Lambda,h}(0,\lambda))^2}\right\}=0 \ .
%\eqno{(3.7)}$$
\end{equation}
Both cases imply that for $V\rightarrow\infty$ the gap $\Delta$ in (\ref{3.4}) has the asymptotic
behaviour $\Delta\simeq V^{-\gamma}$ with $(0<\gamma<1) \wedge (0<\gamma<\alpha)$ for (\ref{3.6}) or
$(0<\g<2) \wedge (0<\g<\alpha)$ for (\ref{3.7}), correspondingly.

Note that it is equation (\ref{2.5}), which is the key to calulate these asymptotics. To make this argument
evident, we rewrite (\ref{2.5}) identically as
\begin{eqnarray}
&&(c_\Lambda-c^\ast)+[c^\ast-I_d(c_\Lambda,T_c(\lambda),\lambda)]+\left[I_d(c_\Lambda,T_c(\lambda),\lambda)-
\frac{1}{V} \sum_{q\in\Lambda^\ast,q\not=0}\frac{\lambda}
{2\Omega_q(c_\Lambda)}\coth\frac{\beta_c \lambda\Omega_q(c_\Lambda)}{2}\right] \nonumber \\
&&= \left(\frac{\widehat{h}}{V^\alpha\Delta}\right)^2+\frac{1}{V}\frac{\lambda}{2\sqrt{\Delta}}
\coth\left(\frac{\beta_c\lambda \sqrt{\Delta}}{2}\right) \ ,   \label{3.8}
%\eqno{(3.8)}$$
\end{eqnarray}
here we denote $c_\Lambda := c_{\Lambda,h}(T_c(\lambda),\lambda)$ and $\Delta := \Dc$ .
The asymptotic behaviour of the left-hand side of equation (\ref{3.8}) resulats from the hypothesis
$h=\widehat{h}/V^\alpha$ and from the convergence rate of the Darboux-Riemann sum to the limit of integral
$I_d(c_\Lambda,T_c(\lambda),\lambda)$. Together with asymptotics of the right hand-side this gives the power
$\gamma$.
%%%%%%%%%%%%%%%%%%%%%%%%%%%%%%%%%%%%%%%%%%%%%%%%%%%%%%%%%%%%%%%%%%%%%%%%%%%%%%%%%%%%%%%%%%%%%%%%%%%%%%%%%%%%%%
%\subsection{$T_c(\lambda)>0$: Abelian Fluctuations}
%%%%%%%%%%%%%%%%%%%%%%%%%%%%%%%%%%%%%%%%%%%%%%%%%%%%%%%%%%%%%%%%%%%%%%%%%%%%%%%%%%%%%%%%%%%%%%%%%%%%%%%%%%%%%%
%%%%%%%%%%%%%%%%%%%%%%%%%%%%%%%%%%%%%%%%%% Lemma %%%%%%%%%%%%%%%%%%%%%%%%%%%%%%%%%%%%%%%%%%%%%%%%%%%%%%%%%%%%%
\begin{proposition}\label{proposition:5.1} If ($T,\lambda$) belongs to the
critical line ($T_c(\lambda),\lambda$) with $T_c(\lambda)>0$, then the
asymptotic volume behaviour of the gap $\Dc$ is defined by \\ \begin{math}
\hspace*{2,5 cm}\gamma=\left\{\begin{array}{ll} if\; d>2\sigma\\ \hspace{1
cm}\gamma=\frac{2}{3}\alpha \hspace{1 cm}&for\; \alpha<\frac{3}{4}=\alpha_c\\
\hspace{1 cm}\gamma=\frac{1}{2}\hspace{1 cm}&for\; \alpha\geq\frac{3}{4}\\ \\
if\; d=2\sigma\\ \hspace{1 cm}\gamma=\frac{2}{3}\alpha+0 \hspace{1 cm}&for\;
\alpha<\frac{3}{4}=\alpha_c\\ \hspace{1 cm}\gamma=\frac{1}{2}+0 \hspace{1
cm}&for\; \alpha\geq\frac{3}{4}\\ \\ if\;\sigma<d<2\sigma\\ \hspace{1
cm}\gamma=2\alpha\frac{\sigma}{d+\sigma}\hspace{1 cm}&for\;
\alpha<\frac{1}{2}+\frac{\sigma}{2d}=\alpha_c\\ \hspace{1
cm}\gamma=\frac{\sigma}{d}\hspace{1 cm}&for\;
\alpha\geq\frac{1}{2}+\frac{\sigma}{2d} \end{array} \right.\ \end{math}
\end{proposition}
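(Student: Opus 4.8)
The plan is to extract $\g$ from the exact identity (\ref{3.8}), whose right-hand side collects the $q=0$ mode together with the source term while the left-hand side splits into three brackets of which I estimate the order in $V$ separately; the exponent is then fixed by the dominant balance. Because $T_c(\lambda)>0$, I may first expand the hyperbolic cotangents for small argument, which turns the right-hand side of (\ref{3.8}) into its leading form (\ref{3.6}), i.e. into the two competing contributions $\widehat{h}^2/(V^{2\a}\Dc^{\,2})$ and $(V\Dc\,\beta_c)^{-1}$. Writing $\Dc\simeq V^{-\g}$, these are of orders $V^{-2\a+2\g}$ and $V^{-1+\g}$, and the demand (\ref{3.6}) that they vanish reproduces the constraints $0<\g<1$ and $\g<\a$ stated just before the Proposition.

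For the left-hand side I treat the brackets in order. The first, $c_\L-c^\ast$, I control by Taylor expanding $\D(c)=a+2W'(c)$ at $c^\ast$, where $\D(c^\ast)=0$ and $\D'(c^\ast)=2W''(c^\ast)\ge 2w>0$; hence $c_\L-c^\ast\simeq \Dc/(2W''(c^\ast))$ is of order $V^{-\g}$. The second, $c^\ast-I_d(c_\L,T_c(\lambda),\lambda)$, I rewrite with the help of the critical-line identity $c^\ast=I_d(c^\ast,T_c(\lambda),\lambda)$ from (\ref{2.17}) and the same cotangent expansion as $\beta_c^{-1}(2\pi)^{-d}\int_{B_d}\Dc\,\o^{-2}(\Dc+\o^2)^{-1}\,d^dq$ up to lower order. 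Feeding in the small-$q$ law $\o^2\simeq a^\sigma|q|^\sigma$ from (\ref{eq:1.12}) and rescaling $q\to\Dc^{1/\sigma}q$ gives three regimes: the rescaled integral converges and the bracket is of order $\Dc^{\,(d-\sigma)/\sigma}$ for $\sigma<d<2\sigma$; it is of order $\Dc\ln(1/\Dc)$ for $d=2\sigma$ (the logarithmic divergence of $\int d^dq/\o^4$, which produces the ``$+0$'' in the exponents); and of order $\Dc$ for $d>2\sigma$, where $\int d^dq/\o^4$ converges. The third bracket is the Darboux--Riemann error $I_d(c_\L)-V^{-1}\sum_{q\neq0}(\cdots)$, whose size is governed by the ratio of the peak width $\Dc^{1/\sigma}$ of the integrand at $q=0$ to the lattice spacing $\simeq V^{-1/d}$.

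With these orders the exponent follows by matching the dominant terms on the two sides of (\ref{3.8}). For $d>2\sigma$ the first two left-hand brackets are both $\simeq\Dc$, so the balance is $\Dc\simeq\widehat{h}^2V^{-2\a}\Dc^{-2}+V^{-1}\Dc^{-1}$: the source term dominates for $\a<3/4$ giving $\g=\tfrac23\a$, the zero-mode term dominates for $\a\ge3/4$ giving $\g=\tfrac12$, and the two coincide at $\a_c=3/4$. For $\sigma<d<2\sigma$ the second bracket $\simeq\Dc^{(d-\sigma)/\sigma}$ dominates the first, and matching it against the two right-hand terms yields $\g=2\a\sigma/(d+\sigma)$ in the source-dominated range $\a<\a_c$ and $\g=\sigma/d$ in the zero-mode-dominated range $\a\ge\a_c$, the two expressions agreeing exactly at $\a_c=\tfrac12+\tfrac{\sigma}{2d}$. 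The borderline $d=2\sigma$ repeats the $d>2\sigma$ computation up to the logarithmic factor, which I record by the notation $\g=\tfrac23\a+0$ and $\g=\tfrac12+0$.

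The step I expect to be the genuine obstacle is the control of the Darboux--Riemann error (the third bracket) and, with it, the precise status of the zero-mode term $(V\Dc\beta_c)^{-1}$ on the right of (\ref{3.8}). In each zero-mode-dominated crossover one finds $\g=\sigma/d$ (respectively $\g=\tfrac12$), i.e. the peak width $\Dc^{1/\sigma}$ becomes comparable to the lattice spacing $V^{-1/d}$; there the Riemann sum no longer resolves the singular peak, the naive ``negligible'' estimate for this bracket breaks down, and one must instead show that the deviation of the sum from the integral cancels the excluded $q=0$ contribution to leading order while keeping a quantitative grip on the remainder. Turning this cancellation into a uniform estimate for the convergence rate of the lattice sum to the singular integral, jointly as $\Dc\to0$ and $V\to\infty$, is exactly what pins down the thresholds $d=2\sigma$ and the crossover values $\a_c$; once it is in place the algebraic balances above are routine.
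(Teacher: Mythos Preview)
Your proposal is correct and follows essentially the same route as the paper: you work from the identity (\ref{3.8}), reduce the right-hand side to the two competing contributions (\ref{3.9}) via the small-argument expansion of $\coth$ (valid because $T_c(\lambda)>0$), estimate the left-hand side brackets, and determine $\gamma$ by dominant balance. The paper itself only sketches this --- it identifies the right-hand side asymptotics, defines $\alpha_c$ by equating the two terms in (\ref{3.9}), and then refers the three $\sigma$-regimes to \cite{JZ98} --- so your treatment of the second bracket (rewriting it via (\ref{2.17}) as $I_d(c^\ast)-I_d(c_\Lambda)$, expanding the integrand, and rescaling $q\to\Delta^{1/\sigma}q$ to read off the orders $\Delta$, $\Delta\ln(1/\Delta)$, $\Delta^{(d-\sigma)/\sigma}$ according to whether $d>2\sigma$, $d=2\sigma$, $\sigma<d<2\sigma$) is exactly the content that the paper outsources. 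Your identification of the Darboux--Riemann error as the genuine technical point, and of the crossover $\gamma=\sigma/d$ as the regime where the peak width $\Delta^{1/\sigma}$ meets the mesh $V^{-1/d}$, is on target and is precisely what a full proof (as in \cite{JZ98}) must control.
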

%%%%%%%%%%%%%%%%%%%%%%%%%%%%%%%%%%%%%%%%%%%%%%%%%%%%%%%%%%%%%%%%%%%%%%%%%%%%%%%%%%%%%%%%%%%%%%%%%%%%%%%%%%%%%%%
%\vspace*{0.8 cm}\\
%\begin{proof}
Since $T_c(\lambda)>0$, the right side of (\ref{3.8}) has asymptotics (\ref{3.6}) or
\begin{equation}\label{3.9}
O [(V \Delta)^{-1} + (V^\alpha \Delta)^{-2} ] \ .
%\eqno{(3.9)}$$
\end{equation}
Let us define $\alpha_c$ such that $O[(V \Delta)^{-1}] = O[(V^{\alpha_c}\Delta)^{-2}]$
Then for the asymptotics (\ref{3.9}) one obviously gets
$O [(V \Delta)^{-1} + (V^{\alpha_c} \Delta)^{-2}] = O [(V \Delta)^{-1}]$,
%$$\left(\left(\frac{1}{V^{\alpha_c}\Delta}\right)^2+\frac{1}{V\Delta}\right) = \frac{1}{V\Delta}\eqno{(3.10)}$$
i.e. for $\alpha = \alpha_c$, the gap $\Delta$ has the same asymptotic behaviour as for
$\widehat{h}=0$. The three regimes of the potentiel decreasing $\sigma$ indicated in Proposition
\ref{proposition:5.1} are considered in details in \cite{JZ98}.
%%%%%%%%%%%%%%%%%%%%%%%%%%%%%%%%%%%%%%%%%%%%%%%% Theorem %%%%%%%%%%%%%%%%%%%%%%%%%%%%%%%%%%%%%%%%%%%%%%%%%%%%%%
\begin{theo} \label{thm:5.1} If ($T,\lambda$) belongs to the critical line
($T_c(\lambda),\lambda$) with $T_c(\lambda)>0$, then the algebra of
fluctuation operators is abelian. The momentum fluctuation operator
$F_{\d_P}(P)$ is normal ($\d_P=0$) while the position fluctuation operator
$F_{\d_Q}(Q)$ is abnormal with a critical exponent given by\\
\\
\begin{math}
\hspace*{2,5 cm}\d_Q=\left\{\begin{array}{ll}
               if\; d>2\sigma\\
              \hspace{1 cm}\d_Q=\frac{1}{3}\alpha \hspace{1 cm}&for\; \alpha<\frac{3}{4}=\alpha_c\\
              \hspace{1 cm}\d_Q=\frac{1}{4}\hspace{1 cm}&for\; \alpha\geq\frac{3}{4}\\
\\
               if\; d=2\sigma\\
               \hspace{1 cm}\d_Q=\frac{1}{3}\alpha+0 \hspace{1 cm}&for\; \alpha<\frac{3}{4}=\alpha_c\\
               \hspace{1 cm}\d_Q=\frac{1}{4}+0 \hspace{1 cm}&for\; \alpha\geq\frac{3}{4}\\
\\
               if\;\sigma<d<2\sigma\\
               \hspace{1 cm}\d_Q=\alpha\frac{\sigma}{d+\sigma}\hspace{1 cm}&si\; \alpha<\frac{1}{2}+
               \frac{\sigma}{2d}=\alpha_c\\
               \hspace{1 cm}\d_Q=\frac{\sigma}{2d}\hspace{1 cm}&for\; \alpha\geq\frac{1}{2}+\frac{\sigma}{2d}
              \end{array}
        \right.\
\end{math}\
\end{theo}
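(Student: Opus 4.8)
The plan is to obtain Theorem \ref{thm:5.1} as a direct corollary of Proposition \ref{proposition:5.1} together with the explicit variance formulas (\ref{2.27}), (\ref{2.28}) and the commutator (\ref{1.61}). The single essential input is that on the critical line the gap vanishes with a power-law rate $\Dc \simeq V^{-\gamma}$, where $\gamma = \gamma(\alpha,d,\sigma)$ is exactly the exponent tabulated in Proposition \ref{proposition:5.1}. Since here $T_c(\lambda)>0$, i.e. $\beta_c<\infty$, and $\sqrt{\Dc}\to 0$, I would first replace the hyperbolic cotangents in (\ref{2.27}) and (\ref{2.28}) by their small-argument asymptotics $\coth(\tfrac{\beta_c\lambda}{2}\sqrt{\Dc})\sim(\tfrac{\beta_c\lambda}{2}\sqrt{\Dc})^{-1}$. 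This finite-temperature step is what ultimately distinguishes the result from the quantum-critical regime of Remark \ref{rem:3.12}(ii).

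First I would treat the displacement fluctuation. Inserting the $\coth$ asymptotic into (\ref{2.27}) collapses it to
$$\lim_\L \frac{1}{V^{2\d_Q}}\,\frac{1}{\beta_c\,\Dc}\;\sim\;\lim_\L\frac{V^{\gamma-2\d_Q}}{\beta_c} \ .$$
A finite, nonzero variance --- the requirement for a genuine central limit (\ref{1.3}) --- forces the balance $\gamma-2\d_Q=0$, i.e. $\d_Q=\gamma/2$. Substituting the three decay regimes of $\gamma$ from Proposition \ref{proposition:5.1} then yields precisely the asserted piecewise formula for $\d_Q$; in particular the case split in $\alpha$ (below or above $\alpha_c$) is inherited verbatim. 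This shows the displacement operator is abnormal with $\d_Q>0$.

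Next I would treat the momentum fluctuation and the commutation. In (\ref{2.28}) the prefactor $\sqrt{\Dc}$ and the $1/\sqrt{\Dc}$ coming from the same $\coth$ asymptotic cancel exactly, leaving
$$\lim_\L \frac{1}{V^{2\d_P}}\,\frac{m}{\beta_c} \ ,$$
which is finite and nonzero if and only if $\d_P=0$, so the momentum operator is normal, independently of $\alpha,d,\sigma$. Abelianness then follows from (\ref{1.61}): since $\d_P+\d_Q=\gamma/2>0$, one has $\lim_\L V^{-(\d_P+\d_Q)}(\hbar/i)=0$, hence $[F_{\d_P}(P),F_{\d_Q}(Q)]=0$ and the fluctuation algebra generated by $F_{\d_Q}(Q)$ and $F_{\d_P}(P)$ is abelian.

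The genuinely hard step is Proposition \ref{proposition:5.1} itself, namely the determination of $\gamma$, where the scaling hypothesis $h=\widehat{h}/V^\alpha$, the competition between the two terms on the right-hand side of (\ref{3.8}), and the convergence rate of the Darboux--Riemann sum to $I_d$ must be controlled separately in each of $d>2\sigma$, $d=2\sigma$, $\sigma<d<2\sigma$; this analysis is the one carried out in \cite{JZ98}. Granting Proposition \ref{proposition:5.1}, Theorem \ref{thm:5.1} reduces to the elementary exponent bookkeeping above, the only conceptual point being that the finite-temperature $\coth$ asymptotic produces $\d_Q=\gamma/2$, in contrast with the quantum-critical case $T_c=0$ of Remark \ref{rem:3.12}(ii) where $\coth\to 1$ forces $\d_Q=\gamma/4$.
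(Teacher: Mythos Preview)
Your proposal is correct and follows essentially the same route as the paper: you reduce (\ref{2.27}) and (\ref{2.28}) via the finite-temperature asymptotic $\coth x\sim 1/x$ to obtain $\d_Q=\gamma/2$ and $\d_P=0$, read off the values of $\gamma$ from Proposition \ref{proposition:5.1}, and conclude abelianness from (\ref{1.61}) since $\d_P+\d_Q>0$. The paper does the same, only presenting the commutator argument first and writing the reduced variances explicitly as (\ref{3.22}) and (\ref{3.23}).
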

%%%%%%%%%%%%%%%%%%%%%%%%%%%%%%%%%%%%%%%%%%%%%%%%%%%%%%%%%%%%%%%%%%%%%%%%%%%%%%%%%%%%%%%%%%%%%%%%%%%%%%%%%%%%%%%%
\begin{proof}
To check the abelian character of the algebra of
fluctuation operators generated by $F^{\d_Q}$ and $F^{\d_P}$, it is enough to note that the limit of the
commutator:
\begin{equation*}
\lim_\Lambda\left[ F_\Lambda^{\d_P},F_\Lambda^{\d_Q}\right]=
\lim_\Lambda\frac{1}{\vert\Lambda\vert^{1+\d_P+\d_Q}}
\sum_{l,l'\in\Lambda}[P_l,Q_{l'}]=0 \ .
\end{equation*}
%\eqno{(3.21)}$$
The second part of the theorem  results from (\ref{2.27}) and (\ref{2.28}), which get on the critical line for
$h=\widehat{h}/V^\alpha$ the form:
\begin{equation}\label{3.22}
\lim_\L\omega_{\beta, \Lambda, c_h}\left(\{\frac{1}{V^{\frac{1}{2}+\d_Q}}
\sum_{i\in\L}(Q_i-\omega_{\beta, \Lambda, c_h}(Q_i))\}^2\right)=
\lim_\L\frac{1}{V^{2\delta_Q}}\frac{kT_c(\lambda)}{\Dc} \ ,
%\eqno{(3.22)}$$
\end{equation}
and
\begin{equation}\label{3.23}
\lim_\L\omega_{\beta, \Lambda, c_h}\left(\{\frac{1}{V^{\frac{1}{2}+\d_P}}
\sum_{i\in\L}(P_i-\omega_{\beta, \Lambda, c_h}(P_i))\}^2\right)=
\lim_\L\frac{1}{V^{2\delta_P}}mkT_c(\lambda) \ .
%\eqno{(3.23)}$$
\end{equation}
So the variance (\ref{3.22}) is not trivial if and only if $\d_Q=\g/2$ and (\ref{3.23}) is not trivial if
and only if $\d_P=0$. Here the value of $\d_Q$ is defined by  Proposition \ref{proposition:5.1}.
\end{proof}
%%%%%%%%%%%%%%%%%%%%%%%%%%%%%%%%%%%%%%%%%%%%%%%%%%%%%%%%%%%%%%%%%%%%%%%%%%%%%%%%%%%%%%%%%%%%%%%%%%%%%%%%%%%

We comment that if one puts in the preceding theorem $\sigma =2$, then the statement corresponds to
short-range interactions when $\sigma\geq2$, see (\ref{eq:1.12}). This result coincides with that in
\cite{VZ1} if one puts $\alpha = \infty$, i.e. when there are no quasi-average sources.
%%%%%%%%%%%%%%%%%%%%%%%%%%%%%%%%%%%%%%%%%%% subsection %%%%%%%%%%%%%%%%%%%%%%%%%%%%%%%%%%%%%%%%%%%%%%%%%%%%%%%
%\subsection{$T_c(\lambda_c)=0$: Non-abelian fluctuations}
\subsection{Non-abelian algebra of fluctuations on the critical line} \label{Non-abel-fl}
%%%%%%%%%%%%%%%%%%%%%%%%%%%%%%%%%%%%%%%%%%%%%%%%%%%%%%%%%%%%%%%%%%%%%%%%%%%%%%%%%%%%%%%%%%%%%%%%%%%%%%%%%%%%%%
%%%%%%%%%%%%%%%%%%%%%%%%%%%%%%%%%%%%%%%%%% Lemma %%%%%%%%%%%%%%%%%%%%%%%%%%%%%%%%%%%%%%%%%%%%%%%%%%%%%%%%%%%%%
\begin{proposition}\label{proposition:5.2} If ($T,\lambda$) coincides with the critical point $(0,\lambda_c)$,
then the asymptotic volume behaviour of the gap $\Delta(c_{\L,h}(0,\lambda),0)$ is
given by\\ \\ \begin{math} \hspace*{2,5 cm}\gamma=\left\{\begin{array}{ll}
if\; d>\frac{3\sigma}{2}\\ \hspace{1 cm}\gamma=\frac{2}{3}\alpha \hspace{1
cm}&for\; \alpha<1=\alpha_c\\ \hspace{1 cm}\gamma=\frac{2}{3}\hspace{1
cm}&for\; \alpha\geq 1\\ \\ if\; d=\frac{3\sigma}{2}\\ \hspace{1
cm}\gamma=\frac{2}{3}\alpha+0 \hspace{1 cm}&for\;\alpha<1=\alpha_c\\
\hspace{1 cm}\gamma=\frac{1}{2}+0 \hspace{1 cm}&for\; \alpha\geq 1\\ \\
if\;\frac{\sigma}{2}<d<\frac{3\sigma}{2}\\ \hspace{1
cm}\gamma=2\alpha\frac{2\sigma}{2d+3\sigma}\hspace{1 cm}&for\;
\alpha<\frac{1}{2}+\frac{3\sigma}{4d}=\alpha_c\\ \hspace{1
cm}\gamma=\frac{\sigma}{d}\hspace{1 cm}&for\;
\alpha\geq\frac{1}{2}+\frac{3\sigma}{4d} \end{array} \right.\ \end{math}
\end{proposition}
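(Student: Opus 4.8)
The plan is to treat Proposition \ref{proposition:5.2} as the zero-temperature counterpart of Proposition \ref{proposition:5.1}, reusing the identity (\ref{3.8}) but with the right-hand side replaced by its $T_c=0$ form (\ref{3.7}). The only structural change from the case $T_c(\lambda)>0$ is that $\coth(\beta_c\lambda\,\Omega_q(c_\Lambda)/2)\to 1$ as $\beta_c\to\infty$, so that the zero-mode source degenerates from $\sim (V\Delta)^{-1}$ into $\sim (V\sqrt{\Delta})^{-1}$, while the small-$q$ behaviour of the integrand of $I_d$ changes from $(\Delta+\omega_q^2)^{-1}$ into $(\Delta+\omega_q^2)^{-1/2}$. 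First I would record that on the critical line $\lim_\Lambda c_{\Lambda,h}(0,\lambda_c)=c^\ast$, hence $\lim_\Lambda\Delta=0$, and that one seeks a power law $\Delta\simeq V^{-\gamma}$.

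The core of the argument is the asymptotics of the left-hand side of (\ref{3.8}) as $\Delta\to 0$. Writing the first bracket as $(c_\Lambda-c^\ast)\sim \Delta/(2W''(c^\ast))$ and using $c^\ast=I_d(c^\ast,0,\lambda_c)$ for the second bracket $c^\ast-I_d(c_\Lambda,0,\lambda_c)$, I would extract the singular part of $I_d(c_\Lambda)-I_d(c^\ast)$ by the rescaling $q=\Delta^{1/\sigma}u$ together with the low-momentum law $\omega_q^2\sim a^\sigma|q|^\sigma$ (respectively $a^2q^2$), see (\ref{eq:1.12}). This produces a prefactor $\Delta^{(2d-\sigma)/(2\sigma)}$ times an integral $\int u^{d-1}\big[(1+u^\sigma)^{-1/2}-u^{-\sigma/2}\big]\,du$ whose tail behaves like $u^{d-1-3\sigma/2}$. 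Hence the integral converges precisely for $d<3\sigma/2$, diverges for $d>3\sigma/2$, and is logarithmic at $d=3\sigma/2$; this is exactly the shift of the finite-$T$ threshold $d=2\sigma$ down to $d=3\sigma/2$. Consequently the dominant left-hand side behaviour is $\Delta^{(2d-\sigma)/(2\sigma)}$ for $\sigma/2<d<3\sigma/2$ (the non-analytic term beating the linear one, since its exponent is then less than $1$), plain $\Delta$ for $d>3\sigma/2$ (where $I_d'(c^\ast)$ is finite), and $\Delta\ln(1/\Delta)$ at the boundary.

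Matching the two sides then fixes $\gamma$. In the strong-source regime the term $\widehat{h}^2/(V^{2\alpha}\Delta^2)$ dominates the right-hand side: balancing it against $\Delta^{(2d-\sigma)/(2\sigma)}$ gives $\Delta^{(2d+3\sigma)/(2\sigma)}\simeq V^{-2\alpha}$, i.e. $\gamma=2\alpha\cdot 2\sigma/(2d+3\sigma)$, and against $\Delta$ gives $\gamma=2\alpha/3$. In the weak-source regime the zero-mode term $(V\sqrt{\Delta})^{-1}$ (together with the finite-size correction discussed below) dominates: balancing against $\Delta^{(2d-\sigma)/(2\sigma)}$ gives $\Delta^{d/\sigma}\simeq V^{-1}$, i.e. $\gamma=\sigma/d$, and against $\Delta$ gives $\Delta^{3/2}\simeq V^{-1}$, i.e. $\gamma=2/3$. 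The crossover value $\alpha_c$ is where the two regimes coincide; equating the exponents yields $\alpha_c=\tfrac12+\tfrac{3\sigma}{4d}$ for $\sigma/2<d<3\sigma/2$ and $\alpha_c=1$ for $d>3\sigma/2$, in agreement with the statement, the boundary $d=3\sigma/2$ inheriting the logarithmic corrections indicated by the symbol ``$+0$''.

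The main obstacle I anticipate is the third bracket in (\ref{3.8}), the Darboux--Riemann difference $I_d(c_\Lambda)-V^{-1}\sum_{q\neq 0}(\ldots)$ between the continuum integral and the discrete sum over nonzero modes. One must show that for $d>3\sigma/2$ its convergence rate makes it subdominant to the zero-mode term $(V\sqrt{\Delta})^{-1}$, whereas for $\sigma/2<d<3\sigma/2$ it is precisely of the same order and supplies the weak-source balance: the smallest nonzero mode has $|q_{\min}|\sim V^{-1/d}$, so $\omega_{q_{\min}}^2\sim V^{-\sigma/d}$ matches $\Delta\sim V^{-\sigma/d}$ exactly at $\gamma=\sigma/d$, which is the finite-size-scaling interpretation of the $\alpha\geq\alpha_c$ answer. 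Quantifying this convergence rate at $T=0$ is the delicate step; it is the analogue of the estimates deferred to \cite{JZ98} in Proposition \ref{proposition:5.1}, and I would carry it out by splitting the Brillouin zone into the scaling region $|q|\lesssim\Delta^{1/\sigma}$ and its complement and comparing sum and integral on each.
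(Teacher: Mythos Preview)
Your proposal is correct and follows essentially the same route as the paper: the paper's own argument after Proposition~\ref{proposition:5.2} is only a sketch, noting that at $(0,\lambda_c)$ the right-hand side of (\ref{3.8}) becomes $O[(V^\alpha\Delta)^{-2}+(V\sqrt{\Delta})^{-1}]$, defining $\alpha_c$ by balancing these two terms, and then deferring the three regimes in $d$ versus $\sigma$ to \cite{JZ98}. Your write-up is in fact more detailed than what the paper supplies --- your rescaling $q=\Delta^{1/\sigma}u$ to extract the $\Delta^{(2d-\sigma)/(2\sigma)}$ singularity of $I_d(c_\Lambda)-I_d(c^\ast)$, the convergence threshold $d=3\sigma/2$ from the tail $u^{d-1-3\sigma/2}$, and the explicit balancing that recovers each value of $\gamma$ are precisely the computations the paper leaves implicit by citing \cite{JZ98}.
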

%%%%%%%%%%%%%%%%%%%%%%%%%%%%%%%%%%%%%%%%%%%%%%%%%%%%%%%%%%%%%%%%%%%%%%%%%%%%%%%%%%%%%%%%%%%%%%%%%%%%%%%%%%%%%%%%
%\begin{proof}
At the point $(0,\lambda_c)$ of the critical line, we obtain the limit (\ref{3.7}), i.e. the gap has the
asymptotic $\Delta \simeq V^{-\gamma}$. Then the right-hand side of (\ref{3.8}) gets the following asymptotic
form:
$O[({V^\alpha\Delta})^{-2}+ (V\Delta^{\frac{1}{2}})^{-1}]$.
Similar to Proposition \ref{proposition:5.1} we define $\alpha = \alpha_c$ in such a way that
$O[({V^\alpha_c \Delta})^{-2}] = O[(V\Delta^{\frac{1}{2}})^{-1}]$. Again one has to consider three
regimes for the value of $\sigma$ as it is indicated in Proposition \ref{proposition:5.2} \cite{JZ98}.
%%%%%%%%%%%%%%%%%%%%%%%%%%%%%%%%%%%%%%%%%%%%%%%% Theorem %%%%%%%%%%%%%%%%%%%%%%%%%%%%%%%%%%%%%%%%%%%%%%%%%%%%%%
\begin{theo} \label{thm:5.2} If $(T,\lambda)$ coincides with the critical
point $(0,\lambda_c)$, then the algebra of fluctuation operators is
non-abelian because the position fluctuation operator $F_{\d_Q}(Q)$ is
abnormal ($\d_Q>0$), while the momentum fluctuation operator $F_{\d_P}(P)$ is
supernormal (squeezed) with $\d_P=-\d_Q$ and\\
\\
\begin{math} \hspace*{2,5cm}\delta_Q=\left\{\begin{array}{ll} if\; d>\frac{3\sigma}{2}\\
\hspace{1cm}\delta_Q =\frac{1}{6}\alpha \hspace{1 cm}&for\; \alpha<1=\alpha_c\\
\hspace{1 cm}\delta_Q =\frac{1}{6}\hspace{1 cm}&for\; \alpha\geq 1\\ \\ if\;
d=\frac{3\sigma}{2}\\ \hspace{1 cm}\delta_Q =\frac{1}{6}\alpha+0 \hspace{1
cm}&for\; \alpha<1=\alpha_c\\ \hspace{1 cm}\delta_Q =\frac{1}{8}+0 \hspace{1
cm}&for\; \alpha\geq 1\\ \\ if\;\frac{\sigma}{2}<d<\frac{3\sigma}{2}\\
\hspace{1 cm}\delta_Q =\alpha\frac{\sigma}{2d+3\sigma}\hspace{1 cm}&for\;
\alpha<\frac{1}{2}+\frac{3\sigma}{4d}=\alpha_c\\ \hspace{1
cm}\delta_Q =\frac{\sigma}{4d}\hspace{1 cm}&for\;
\alpha\geq\frac{1}{2}+\frac{3\sigma}{4d} \end{array} \right.\ \end{math}
\end{theo}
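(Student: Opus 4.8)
The plan is to extract both fluctuation exponents directly from the zero-temperature variance formulas (\ref{2.271}), (\ref{2.281}), using the gap asymptotics furnished by Proposition \ref{proposition:5.2}, and then to read off non-abelianness from the commutator (\ref{1.61}). The conceptual point that separates this case from the abelian Theorem \ref{thm:5.1} is that at $(T,\lambda)=(0,\lambda_c)$ one has $T_c(\lambda_c)=0$, so $\beta=\infty$ and each hyperbolic cotangent in (\ref{2.27}), (\ref{2.28}) tends to $1$ (the Gibbs state becomes the ground state); this is exactly why the variances reduce to (\ref{2.271}), (\ref{2.281}), and it is what forces the momentum fluctuation to squeeze rather than remain normal.

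First I would insert the ansatz $\Delta(c_{\Lambda,h}(0,\lambda_c)) \simeq V^{-\gamma}$ from Proposition \ref{proposition:5.2} into (\ref{2.271}). The displacement variance then scales like $V^{-2\delta_Q}\,V^{\gamma/2}$, so a finite nonzero limit demands $\delta_Q=\gamma/4>0$, i.e. $F_{\delta_Q}(Q)$ is abnormal. The same substitution in (\ref{2.281}) gives a momentum variance scaling like $V^{-2\delta_P}\,V^{-\gamma/2}$, whence $\delta_P=-\gamma/4<0$, i.e. $F_{\delta_P}(P)$ is supernormal (squeezed). In particular $\delta_P=-\delta_Q$, in sharp contrast to the abelian regime of Theorem \ref{thm:5.1}, where the small-argument expansion $\coth x\sim 1/x$ (valid there because $T_c(\lambda)>0$ while $\sqrt{\Delta}\to 0$) produced $\delta_Q=\gamma/2$, $\delta_P=0$ instead.

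Next I would substitute these exponents into the commutator (\ref{1.61}). Because $\delta_P+\delta_Q=0$, the prefactor $V^{-(\delta_P+\delta_Q)}$ is identically $1$, so
\begin{equation*}
\lim_\Lambda[F^{\delta_P}_\Lambda(P),F^{\delta_Q}_\Lambda(Q)]=\frac{\hbar}{i}\neq 0 \ ,
\end{equation*}
which establishes that the fluctuation algebra is non-abelian. The explicit table then follows by simply setting $\delta_Q=\gamma/4$ (and $\delta_P=-\delta_Q$) in each of the three decay regimes $d>3\sigma/2$, $d=3\sigma/2$, $\sigma/2<d<3\sigma/2$ and each sub-case $\alpha<\alpha_c$, $\alpha\geq\alpha_c$ tabulated in Proposition \ref{proposition:5.2}.

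The main --- indeed the only genuinely laborious --- obstacle is Proposition \ref{proposition:5.2} itself, namely determining $\gamma$. That analysis proceeds from the self-consistency equation (\ref{2.5}) rewritten as the identity (\ref{3.8}): one balances the source term $(\widehat{h}/V^\alpha\Delta)^2$ against the zero-mode term $(V\Delta^{1/2})^{-1}$ on the right-hand side, and against the convergence rate of the Darboux--Riemann sum to the integral $I_d(c_\Lambda,0,\lambda_c)$ on the left, the three regimes reflecting the $k\to 0$ asymptotics (\ref{eq:1.12}) of $\tilde{\phi}$. Granting that asymptotic bookkeeping (carried out in \cite{JZ98}), the theorem is a direct substitution.
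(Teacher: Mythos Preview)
Your proposal is correct and follows essentially the same approach as the paper's own proof: reduce (\ref{2.27}), (\ref{2.28}) to their zero-temperature forms via $\coth\to 1$, read off $\delta_Q=\gamma/4=-\delta_P$ from the gap asymptotics of Proposition \ref{proposition:5.2}, and conclude non-abelianness from $\delta_Q+\delta_P=0$ in (\ref{1.61}). Your added contrast with the $T_c(\lambda)>0$ case and explicit acknowledgement that the real work sits in Proposition \ref{proposition:5.2} (delegated to \cite{JZ98}) are helpful clarifications not spelled out in the paper's terse argument.
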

%%%%%%%%%%%%%%%%%%%%%%%%%%%%%%%%%%%%%%%%%%%%%%%%%%%%%%%%%%%%%%%%%%%%%%%%%%%%%%%%%%%%%%%%%%%%%%%%%%%%%%%%%%%%%%%%
\begin{proof}
By (\ref{2.17}) the limit $\lim_{\lambda \rightarrow \lambda_c -0} (T_c (\lambda),\lambda) = (0,\lambda_c)$
yields: $\beta_c =(k_B T_c (\lambda))^{-1}\rightarrow\infty$. Then the  variances (\ref{2.27}) and (\ref{2.28})
become
\begin{equation}\label{3.37}
\lim_\L\omega_{\infty, \Lambda, c_h}\left(\{\frac{1}{V^{\frac{1}{2}+\d_Q}}
\sum_{i\in\L}(Q_i-\omega_{\beta, \Lambda, c_h}(Q_i))\}^2\right)=
\lim_\L\frac{1}{V^{2\delta_Q}}\frac{\lambda}{\sqrt{\Delta(c_{\Lambda,h}(0,\lambda_c))}}\ ,
%\eqno{(3.22)}$$
\end{equation}
and
\begin{equation}\label{3.38}
\lim_\L\omega_{\infty, \Lambda, c_h}\left(\{\frac{1}{V^{\frac{1}{2}+\d_P}}
\sum_{i\in\L}(P_i-\omega_{\beta, \Lambda, c_h}(P_i))\}^2\right)=
\lim_\L\frac{1}{V^{2\delta_P}}\frac{\lambda m}{2}\sqrt{\Delta(c_{\Lambda,h}(0,\lambda_c))} \ .
%\eqno{(3.23)}$$
\end{equation}
Since $\Delta \simeq V^{-\gamma}$, one has just to apply Proposition \ref{proposition:5.2} for
$\d_Q=\g/4=-\d_P$ to get the possible values of $\d_Q$. As far as $\d_Q+\d_P=0$. The non-abelian nature of
the algebra of fluctuation operators follows from commutator (\ref{1.61}).
\end{proof}
%%%%%%%%%%%%%%%%%%%%%%%%%%%%%%%%%%%%%%%%%%%%%%%%%%%%%%%%%%%%%%%%%%%%%%%%%%%%%%%%%%%%%%%%%%%%%%%%%%%%%%%%%%%%%%%

Note that the same remark about the $\sigma = 2$, as at the end of Section \ref{Ab-alg-fl}, is also valid
for the  quantum critical fluctuations at the point $(0,\lambda_c)$.
%%%%%%%%%%%%%%%%%%%%%%%%%%%%%%%%%%%%%%%%%%%%%%%% WZ-2 %%%%%%%%%%%%%%%%%%%%%%%%%%%%%%%%%%%%%%%%%%%%%%%%%%%%%%%%%
%%%%%%%%%%%%%%%%%%%%%%%%%%%%%%%%%%%%%%%%%%%%%%% Section %%%%%%%%%%%%%%%%%%%%%%%%%%%%%%%%%%%%%%%%%%%%%%%%%%%%%%%
\section{Concluding remarks}

In this paper we scrutinise the Bogoliubov method of quasi-averages for quantum systems that manifest
phase transitions.

First, we re-examine a possible application of this method to analysis of the phase transitions with
Spontaneous Symmetry Breaking (SSB). To this aim we consider examples of the Bose-Einstein condensation in
continuous perfect and interacting systems. The existence of different type of generalised condensations leads
to conclusion (see Sections \ref{sec:gBEC-BQ-A} and \ref{sec:BogAppr-Q-A}) that the only
physically reliable quantities are those that defined by the Bogoliubov quasi-averages.

In the second part of the paper we advocate the Bogoluibov method of the \textit{scaled} quasi-averages.
By taking the structural quantum phase transition as a basic example, we scrutinise a relation between SSB
and the critical quantum fluctuations. Our analysis in Section \ref{BQ-A-QFl} shows that again the
scaled quasi-averages give an adequate tool for description of the algebra of quantum fluctuation operators.
The subtlety of quantum fluctuations is already visible on the level of existence-non-existence of the order
parameter that can be \textit{destroyed} by quantum fluctuations even for the zero temperature,
Section \ref{QPT-Fluct-Q-A}. The standard Bogoluibov method is sufficient to this analysis.

A relevance of the scaled Bogoluibov quasi-averages becomes evident for (\textit{mesoscopic}) quantum 
fluctuation operators defined by the Quantum Central Limit since this limit becomes \textit{sensible} to 
the value of the scaling parameter rate $\alpha$. In contract to the non-abelian algebra of \textit{normal} 
quantum fluctuation operators in the disordered phase the \textit{critical} quantum fluctuations in the 
ordered phase and on the critical line do depend on the parameter $\alpha$, see Section \ref{Q-A-Cr-Q-Fl}. 
This concerns \textit{abnormal} and \textit{supernormal} (\textit{squeezed}) quantum fluctuations. They 
manifest variety of abelian-non-abelian algebras of fluctuation operators, Sections 
\ref{QF-below}-\ref{Non-abel-fl}, which are all $\alpha$-dependent.

%%%%%%%%%%%%%%%%%%%%%%%%%%%%%%%%%%%%%%%%%%%%%%% Appendix %%%%%%%%%%%%%%%%%%%%%%%%%%%%%%%%%%%%%%%%%%%%%%%%%%%%%%%%
\section{Appendix A}

In this Appendix we reproduce, for the reader's convenience, the statement of the basic theorem of Fannes,
Pul\`{e} and Verbeure \cite{FPV1}, see also \cite{PVZ} for the extension to nonzero momentum, and Verbeure's
book \cite{Ver}. Unfortunately, neither \cite{FPV1} nor \cite{PVZ} show that the states
$\omega_{\beta,\mu,\phi},\phi \in [0,2\pi)$ in the theorem below are ergodic. The simple, but instructive
proof of this fact was given by Verbeure in his book \cite{Ver}.
%%%%%%%%%%%%%%%%%%%%%%%%%%%%%%%%%%%%%%%%% Proposition %%%%%%%%%%%%%%%%%%%%%%%%%%%%%%%%%%%%%%%%%%%%%%
\begin{proposition}\label{prop:A.1}
Let $\omega_{\beta,\mu}$ be an analytic, gauge-invariant equilibrium state. If
$\omega_{\beta,\mu}$ exhibits ODLRO (\ref{4.16}), then there exist ergodic states
$ \omega_{\beta,\mu,\phi},\phi \in [0,2\pi)$, not gauge invariant, satisfying {\rm{:}}
(i) $\forall \theta,\phi \in [0,2\pi)$ such that $\theta \ne \phi$, $\omega_{\beta,\mu,\phi} \ne
\omega_{\beta,\mu,\theta}$;
(ii) the state $\omega_{\beta,\mu}$ has the decomposition
\begin{equation*}
\omega_{\beta,\mu} = \frac{1}{2\pi} \int_{0}^{2\pi} d\phi\omega_{\beta,\mu,\phi} \ .
\end{equation*}
(iii) For each polynomial $Q$ in the operators $\eta(b_{{0}})$,$\eta(b_{{0}}^{*})$, and for each
$\phi \in [0,2\pi)$,
\begin{eqnarray*}
 \omega_{\beta,\mu,\phi}(Q(\eta(b_{{0}}^{*}),\eta(b_{{0}})X)
=  \omega_{\beta,\mu,\phi}(Q(\sqrt{\rho_{0}} \exp(-i\phi),\sqrt{\rho_{0}} \exp(i \phi)X)\ \
\forall X \in {\cal A} \ .
\end{eqnarray*}
\end{proposition}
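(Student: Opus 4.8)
The plan is to realise the ergodic states $\omega_{\beta,\mu,\phi}$ as the fibres of the central (equivalently, ergodic) decomposition of the gauge-invariant state $\omega_{\beta,\mu}$, the decomposition parameter being precisely the broken gauge phase. First I would pass to the GNS triple $(\mathcal{H}_\omega,\pi_\omega,\Omega_\omega)$ of $\omega_{\beta,\mu}$ and form the von Neumann algebra $\mathcal{M}=\pi_\omega(\mathcal{A})''$ with centre $\mathcal{Z}=\mathcal{M}\cap\mathcal{M}'$. The decisive structural observation, entirely parallel to property (Ia) in Section \ref{Alg-Fl-Op}, is that the zero-mode spatial averages $\eta_0(b),\eta_0(b^{*})$ (\ref{b-p-mode-av}) survive the thermodynamic limit as weak-operator limits affiliated to the \emph{centre} $\mathcal{Z}$: being macroscopic means, they commute with every quasi-local observable. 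Analyticity of the state is what guarantees that these limits exist and that the monomials $\eta_0(b^{*})^{m}\eta_0(b)^{n}$ are controlled, so that the abelian algebra generated by $\eta_0(b)$ sits inside $\mathcal{Z}$.

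Next I would exploit the two hypotheses. ODLRO (\ref{4.16}) reads $\omega_{\beta,\mu}(\eta_0(b^{*})\eta_0(b))=\rho_{0}>0$, so $\eta_0(b)$ is a \emph{nontrivial} central element. Gauge invariance of $\omega_{\beta,\mu}$ provides a unitary implementation $U(s)$ of the gauge automorphism with $U(s)\Omega_\omega=\Omega_\omega$, acting covariantly as $U(s)\,\eta_0(b)\,U(s)^{*}=e^{-is}\eta_0(b)$. The key algebraic step is to show that the modulus is a scalar, $\eta_0(b^{*})\eta_0(b)=\rho_{0}\,\I$: the positive central element $\eta_0(b^{*})\eta_0(b)$ is gauge invariant, and the role of $\omega_{\beta,\mu}$ as the extremal equilibrium state (a factor modulo the gauge group) forces every gauge-invariant element of $\mathcal{Z}$ to be a multiple of the identity, the multiple being pinned to $\rho_{0}$ by (\ref{4.16}) and Theorem \ref{theo:3.4}. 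Granting this, the polar form $\eta_0(b)=\sqrt{\rho_{0}}\,e^{i\Phi}$ exhibits a central unitary $e^{i\Phi}$ whose spectrum is the whole circle $\{\phi\in[0,2\pi)\}$.

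The fibre states then arise by disintegrating $\omega_{\beta,\mu}$ over the spectrum of $\Phi$, giving $\omega_{\beta,\mu}=\int_{0}^{2\pi} d\nu(\phi)\,\omega_{\beta,\mu,\phi}$, where each $\omega_{\beta,\mu,\phi}$ assigns to $\eta_0(b)$ the definite value $\sqrt{\rho_{0}}\,e^{i\phi}$; this is exactly property (iii), which extends from $\eta_0(b),\eta_0(b^{*})$ to a general polynomial $Q$ since the latter acts on $\Omega_\omega$ as a function of the single central generator $\Phi$. Covariance $U(s)\,\eta_0(b)\,U(s)^{*}=e^{-is}\eta_0(b)$ together with invariance of $\omega_{\beta,\mu}$ forces the spectral measure $\nu$ to be rotation-invariant, i.e. the Haar measure $\tfrac{1}{2\pi}\,d\phi$, which is property (ii); distinctness (i) is then immediate because $\omega_{\beta,\mu,\phi}(\eta_0(b))=\sqrt{\rho_{0}}\,e^{i\phi}$ separates the values of $\phi$. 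For ergodicity I would invoke the asymptotic abelianness of the translation action (as in the proof of Theorem \ref{theo:3.4}): the central decomposition then coincides with the ergodic decomposition in the sense of Definition \ref{SSB}(ii), so that a fibre state, giving a \emph{definite} scalar value to the element at infinity $\eta_0(b)$, is a factor state and hence ergodic.

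The hard part is the scalarity step $\eta_0(b^{*})\eta_0(b)=\rho_{0}\,\I$, i.e. the claim that $\mathcal{Z}$ is generated precisely by the gauge orbit of $\eta_0(b)$ and carries no further gauge-invariant fluctuation of the modulus. This is where the equilibrium and analyticity hypotheses on $\omega_{\beta,\mu}$ genuinely enter, and it is what distinguishes the true decomposition from a mere bound; establishing it rigorously, along the lines of \cite{FPV1} and the completion given in Verbeure's book \cite{Ver} (which also supplies the otherwise missing ergodicity argument), is the step on which I would concentrate the most care.
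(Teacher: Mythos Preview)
The paper does not supply its own proof of Proposition~\ref{prop:A.1}: the Appendix merely \emph{states} the result of Fannes--Pul\`{e}--Verbeure \cite{FPV1} and refers to Verbeure's book \cite{Ver} for the ergodicity part. So there is no in-paper argument to compare against beyond the remark following the proposition. Your outline is in fact a faithful sketch of the FPV/Verbeure strategy: $\eta_0(b)$ as a central element in the GNS von~Neumann algebra, gauge covariance rotating its phase, polar decomposition, disintegration over the spectrum of the phase, Haar measure from rotation invariance, and ergodicity via asymptotic abelianness identifying the central and ergodic decompositions.

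There is one substantive soft spot, precisely at the step you flag as the hard one. To obtain $\eta_0(b^{*})\eta_0(b)=\rho_0\,\I$ you invoke ``the role of $\omega_{\beta,\mu}$ as the extremal equilibrium state (a factor modulo the gauge group)'', but no extremality hypothesis is present in the proposition: $\omega_{\beta,\mu}$ is only assumed analytic, gauge-invariant, and equilibrium. The mechanism the literature actually uses---and which the paper explicitly singles out in the remark after the proposition---is the \emph{separating} (faithful) character of the thermal state on $\pi_\omega(\mathcal{A})''$, a consequence of the KMS condition via Tomita--Takesaki theory. Faithfulness is what lets one pass from the expectation $\omega_{\beta,\mu}(\eta_0(b^{*})\eta_0(b))=\rho_0$ to the operator identity for the gauge-invariant central element, without assuming the gauge-invariant centre is trivial a priori. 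Your closing paragraph gestures at the right references, but the text of your argument substitutes an unassumed extremality for the faithfulness that does the work; tightening that point would align your sketch with the cited proof.
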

%%%%%%%%%%%%%%%%%%%%%%%%%%%%%%%%%%%%%%%%%%%%%%%%%%%%%%%%%%%%%%%%%%%%%%%%%%%%%%%%%%%%%%%%%%%%%%%%%%%

We remark, with Verbeure \cite{Ver}, that the proof of Proposition \ref{prop:A.1} is {constructive}.
One essential ingredient is the separating character (or faithfulness) of the state $\omega_{\beta,\mu}$, i.e.,
$\omega_{\beta,\mu}(A) = 0$ implies $A=0$. This property, which depends on the extension of $\omega_{\beta,\mu}$
to the von-Neumann algebra $\pi_{\omega}({\cal A})^{''}$ (see \cite{BR97}, \cite{Hug}) is true for thermal
states, but is not true for ground states, even without this extension: in fact, a ground state (or vacuum)
is non-faithful on ${\cal A}$ (see Proposition 3 in \cite{Wrep}). We see, therefore, that thermal states and
ground states might differ with regard to the ergodic decomposition (ii). Compare also with our discussion
in the Concluding remarks.

\bigskip

%%%%%%%%%%%%%%%%%%%%%%%%%%%%%%%%%%%%%%%%%%%%%%%%%%%%%%%%%%%%%%%%%%%%%%%%%%%%%%%%%%%%%%%%%%%%%%%%%%%%
\noindent
\textbf{Acknowledgements}

\noindent
Certain issues dealt with in this manuscript are developed in our paper \cite{WZ16}. Other topics are
originated from the open problems posed in Sec.3 of \cite{SeW} and in \cite{JaZ10} as well as from discussions
around \cite{WZ16}. One of us (W.F.W.) would like to thank G. L. Sewell for sharing with him his views on ODLRO
along several years. He would also like to thank the organisers of the Satellite conference "Operator Algebras
and Quantum Physics" of the XVIII conference of the IAMP (Santiago de Chile) in S\~{a}oPaulo,
July 17th-23rd 2015, for the opportunity to present a talk on these topics.

We are thankful to Bruno Nachtergaele for useful remarks and suggestions concerning the
problems that we discussed in \cite{WZ16} and in the present paper.
%as well as to the referee for additional suggestions and comments.

%%%%%%%%%%%%%%%%%%%%%%%%%%%%%%%%%%%%%%%%%%%%%%%%% References WZ-2 %%%%%%%%%%%%%%%%%%%%%%%%%%%%%%%%%%%%%%%%%%%%%%

\end{document}